\newcommand{\PartOne}{Part I}                           
\newcommand{\PartTwo}{Part II}                          
\newcommand{\PartOneSectionModel}            {1.1}    
\newcommand{\PartOneCorollaryIdeal}          {2.4}    
\newcommand{\PartOneTheoremSmallerField}     {2.5}    
\newcommand{\PartOneLemmaDirectProduct}      {2.6}    
\newcommand{\PartOneLemmaSubring}            {2.8}    
\newcommand{\PartOneTheoremFieldBeatsRing}   {2.11}   
\newcommand{\PartOneTheoremOnlyThirtyTwo}    {3.10}   
\newcommand{\PartOnePartDomRing}             {5.4}    
\newcommand{\PartOneTheoremBestRingNetwork}  {5.8}    
\newcommand{\PartOneTheoremPFiveSeven}       {5.9}    
\newcommand{\PartOneNChooseTwoFigure}        {1}      
\def\submitteddate{August 5, 2016}
\def\reviseddate{}
\DeclareFontFamily{U}{matha}{\hyphenchar\font45}
\DeclareFontShape{U}{matha}{m}{n}{
      <5> <6> <7> <8> <9> <10> gen * matha
      <10.95> matha10 <12> <14.4> <17.28> <20.74> <24.88> matha12
      }{}
\DeclareSymbolFont{matha}{U}{matha}{m}{n}
\DeclareMathSymbol{\notdivides}{3}{matha}{"1F}
\DeclareMathSymbol{\divides}{3}{matha}{"17}
\begin{document}

\newcommand{\creationtime}{\today \ \ @ \theampmtime}

\pagestyle{fancy}
\renewcommand{\headrulewidth}{0cm}
\chead{\footnotesize{Connelly-Zeger}}
\rhead{\footnotesize{\reviseddate}}
\lhead{}
\cfoot{Page \arabic{page} of \pageref{LastPage}} 

\renewcommand{\qedsymbol}{$\blacksquare$} 


\newtheorem{theorem}              {Theorem}     [section]
\newtheorem{lemma}      [theorem] {Lemma}
\newtheorem{corollary}  [theorem] {Corollary}
\newtheorem{proposition}[theorem] {Proposition}
\newtheorem{remark}     [theorem] {Remark}
\newtheorem{conjecture} [theorem] {Conjecture}
\newtheorem{example}    [theorem] {Example}

\theoremstyle{definition}         
\newtheorem{definition} [theorem] {Definition}
\newtheorem*{claim} {Claim}
\newtheorem*{notation}  {Notation}



\setlist[itemize]{itemsep=0.5pt, topsep=1pt}

\newcommand{\Z}{\mathbf{Z}}
\newcommand{\N}{\mathbf{N}}

\newcommand{\alphabet}{\mathcal{A}}
\newcommand{\A}{\mathcal{A}}
\newcommand{\F}{\mathbb{F}}
\newcommand{\Capacity}{\mathcal{C}}

\newcommand{\GF}[1]{\mathrm{GF}\!\left(#1\right)}
\newcommand{\Char}[1]{\mathsf{char}\!\left(#1\right)}
\newcommand{\Comment}[1]{ \left[\mbox{from  #1} \right]}
\newcommand{\Div}[2]{ #1  \bigm|  #2  } 
\newcommand{\NDiv}[2]{ #1   \notdivides   #2   }
\newcommand{\GCD}[2]{ \mathsf{gcd} \!\left( #1  ,  #2 \right)  } 
\newcommand{\PrimeFact}[1]{p_1^{k_1} \cdots p_{#1}^{k_{#1}}}

\newcommand{\xx}[2]{x_{#1}^{(#2)}}
\newcommand{\ww}[2]{w_{#1}^{(#2)}}

\newcommand{\Module}[2]{{}_{#2} #1}
\newcommand{\act}{\cdot}
\newcommand{\entropy}[1]{H\left(#1\right)}

\newcommand{\osum}{\displaystyle\bigoplus}
\newcommand{\dsum}{\displaystyle\sum}

\newcommand{\DP}{\times}
\newcommand{\bigDP}{\prod}
\newcommand{\newaction}{\odot}

\newcommand{\range}[1]{\mathsf{range}\left(#1 \right)}
\renewcommand{\dim}[1]{\mathsf{dim}\left(#1 \right)}

\renewcommand{\emptyset}{\varnothing} 
\renewcommand{\subset}{\subseteq}     
\newcommand{\Network}{\mathcal{N}}
\newcommand{\TBA}{*** To Be Added ***}

\newcommand{\Fig}[2]{
 \medskip
  \epsfysize=#2 
  \epsffile{#1.eps}
 \medskip
}

\let\bbordermatrix\bordermatrix
\patchcmd{\bbordermatrix}{8.75}{4.75}{}{}
\patchcmd{\bbordermatrix}{\left(}{\left[}{}{}
\patchcmd{\bbordermatrix}{\right)}{\right]}{}{}

\setcounter{page}{0}

\title{Linear Network Coding over Rings \\ \PartTwo: Vector Codes and Non-Commutative Alphabets
\thanks{This work was supported by the 
National Science Foundation.\newline
\indent \textbf{J. Connelly and K. Zeger} are with the 
Department of Electrical and Computer Engineering, 
University of California, San Diego, 
La Jolla, CA 92093-0407 
(j2connelly@ucsd.edu and zeger@ucsd.edu).
}}

\author{Joseph Connelly and Kenneth Zeger\\}

\date{
\textit{
IEEE Transactions on Information Theory\\
Submitted: \submitteddate\\
}}

\maketitle
\begin{abstract}
  We prove the following results regarding the linear solvability of networks over various alphabets.
  For any network,
  the following are equivalent:
  (i) vector linear solvability over some finite field,
  (ii) scalar linear solvability over some ring,
  (iii) linear solvability over some module.
  Analogously, the following are equivalent:
  (a) scalar linear solvability over some finite field,
  (b) scalar linear solvability over some commutative ring,
  (c) linear solvability over some module whose ring is commutative.
  Whenever any network is linearly solvable over a module,
  a smallest such module arises in a vector linear solution for that network over a field.

  If a network is linearly solvable over some non-commutative ring
  but not over any commutative ring,
  then such a non-commutative ring must have size at least $16$,
  and for some networks, this bound is achieved.
  An infinite family of networks is demonstrated,
  each of which is 
  scalar linearly solvable over some non-commutative ring
  but not over any commutative ring.

  Whenever $p$ is prime and $2 \le k \le 6$,
  if a network is scalar linearly solvable over some ring of size $p^k$,
  then it is also $k$-dimensional vector linearly solvable over the field $\GF{p}$,
  but the converse does not necessarily hold.
  This result is extended to all $k\ge 2$ when the ring is commutative.
\end{abstract}

\thispagestyle{empty}

\clearpage

\section{Introduction}
In the companion paper (i.e. \PartOne ) \cite{Connelly-Ring1},
we studied scalar linear network codes over commutative rings.
Equivalently, these are linear codes over modules
where a commutative ring acts on its own additive group via multiplication in the ring.
In particular,
we compared the scalar linear solvability of networks over different types of commutative rings of the same size.
We proved that networks that are scalar linearly solvable over some commutative ring 
are also scalar linearly solvable over some field,
although not necessarily of the same size.
Additionally, we characterized all commutative rings with the property that
there exists a network with a scalar linear solution over the ring
but not over any other commutative ring of the same size.

Linear network codes can be advantageous due to their ease of implementation and mathematical tractability.
These properties are due to the algebraic simplicity of linear maps and also to the
structured nature of the alphabets used.
Fields have the most algebraic constraints among alphabets used for linear network coding,
e.g. associativity, distributivity, commutativity, invertibility.
More generally, rings may lack commutativity and/or invertibility,
thus providing a broader class of alphabets over which to achieve linear network solvability.
We demonstrated in \PartOne{} that 
relaxing only the invertibility constraint 
(i.e. restricting to commutative rings)
can lead to linear network solvability that would not otherwise be possible with fields
of the same alphabet size.

In the present paper (\PartTwo),
we additionally relax the commutativity constraint,
and we study linear coding over general ring alphabets and, even more generally,
over modules.
Vector and scalar linear codes over rings and fields
are special cases of linear codes over modules.
We focus on the relationship between alphabet commutativity
and the scalar and vector linear solvability of networks,
and we compare the linear solvability of networks over different modules 
where the alphabet size is the same.

\subsection{Linear codes over modules} \label{sec:model}

\begin{definition}
  An \textit{$R$-module} (specifically a left $R$-module) is
  an Abelian group
  $(G,\oplus)$ together with a ring%
 \footnote{In this paper we will assume all groups are finite and all rings have a multiplicative identity,
 even when we do not explicitly state these facts.}  
  $(R,+,*)$ of \textit{scalars}
  and an action 
  $$\act : R \times G \to G $$
  such that
  for all $r,s \in R$ and all $g,h \in G$ the following hold:
  \begin{align*}
    r \act (g\oplus h) &= (r \act g) \oplus (r \act h) \\
    (r + s) \act g &= (r \act g) \oplus (s \act g) \\
    (r*s) \act g &= r \act (s \act g) \\
    1 \act g &= g .
  \end{align*}
\end{definition}
For brevity, we will sometimes refer to such an $R$-module 
as $\Module{G}{R}$ or simply $G$.
The \textit{size of a module} will refer to $|G|$.
Let $M_k(R)$ denote the ring of all $k \times k$ matrices with entries in $R$
and let $G^k$ denote the Abelian group of all $k$-dimensional vectors with entries in $G$ 
with vector addition,
where $k$ is a positive integer.
Then $G^k$ is an $M_k(R)$-module where multiplication of elements of $R$
with elements of $G$ is given by the action of $\Module{G}{R}$.
%

For basic network coding definitions, see \PartOne{} \cite[Section \PartOneSectionModel]{Connelly-Ring1}.
We will use the same models as in \PartOne{} for networks, alphabets, etc.,
except we now study the generalized case of linear codes over modules,
as opposed to linear codes over rings.
%
%
An edge function on the out-edge of a network node is \textit{linear with respect to the module $\Module{G}{R}$}
if can be written in the form
\begin{align}
  f(x_1,\dots,x_m) & = (M_1 \act x_1) \oplus \cdots \oplus (M_m \act x_m) 
  \label{eq:a}
\end{align}
where $x_1,\dots,x_m \in G$ are the inputs of the node and
$M_1, \dots, M_m \in R$ are constants.
That is, the messages and edge symbols are elements of the Abelian group $G$,
and the linear edge and decoding functions are determined by coefficients of the ring $R$.
A decoding function is linear with respect to $\Module{G}{R}$
if it has a form analogous to \eqref{eq:a},
and a code is \textit{linear over a module $\Module{G}{R}$} 
if all edge and decoding functions are linear with respect to $\Module{G}{R}$.
The alphabet size in a linear code over a module is the size of the module,
i.e. $|G|$.
The special case of a module where the finite ring $R$ acts on its own Abelian group $(R,+)$
by multiplication in $R$ is denoted by $\Module{R}{R}$,
and in this case, \eqref{eq:a} is equivalent to the definition
of a scalar linear code over a ring that we used in \PartOne.

A network is 
\textit{linearly solvable over a module $\Module{G}{R}$}
if there exists a linear solution over $\Module{G}{R}$.
We will focus on two special types of linear codes:
\begin{itemize}
\item[(i)] A \textit{scalar linear code over a ring $R$}
  is a linear code over the module $\Module{R}{R}$.
  A network is \textit{scalar linearly solvable over $R$} 
  if it has a linear solution over the module $\Module{R}{R}$.
\item[(ii)] A \textit{$k$-dimensional vector linear code over a ring $R$}
  is a linear code over the module $\Module{R^k}{M_k(R)}$.
  A network is \textit{vector linearly solvable over $R$} 
  if it has a linear solution over the module $\Module{R^k}{M_k(R)}$,
  for some positive integer $k$.
\end{itemize}
When referring to a linear code or solution over a ring,
we will always specify (in this paper) scalar versus vector,
or if neither is specified,
then we are referring to a linear code over a module.
Additionally,
when referring to an $R$-module $G$,
the ring $R$ is not assumed to be finite,
unless otherwise specified.
However, when referring to a scalar or vector linear code over a ring $R$,
the ring $R$ is assumed to be finite.

A \textit{$k$-dimensional vector routing code over an alphabet $\A$}
is a code in which
messages and edge symbols are elements of $\A^k$
and edge and decoding functions copy certain input vector components to the certain output vector components.
A vector routing code over $\A$ is, in fact, a special case of a vector linear code over $\A$
where each row of each of the matrices $M_1, \dots, M_m$ in \eqref{eq:a}
is either all zero
or else has $1$ one and $k-1$ zeros,
and for each $i \le k$, at most one of the matrices $M_1, \dots, M_m$
has a non-zero $i$th row.

We can similarly define a right $R$-module
and a linear code over a right $R$-module.
It can easily be shown that any linear code over a right module
is equivalent to a particular linear code over a left module,
so we restrict attention only to left modules.

\subsection{Our contributions}
\label{sec:results}

In Section~\ref{sec:basicmodules},
lemmas are given which are used in proofs later in the paper.

Section~\ref{ssec:simple} analyzes the linear solvability of networks over ring alphabets
which are not necessarily commutative.
In \PartOne,
we proved that whenever a network is scalar linearly solvable over some commutative ring,
then the smallest commutative ring over which the network is scalar linearly solvable is a field 
(and thus the ring is unique)
\cite[Theorem \PartOneTheoremSmallerField]{Connelly-Ring1}.
Here, we prove (in Theorem~\ref{thm:R_dom_by_simple})
that if a network is scalar linearly solvable over some (not necessarily commutative) ring,
then a smallest such ring is a matrix ring over a field.
It remains unknown, however,
whether there can be more than one smallest (not necessarily commutative) ring over which a network is linearly solvable,
since in general, there can exist multiple matrix rings over fields that are the same size.
We demonstrate (in Corollaries~\ref{cor:choose-two-ring} and \ref{cor:dim-n_smallest_ring})
that for two infinite classes of networks studied in this paper,
the smallest size ring over which each network is linearly solvable is indeed unique.

We prove (in Theorem~\ref{thm:min_module}) 
that if a network is linearly solvable over some module,
then a smallest such module (i.e. with a smallest associated Abelian group) 
corresponds to a vector linear solution over some finite field.%
\footnote{For example, in a $k$-dimensional vector linear code over a field $\F$,
the alphabet size of the module is $|\F|^k$.}
We prove 
(in Theorem~\ref{thm:min_module_not_unique}),
in contrast to the commutative ring case,
that the minimum size module
with respect to linear solvability
is not necessarily unique.
Thus, for a fixed network, vector linear codes over fields
are ``best'' in a certain sense,
as these codes can minimize the alphabet size needed for a linear solution.

We also show (in Corollary~\ref{cor:general_vector})
that for all networks, the following properties are equivalent: 
(i) vector linear solvability over some field,
(ii) scalar linear solvability over some ring,
and
(iii) linear solvability over some module.
Similarly, we show (in Corollary~\ref{cor:general_scalar})
that for all networks, the following properties are equivalent: 
(a) scalar linear solvability over some field,
(b) scalar linear solvability over some commutative ring,
and 
(c) linear solvability over some module whose ring is commutative.
%

In Section~\ref{sec:Gen_M},
we present a family of networks that generalize the M Network of \cite{Medard-NonMulticast,DFZ-Matroids},
and we enumerate (in Theorem~\ref{thm:dim-n_vector}) 
the particular vector dimensions over which each of these networks
has vector linear solutions.
We prove 
(in Corollary~\ref{cor:non-comm-ring})
that these networks have scalar linear solutions over certain non-commutative matrix rings
yet do not have scalar linear solutions over any commutative ring.
We also show (in Theorem~\ref{thm:non-comm_16})
that if a network is scalar linearly solvable over a non-commutative ring $R$
and is not scalar linearly solvable over any commutative ring, then $|R| \geq 16$.
This lower bound is shown to be achievable
(in Corollary~\ref{cor:non-comm-ring} and Example~\ref{ex:16-noncommutative})
by exhibiting a network which has a scalar linear solution over a non-commutative ring of size $16$
but not over any commutative ring.

Section~\ref{sec:mods} focuses on linear solvability of networks over different modules with the same alphabet size,
specifically, $k$-dimensional vector linear codes over $\GF{p}$
and rings of size $p^k$.
We prove
(in Theorem~\ref{thm:module-beats-rings})
that for each prime power $p^k$, there exists a network with a linear solution over a
module of size $p^k$ but with no scalar linear solutions over any
ring of size $p^k$.
These particular networks have $k$-dimensional vector linear solutions over $\GF{p}$.
We show (in Theorem~\ref{thm:commutative_implies_vector})
that
any network with a scalar linear solution over a commutative ring of size $p^k$
has a $k$-dimensional vector linear solution over $\GF{p}$.
We prove a similar result (in Theorem~\ref{thm:p6_implies_vector})
for general rings of size $p^k$ when $k \leq 6$.
Additionally, we show 
(in Theorems~\ref{thm:commutative_implies_vector}
and \ref{thm:p6_implies_vector})
that these results generalize in a natural way to rings of non-power-of-prime sizes.

Finally, Section~\ref{sec:complexity} provides some concluding remarks.

\subsection{Comparisons of modules}
\label{sec:basicmodules}

An $R$-module $G$ is \textit{faithful} if 
for all $r \in R\backslash\{0\}$,
there exists $g \in G$
such that $r \act g \ne 0$.
In other words, $r \act g = 0$ for all $g$
if and only if $r = 0$.
For any finite ring $R$ and positive integer $k$,
the $M_k(R)$-module $R^k$ is faithful,
so vector and scalar linear codes over rings
are special cases of linear codes over faithful modules.

For a fixed ring $R$, there are generally multiple modules over $R$.
For example, if $R$ is a subring of $S$,
then $(S,+)$ is an $R$-module where the action is multiplication in $S$,
and $(R,+)$ is also an $R$-module where the action is multiplication in $R$.
The following lemma shows that
the linear solvability of a network over a faithful $R$-module is determined entirely by the ring of scalars $R$
and not by the module's underlying Abelian group.
However, we note that not every ring and group pair can form a module.
For example, the additive group of $\GF{2}$
cannot be a $\GF{3}$-module,
since $1 + 1 = 0$ in $\GF{2}$ and $1 + 1 \ne 0$ in $\GF{3}$.

\begin{lemma}
  Let $R$ be a fixed ring.
  If a network is linearly solvable over some faithful $R$-module,
  then it is linearly solvable over every $R$-module.
  \label{lem:same_ring}
\end{lemma}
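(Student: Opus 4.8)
The plan is to exploit the fact that, over a fixed ring $R$, any linear code is specified entirely by a choice of coefficients in $R$ for the edge and decoding functions, independently of which $R$-module the symbols live in. So suppose a network has a linear solution over a faithful $R$-module $\Module{G}{R}$; let the edge and decoding functions be given by coefficient tuples from $R$ as in \eqref{eq:a}. I would take these exact same coefficients and interpret them as a linear code over an arbitrary $R$-module $\Module{H}{R}$, and show this is still a solution. Thus the real content is showing that the solvability condition is a statement purely about the ring $R$ and the coefficient choices, not about $G$ or $H$.

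First I would recall what it means for a linear code over $\Module{G}{R}$ to be a solution: for each receiver and each demanded message, composing the edge functions along the network yields, for that edge symbol reaching the receiver, an expression of the form $\bigoplus_i (M_i \act m_i)$ in the message variables $m_i \in G$; the decoding function applies further $R$-coefficients; and the net effect must equal the demanded message $m_j$ for all choices of the $m_i$. Because the module axioms make the action $R$-linear and additive, the composed map from the tuple of messages to each decoded symbol is itself of the form $m \mapsto \bigoplus_i (C_i \act m_i)$ for some matrix of coefficients $C_i \in R$ (or $C_i \in M_k(R)$ in the vector case — but here $R$ is fixed so I work with whatever ring is given) obtained from the edge and decoding coefficients by the ring operations of $R$ alone. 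The code is a solution precisely when, for each receiver-demand pair $(\text{receiver}, j)$, the composed coefficient on $m_j$ equals $1$ and the composed coefficient on every other $m_i$ equals $0$ — these are \emph{identities in $R$}. Wait, that is not quite right in general: the condition is that $\bigoplus_i (C_i \act m_i) = m_j$ for all $m_i \in G$, which says $(C_j - 1)\act g = 0$ for all $g \in G$ and $C_i \act g = 0$ for all $g\in G$, $i \ne j$. Here faithfulness of $\Module{G}{R}$ is exactly what I need: it forces $C_j - 1 = 0$ and $C_i = 0$ in $R$. So a linear code over a faithful $R$-module is a solution if and only if the associated composed coefficients satisfy the ring identities $C_j = 1$, $C_i = 0$ for $i\neq j$.

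Having reduced solvability over a faithful module to these identities in $R$, the converse direction is immediate: given any $R$-module $\Module{H}{R}$, take the same edge and decoding coefficients; the composed coefficients are computed by the same ring operations in $R$, hence satisfy the same identities $C_j = 1$, $C_i = 0$; and then for every $h_i \in H$ we get $\bigoplus_i (C_i \act h_i) = 1 \act h_j = h_j$, so the code is a solution over $\Module{H}{R}$. Note faithfulness of $H$ is not needed for this direction — only for extracting the ring identities from the hypothesis.

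The main obstacle I anticipate is the bookkeeping in the first step: carefully defining the "composed coefficient" of message $m_i$ at a given receiver by induction on the topological order of the acyclic network, and verifying — using exactly the four module axioms — that composition of $R$-linear edge maps stays $R$-linear with coefficients computed in $R$ alone. This is routine but must be written cleanly so that the phrase "the composed coefficients are the same element of $R$ regardless of the module" is manifestly justified. Everything after that (invoking faithfulness to pass to identities in $R$, then re-instantiating over an arbitrary module) is short.
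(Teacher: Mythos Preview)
Your proposal is correct and follows essentially the same approach as the paper: both argue by induction along the network that every edge and decoded symbol is an $R$-linear combination of the messages with coefficients computed purely in $R$, invoke faithfulness of $G$ to force the composed decoding coefficients to equal $1$ on the demanded message and $0$ elsewhere, and then transfer the same coefficients verbatim to an arbitrary $R$-module $H$. The paper's writeup makes the induction step explicit (showing the composed coefficient is $\sum_i B_i A_{i,j}$), which is exactly the bookkeeping you flagged as the main obstacle.
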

\begin{proof}
  Let $\Network$ be a network that is linearly solvable over 
  the faithful $R$-module $(G,\oplus)$,
  and let $z_1,\dots,z_m \in G$ denote the messages of $\Network$.
  Suppose a node in $\Network$ has inputs 
  $x_1,\dots,x_n \in G$ in a solution over $\Module{G}{R}$, 
  where, for each $i = 1,\dots,n$,
  $$x_i = (A_{i,1} \act z_1) \oplus \cdots \oplus (A_{i,m} \act z_m)$$
  for some $A_{i,1},\dots,A_{i,m} \in R$.
  Then for each out-edge of this node,
  there exist constants $B_1,\dots,B_n \in R$ such that
  the edge carries the symbol
  \begin{align*}
    \bigoplus_{i=1}^{n} (B_{i} \act x_i)
      &= \bigoplus_{i=1}^{n} \bigoplus_{j=1}^{m} ((B_{i} A_{i,j}) \act z_j)
      = \bigoplus_{j=1}^{m} \left(\left( \sum_{i = 1}^{n} B_i A_{i,j} \right) \act z_j \right).
  \end{align*}
  Then, by induction,
  every edge and decoding function in a linear code over a module 
  is a linear combination of the network messages.
  
  $G$ is a faithful $R$-module, 
  so $1$ and $0$ are the only elements of $R$ such that $1 \act g = g$ and $0 \act g = 0$ for all $g \in G$.
  Hence it must be the case that decoding functions in the linear solution over $\Module{G}{R}$ 
  are of the form
  $$(1 \act z_i) \oplus \bigoplus_{\substack{j = 1 \\ j \ne i}}^n (0 \act z_j).$$
  
  If $H$ is some other $R$-module,
  then a linear solution for $\Network$ over $\Module{G}{R}$
  is also a linear solution for $\Network$ over $\Module{H}{R}$,
  since every edge will carry the same linear combination of the messages 
  (i.e. the same elements of $R$ are the coefficients in the linear combination),
  so, in particular, the decoding functions will be the same linear combination of the messages.
\end{proof}

In contrast to Lemma~\ref{lem:same_ring}, 
if $G$ is both an $R$-module and an $S$-module,
then there may exist a network that is linearly solvable over $\Module{G}{S}$
but not $\Module{G}{R}$.
For example, $\GF{2}$ is a subfield of $\GF{4}$,
so $(\GF{4},+)$ is both a faithful $\GF{2}$-module and a faithful $\GF{4}$-module.
We demonstrate (in Corollary~\ref{cor:choose-two-ring})
a network that is scalar linearly solvable over $\GF{4}$
but not $\GF{2}$,
and by Lemma~\ref{lem:same_ring},
this network is linearly solvable over the $\GF{4}$-module $(\GF{4},+)$
but not the $\GF{2}$-module $(\GF{4},+)$.

The following corollary is a special case of Lemma~\ref{lem:same_ring}
and will be frequently used in later proofs.
It demonstrates an equivalence between scalar linear solutions over matrix rings
and vector linear solutions over rings.
\begin{corollary}
  Let $R$ be a finite ring, $k$ a positive integer, and $\Network$ a network.
  Then $\Network$ is scalar linearly solvable over the ring of $k \times k$ matrices whose elements are from $R$
  if and only if $\Network$ has a $k$-dimensional vector linear solution over $R$.
  \label{cor:same_ring}
\end{corollary}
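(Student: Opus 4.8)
The plan is to observe that Corollary~\ref{cor:same_ring} is an immediate consequence of Lemma~\ref{lem:same_ring} with the fixed ring taken to be $M_k(R)$. By the definitions in Section~\ref{sec:model}, a scalar linear code over $M_k(R)$ is exactly a linear code over the module $\Module{M_k(R)}{M_k(R)}$ (the matrix ring acting on its own additive group by left multiplication), and a $k$-dimensional vector linear code over $R$ is exactly a linear code over the module $\Module{R^k}{M_k(R)}$ (column vectors with matrix-vector multiplication as the action). So the corollary asserts precisely that $\Network$ is linearly solvable over $\Module{M_k(R)}{M_k(R)}$ if and only if it is linearly solvable over $\Module{R^k}{M_k(R)}$, and both of these are $M_k(R)$-modules.

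The one thing that needs checking before invoking Lemma~\ref{lem:same_ring} is that both modules are faithful $M_k(R)$-modules. For $\Module{R^k}{M_k(R)}$ this is already recorded in the discussion preceding the lemma. For $\Module{M_k(R)}{M_k(R)}$ it is automatic: if $A \in M_k(R)$ satisfies $A \act B = 0$ for all $B \in M_k(R)$, then choosing $B$ to be the identity matrix gives $A = 0$; indeed, any unital ring acting on itself by left multiplication is a faithful module over itself.

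With faithfulness in hand, the two directions are symmetric. If $\Network$ is scalar linearly solvable over $M_k(R)$, then it is linearly solvable over the faithful module $\Module{M_k(R)}{M_k(R)}$, so by Lemma~\ref{lem:same_ring} it is linearly solvable over every $M_k(R)$-module, in particular over $\Module{R^k}{M_k(R)}$, i.e. it has a $k$-dimensional vector linear solution over $R$. Conversely, a $k$-dimensional vector linear solution over $R$ is a linear solution over the faithful module $\Module{R^k}{M_k(R)}$, hence by the lemma there is a linear solution over every $M_k(R)$-module, in particular over $\Module{M_k(R)}{M_k(R)}$, which is a scalar linear solution over $M_k(R)$.

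There is no real obstacle here; the content is entirely bookkeeping. The only points that warrant care are matching the two code definitions against the module formalism and noting that faithfulness of $\Module{M_k(R)}{M_k(R)}$ is genuinely free. It is also worth emphasizing, as the proof of Lemma~\ref{lem:same_ring} shows, that because the ring of scalars is literally the same ring $M_k(R)$ in both modules, the matrix coefficients used in the edge and decoding functions of one solution are reused verbatim in the other.
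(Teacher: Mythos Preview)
Your proof is correct and follows the same approach as the paper: both invoke Lemma~\ref{lem:same_ring} in each direction after noting that $M_k(R)$ and $R^k$ are faithful $M_k(R)$-modules. Your version simply unpacks the faithfulness check and the matching of definitions more explicitly than the paper's one-sentence proof.
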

\begin{proof}
  The ``if'' and the ``only if'' directions are each obtained by separately applying Lemma~\ref{lem:same_ring},
  since $M_k(R)$ and $R^k$ are faithful $M_k(R)$-modules with matrix-matrix multiplication
  and matrix-vector multiplication, respectively.
\end{proof}

Note that in a $k$-dimensional vector linear code over a ring $R$,
the alphabet size is $|R|^k$, whereas in a scalar linear solution over $M_k(R)$,
the alphabet size is $|R|^{k^2}$. 
So any network that is scalar linearly solvable over the matrix ring $M_k(R)$
is also linearly solvable over a smaller module alphabet.
We will generalize this idea in Theorem~\ref{thm:min_module}.

As is common in mathematics literature,
it will be assumed throughout this paper that ring homomorphisms preserve both additive and multiplicative identities.

\begin{lemma}
  If $\phi: R \to S$ is a ring homomorphism 
  and network $\Network$ is linearly solvable over some faithful $R$-module,
  then $\Network$ is linearly solvable over every $S$-module.
  \label{lem:ModHomomorphism}
\end{lemma}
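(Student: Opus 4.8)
The plan is to reduce the statement to Lemma~\ref{lem:same_ring} by constructing, from a ring homomorphism $\phi: R \to S$, a faithful $S$-module out of a given faithful $R$-module. Since Lemma~\ref{lem:same_ring} tells us that linear solvability over a faithful $S$-module implies linear solvability over \emph{every} $S$-module, it suffices to exhibit \emph{one} faithful $S$-module over which $\Network$ is linearly solvable.

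First I would recall that $\phi$ need not be injective, so I cannot simply pull back the action. Instead, let $(G,\oplus)$ be the faithful $R$-module over which $\Network$ is linearly solvable. I would consider the $S$-module $S \otimes_R G$ (extension of scalars), or, working concretely with finite abelian groups to avoid tensor-product machinery, the quotient of the free $S$-module on $G$ by the relations forcing $R$-linearity through $\phi$. Any linear solution over $\Module{G}{R}$, whose edge and decoding coefficients are elements $A_{i,j} \in R$, maps under $\phi$ to a candidate linear solution over this new $S$-module with coefficients $\phi(A_{i,j})$; the decoding functions, being of the form $1 \act z_i$ plus $0 \act z_j$ terms by the proof of Lemma~\ref{lem:same_ring}, map to decoding functions of the same form because $\phi(1) = 1$ and $\phi(0) = 0$. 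One then checks that every edge still carries the correctly decodable linear combination, since $\phi$ is a ring homomorphism and hence commutes with the sums and products of coefficients that arise when composing edge functions.

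The main obstacle is faithfulness: the constructed $S$-module must be faithful as an $S$-module, i.e., no nonzero $s \in S$ annihilates all of it, so that Lemma~\ref{lem:same_ring} applies. If $\phi$ is surjective this should follow fairly directly, but in general $\phi(R)$ is merely a subring of $S$, and an $S$-module need not be built at all from an $R$-module unless we are careful. The cleanest route is: first reduce to the surjective case by factoring $\phi$ as $R \twoheadrightarrow \phi(R) \hookrightarrow S$; handle the surjection $R \twoheadrightarrow \phi(R)$ by passing to $G / (\ker\phi \act G)$, verifying this is a faithful $\phi(R)$-module (the annihilator in $\phi(R)$ of this quotient is trivial essentially because we quotiented by exactly the obstruction); and then handle the inclusion $\phi(R) \hookrightarrow S$ separately. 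For the inclusion step, given a faithful $\phi(R)$-module $H$, form $\mathrm{Hom}_{\phi(R)}(S, H)$ or the induced module $S \otimes_{\phi(R)} H$ as a faithful $S$-module, and transport the linear solution along the inclusion. I would then invoke Lemma~\ref{lem:same_ring} twice — once for $\phi(R)$ and once for $S$ — to conclude that $\Network$ is linearly solvable over every $S$-module.

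A subtlety worth flagging is that forming $S \otimes_{\phi(R)} H$ could in principle collapse too much or fail to be faithful if $S$ is not free or projective as a $\phi(R)$-module; to sidestep this I would prefer the coinduced module $\mathrm{Hom}_{\phi(R)}(S_{\phi(R)}, H)$, which is faithful as a left $S$-module whenever $H$ is a faithful $\phi(R)$-module, and then note that a linear solution over $H$ embeds into a linear solution over this coinduced module via $h \mapsto (\phi(r) \mapsto \phi(r) \act h)$-type maps, with all coefficients again merely the images of the original coefficients. Once faithfulness is secured at each stage, the rest is the bookkeeping already carried out in the proof of Lemma~\ref{lem:same_ring}, so I would keep that part brief.
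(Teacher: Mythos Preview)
Your plan goes in the wrong direction. You fixate on pushing the faithful $R$-module $G$ forward to an $S$-module (extension/coinduction of scalars), which forces you into tensor products, faithfulness worries, and a two-stage surjection/injection factorization. The paper instead pulls an \emph{arbitrary} $S$-module $H$ back to an $R$-module via restriction of scalars: define $r \newaction h = \phi(r)\act h$. This makes $H$ an $R$-module directly, with no construction and no faithfulness to verify. Lemma~\ref{lem:same_ring} then gives a linear solution for $\Network$ over $\Module{H}{R}$, and replacing each coefficient $M_i\in R$ by $\phi(M_i)\in S$ yields literally the same edge and decoding functions on $H$, hence a solution over $\Module{H}{S}$. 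Your remark ``I cannot simply pull back the action'' is true of $G$, but irrelevant: the object to pull back is $H$, not $G$.

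Beyond the missed simplification, your route has real gaps. The claim that $G/(\ker\phi \cdot G)$ is a faithful $\phi(R)$-module (``because we quotiented by exactly the obstruction'') is not justified; the annihilator of this quotient in $R/\ker\phi$ is $\{r+\ker\phi : rG\subseteq (\ker\phi)G\}$, and $rG\subseteq (\ker\phi)G$ does not obviously force $r\in\ker\phi$ even when $G$ is faithful. Likewise, the assertion that the coinduced module $\mathrm{Hom}_{\phi(R)}(S,H)$ is faithful as an $S$-module whenever $H$ is faithful over $\phi(R)$ requires $H$ to be a cogenerator, which you have not arranged. Finally, ``transport the linear solution along the inclusion'' is where all the content lies and you leave it as a phrase; the paper's approach makes this step a tautology because the underlying set and functions never change.
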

\begin{proof}
  Let $H$ be an $S$-module and
  define a mapping $\newaction: R \times H \to H$ by
  $r \newaction h = \phi(r) \act h$,
  where $\act$ is the action of $\Module{H}{S}$.
  One can verify 
  that $H$ is an $R$-module under $\newaction$.
  Now, let $G$ be a faithful $R$-module,
  and suppose
  $\Network$ has a linear solution over $\Module{G}{R}$.
  By Lemma~\ref{lem:same_ring},
  $\Network$ is linearly solvable over $\Module{H}{R}$,
  so every edge function in the solution over $\Module{H}{R}$ is of the form
  \begin{align}
    y' &= (M_1 \newaction x_1) \oplus \cdots \oplus (M_m \newaction x_m)
  \label{eq:100}
  \end{align}
  where $x_1,\dots,x_m \in H$ are the parent node's inputs 
  and $M_1,\dots,M_m \in R$ are constants.

  Form a linear code for $\Network$ over $\Module{H}{S}$ 
  by replacing each coefficient $M_i$ in \eqref{eq:100} by $\phi(M_i)$.
  Let $y$ be the edge symbol in the code over $\Module{H}{S}$
  corresponding to $y'$ in the code over $\Module{H}{R}$.
  Then
  \begin{align*}
  y &= (\phi(M_1) \act x_1) \oplus \cdots \oplus (\phi(M_m) \act x_m)  \\
    &= (M_1 \newaction x_1) \oplus \cdots \oplus (M_m \newaction x_m) = y'.
  \end{align*}
  Thus, whenever an edge function in the solution over $\Module{H}{R}$ outputs the symbol $y'$,
  the corresponding edge function in the code over $\Module{H}{S}$ will output the same symbol $y'$.
  Likewise, whenever $x$ is an input to an edge function in the solution over $\Module{H}{R}$,
  the corresponding input of the corresponding edge function in the code over $\Module{H}{S}$ 
  will be the same symbol $x$.
  The same argument holds for the decoding functions in the code over $\Module{H}{S}$, so each
  receiver will correctly obtain its corresponding demands in the code over $\Module{H}{S}$.
  Hence, the code over $\Module{H}{S}$ is a linear solution for $\Network$.
\end{proof}

\begin{corollary}
  Let $R$ and $S$ be finite rings.
  If there exists a ring homomorphism from $R$ to $S$,
  then every network that is scalar linearly solvable over $R$
  is also scalar linearly solvable over $S$.
  \label{cor:ModHomomorphism}
\end{corollary}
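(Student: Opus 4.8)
The plan is to derive this as an immediate specialization of Lemma~\ref{lem:ModHomomorphism}. The key observation is that scalar linear solvability over a finite ring $R$ is, by definition, linear solvability over the module $\Module{R}{R}$ (the ring $R$ acting on its own additive group by multiplication), and that this module is \emph{faithful}: if $r \in R$ satisfies $r \act g = 0$ for all $g \in R$, then in particular $r = r \act 1 = 0$. So the hypothesis that $\Network$ is scalar linearly solvable over $R$ is exactly the hypothesis that $\Network$ is linearly solvable over the faithful $R$-module $\Module{R}{R}$, which is precisely the form in which Lemma~\ref{lem:ModHomomorphism} needs its input.

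Next I would invoke Lemma~\ref{lem:ModHomomorphism} with the given ring homomorphism $\phi: R \to S$: since $\Network$ is linearly solvable over the faithful $R$-module $\Module{R}{R}$, the lemma concludes that $\Network$ is linearly solvable over \emph{every} $S$-module. In particular, $\Network$ is linearly solvable over the module $\Module{S}{S}$, where $S$ acts on its own additive group by multiplication. But a linear solution over $\Module{S}{S}$ is, by definition, a scalar linear solution over $S$, so $\Network$ is scalar linearly solvable over $S$, as desired. (Implicitly this uses the standing convention, stated just before Lemma~\ref{lem:ModHomomorphism}, that ring homomorphisms preserve the multiplicative identity, which is what makes $\phi$ usable in Lemma~\ref{lem:ModHomomorphism}.)

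There is no real obstacle in this argument; the entire content has already been carried out in Lemma~\ref{lem:ModHomomorphism}, and the corollary is just the statement one gets by taking both the source module and the target module to be the rings acting on themselves. If anything, the only point worth spelling out is the faithfulness of $\Module{R}{R}$ and $\Module{S}{S}$, which is immediate from evaluating the action at the identity $1$.
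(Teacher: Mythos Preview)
Your proof is correct and follows exactly the same approach as the paper: apply Lemma~\ref{lem:ModHomomorphism} with the faithful $R$-module $\Module{R}{R}$ as the source and the $S$-module $\Module{S}{S}$ as the target. The paper's proof is a one-line version of what you wrote, noting only that $(R,+)$ is a faithful $R$-module and that the corollary is the special case of Lemma~\ref{lem:ModHomomorphism} with modules $\Module{R}{R}$ and $\Module{S}{S}$.
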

\begin{proof}
  $(R,+)$ is a faithful $R$-module for any finite ring $R$,
  so this is a special case of Lemma~\ref{lem:ModHomomorphism} 
  where the modules are $\Module{R}{R}$ and $\Module{S}{S}$.
\end{proof}

For finite rings $R$ and $S$,
special cases of Corollary~\ref{cor:ModHomomorphism} include:
\begin{itemize}
\item[(1)] $S$ is a subring of $R$:\\
The identity mapping is an injective homomorphism from $S$ to $R$,
so any network that is scalar linearly solvable over $S$
is also scalar linearly solvable over $R$.
\item[(2)] $R$ has a two-sided ideal $I$:\\
There is a surjective homomorphism from $R$ to $R/I$
(see Lemma~\ref{lem:surjective_homomorphism}),
so any network that is scalar linearly solvable over $R$
is also scalar linearly solvable over $R/I$.
\item[(3)] $\phi: R \times S \to R$ is the projection mapping:\\
$\phi$ is a surjective homomorphism, so
any network that is scalar linearly solvable over $R \times S$
is also scalar linearly solvable over $R$ 
(and likewise over $S$).
\end{itemize}

Cases (1), (2), and (3) agree with 
Lemma \PartOneLemmaSubring,
Corollary \PartOneCorollaryIdeal,
and
Lemma \PartOneLemmaDirectProduct,
respectively,
from \PartOne.
In fact, Corollary~\ref{cor:ModHomomorphism}
is a generalization of these results.

\clearpage
\section{Commutative and non-commutative rings}\label{ssec:simple}

We will focus on linear codes over modules whose ring acts on its own Abelian group,
i.e. scalar linear codes over rings.
As noted after Corollary~\ref{cor:ModHomomorphism},
for any two-sided ideal $I$ of a finite ring $R$,
every network that is scalar linearly solvable over $R$ is also scalar linearly solvable over $R/I$,
so in determining the smallest ring over which a network is scalar linearly solvable,
it is natural to focus attention on rings without two-sided ideals.

A ring is \textit{simple} if it has no proper two-sided ideals.
That is, its only two-sided ideals are the ring itself and the trivial ideal $\{0\}$.
The following lemmas give results related to simple rings
and network linear solvability.

\begin{lemma}
  A finite ring is simple if and only if
  it is isomorphic to a matrix ring over a field.
  \label{lem:simple_rings}
\end{lemma}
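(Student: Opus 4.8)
The plan is to establish the classical Artin–Wedderburn theorem specialized to finite rings, and for that I would proceed as follows. The statement has two directions. The easy direction is that a matrix ring $M_k(\F)$ over a field $\F$ is simple: one verifies directly that if $I$ is a nonzero two-sided ideal and $A \in I$ has a nonzero entry $a_{pq}$, then multiplying on the left and right by matrix units $E_{ip}$ and $E_{qj}$ produces a scalar multiple $a_{pq} E_{ij} \in I$, and since $a_{pq}$ is invertible in $\F$ we get every $E_{ij} \in I$, hence $I = M_k(\F)$. This argument needs only that $\F$ is a field (no finiteness), and it does not even need $k$ to be the only parameter — I would state it cleanly as a lemma or just inline it.

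The harder direction is that every finite simple ring is such a matrix ring. The natural route is Artin–Wedderburn: a left Artinian simple ring (and a finite ring is trivially Artinian) is isomorphic to $M_k(D)$ for some division ring $D$ and some $k \ge 1$. Then I would invoke Wedderburn's little theorem — every finite division ring is a field — to replace $D$ by a field. The key steps in the standard proof of the structure theorem are: (i) take a minimal nonzero left ideal $L$ of $R$ (exists by finiteness/Artinian-ness); (ii) show $L$ is a simple left $R$-module and that $R$, being simple, is a direct sum of finitely many copies of $L$ as a left module (using that the sum of all minimal left ideals is a two-sided ideal, hence all of $R$); (iii) apply the Jacobson density theorem or a direct Schur-lemma argument to identify $R \cong \mathrm{End}_R(R) \cong M_k(D)$ where $D = \mathrm{End}_R(L)$ is a division ring by Schur's lemma; (iv) finish with Wedderburn's little theorem. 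Since all of this is textbook material, I would most likely not reprove it from scratch but cite Artin–Wedderburn and Wedderburn's little theorem, giving only the short direct verification of simplicity of $M_k(\F)$ in full.

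The main obstacle, if one insists on a self-contained argument, is step (iii): pinning down the isomorphism $R \cong M_k(D)$ requires either the density theorem or a careful bookkeeping argument with idempotents and Peirce decompositions, and getting the left/right conventions consistent with the paper's convention (the paper works with left modules, and rings act on the left in~\eqref{eq:a}) takes a little care — in particular one should be mindful that $\mathrm{End}_R(R^k)$ acting on the left corresponds to $M_k(D)$ with $D$ possibly the opposite of $\mathrm{End}_R(L)$, though for a field this distinction disappears. Because the intended use later in the paper is just the clean statement "finite simple $\iff$ matrix ring over a field," I expect the proof to be short: cite the two classical theorems, verify the easy direction, and note that finiteness of $R$ forces both the Artinian hypothesis and (via Wedderburn) the division ring to be a finite field, which is then automatically finite of size $p^d$ for some prime power, consistent with $|R| = |\F|^{k^2}$.
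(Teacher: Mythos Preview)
Your proposal is correct and matches the paper's approach: the paper's proof is a one-line citation of the Artin--Wedderburn theorem (referencing Lam and McDonald), so your plan to cite Artin--Wedderburn together with Wedderburn's little theorem is exactly on target, if anything slightly more detailed than what the paper actually writes.
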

\begin{proof}
  This is a corollary of the Artin-Wedderburn theorem 
  (e.g. \cite[p. 36, Theorem 3.10 (4)]{Lam-Noncommutative} and \cite[p. 20, Theorem II.9]{McDonald-FiniteRings}).
\end{proof}

\begin{lemma}{\cite[Theorem 7, p. 243]{Dummit-Algebra}}
  If $I$ is a two-sided ideal of ring $R$, then
  the mapping $\phi: R \to R/I$
  given by
  $\phi(x) = x + I$
  is a surjective homomorphism.
  \label{lem:surjective_homomorphism}
\end{lemma}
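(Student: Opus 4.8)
The plan is to verify directly that $\phi$ satisfies the three defining properties of a ring homomorphism in this paper's convention (it respects addition and multiplication and sends $1$ to $1$) and then to note surjectivity, the only genuine content being the well-definedness of the ring operations on $R/I$. Since this is a standard textbook fact, the argument is almost entirely routine bookkeeping once the quotient structure is set up correctly.

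First I would recall that, because $(I,+)$ is a subgroup of the abelian group $(R,+)$, the set of cosets $R/I = \{x + I : x \in R\}$ is an abelian group under $(x+I) + (y+I) := (x+y)+I$, and under this structure $\phi$ is immediately an additive group homomorphism: $\phi(x+y) = (x+y)+I = (x+I) + (y+I) = \phi(x) + \phi(y)$. Surjectivity is also immediate, since every coset is of the form $x+I = \phi(x)$ for some $x \in R$.

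Next I would equip $R/I$ with the multiplication $(x+I)(y+I) := xy + I$ and check that this is well-defined; this is the one nonroutine point and the place where the two-sided hypothesis is essential. If $x + I = x' + I$ and $y + I = y' + I$, write $x' = x + a$ and $y' = y + b$ with $a,b \in I$; then $x'y' = xy + ay + xb + ab$, where $ay \in I$ because $I$ is a right ideal, $xb \in I$ because $I$ is a left ideal, and $ab \in I$ because $I$ is an ideal, so $x'y' + I = xy + I$. Associativity, distributivity, and the fact that $1 + I$ is a two-sided multiplicative identity are then inherited coset-wise from the corresponding identities in $R$, so $R/I$ is a ring with identity.

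Finally, with the ring structure on $R/I$ in place, $\phi(xy) = xy + I = (x+I)(y+I) = \phi(x)\phi(y)$ and $\phi(1) = 1 + I$ is the identity of $R/I$, so $\phi$ is a ring homomorphism in the identity-preserving sense used throughout the paper; together with the surjectivity already observed, this finishes the proof. The main obstacle is purely the well-definedness of multiplication on $R/I$ — ensuring it is independent of the choice of coset representatives forces $I$ to absorb products on both sides, which is exactly the two-sidedness assumed in the statement.
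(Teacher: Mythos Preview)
Your proof is correct and is the standard textbook verification. Note that the paper itself does not supply a proof of this lemma at all; it merely cites \cite[Theorem 7, p. 243]{Dummit-Algebra}, so there is nothing to compare against beyond confirming that your argument matches the routine construction of the quotient ring and its canonical projection, which it does.
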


\begin{lemma}
  For each finite ring $R$,
  there exists a simple ring $S$ 
  such that the following hold:
  \begin{itemize}
    \item [(a)] there exists a surjective homomorphism from $R$ to $S$,
    \item [(b)] every network that is scalar linearly solvable over $R$
    is scalar linearly solvable over $S$, and
    \item [(c)] $|S|$ divides $|R|$.
  \end{itemize}
  \label{lem:hom_to_simple}
\end{lemma}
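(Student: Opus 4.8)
The plan is to reduce to the simple case by repeatedly quotienting out proper two-sided ideals, controlling the size at each step. First I would observe that if $R$ is already simple, we may take $S = R$: the identity map is a surjective homomorphism, (b) is trivial, and $|S| = |R|$ divides $|R|$. So assume $R$ is not simple, and let $I$ be a proper nonzero two-sided ideal of $R$.

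Next I would apply Lemma~\ref{lem:surjective_homomorphism} to obtain a surjective ring homomorphism $\phi: R \to R/I$, and apply Corollary~\ref{cor:ModHomomorphism} (taking the faithful $R$-module $(R,+)$ and the module $(R/I, +)$ over $R/I$, or equivalently invoking case (2) listed after Corollary~\ref{cor:ModHomomorphism}) to conclude that every network scalar linearly solvable over $R$ is scalar linearly solvable over $R/I$. Since $I$ is a proper nonzero ideal, $1 \le |R/I| < |R|$, and because $|R/I| \cdot |I| = |R|$ (as $R/I$ is the quotient of the additive group), we also have $|R/I|$ divides $|R|$. Thus $R/I$ is a ring satisfying (a), (b), and (c) relative to $R$, but with strictly smaller size.

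Then I would run an induction on $|R|$ (or, equivalently, iterate the above construction, which must terminate since the size strictly decreases and is bounded below). By the inductive hypothesis applied to the strictly smaller ring $R/I$, there is a simple ring $S$ with a surjective homomorphism $\psi: R/I \to S$, with every network scalar linearly solvable over $R/I$ also scalar linearly solvable over $S$, and with $|S|$ dividing $|R/I|$. Composing, $\psi \circ \phi: R \to S$ is a surjective homomorphism (composition of surjective homomorphisms), so (a) holds; transitivity of the implication in (b) gives the solvability transfer from $R$ to $S$; and $|S|$ divides $|R/I|$ which divides $|R|$, so (c) holds. This completes the induction.

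The only genuine point requiring care — and the main obstacle, though a minor one — is ensuring the quotient $R/I$ is still a ring with a multiplicative identity, as assumed throughout the paper; this is automatic since $R/I$ inherits $1 + I$ as its identity, and a finite simple ring in the target is guaranteed to exist by the fact that every finite ring with $|R| > 1$ has some maximal proper two-sided ideal (the base case of the induction, when no proper nonzero ideal exists, is exactly the simple case). One could alternatively phrase the whole argument non-inductively by invoking Lemma~\ref{lem:simple_rings}: take $I$ to be a maximal two-sided ideal of $R$, so that $R/I$ is simple, and then a single application of Lemma~\ref{lem:surjective_homomorphism} and Corollary~\ref{cor:ModHomomorphism} yields $S = R/I$ directly; I would likely present it this way for brevity, with the divisibility claim (c) following from $|R| = |I| \cdot |R/I|$.
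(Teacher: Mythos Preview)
Your proposal is correct, and the non-inductive version you sketch at the end---take a maximal two-sided ideal $I$ so that $S = R/I$ is simple, then apply Lemma~\ref{lem:surjective_homomorphism} and Corollary~\ref{cor:ModHomomorphism}, with $|R| = |I|\cdot|R/I|$ giving divisibility---is exactly the paper's proof. The inductive phrasing you develop first is also valid but unnecessarily indirect; the paper goes straight to a maximal ideal in one step.
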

\begin{proof}
  If $R$ is a simple ring,
  then each statement is trivially true by taking $S = R$,
  so we may assume $R$ is not a simple ring.
  %
  Thus, $R$ has a proper maximal two-sided ideal $I$.
  Let $S = R/I$, and note that
  since $I$ is maximal, $S$ is simple.
  The mapping $\phi: R \to R/I$
  given by
  $\phi(x) = x + I$
  is a surjective homomorphism
  by Lemma~\ref{lem:surjective_homomorphism},
  which proves (a).
  Hence by Corollary~\ref{cor:ModHomomorphism},
  any network that is scalar linearly solvable over $R$
  is also scalar linearly solvable over $S$,
  which proves (b).
  Since $R$ is finite, we know that
  $|R/I|$ divides $|R|$,
  which proves (c).
\end{proof}

If $R$ is a finite commutative ring and $S$ is a simple ring satisfying (a)-(c) in Lemma~\ref{lem:hom_to_simple},
then $S$ must also be commutative, 
since there is a surjective homomorphism from $R$ to $S$.
However, as we demonstrate in the following example, 
if $R$ is non-commutative, then such an $S$ is not necessarily non-commutative.

\begin{example}
The following demonstrates:
(i) a class of non-commutative rings for which the simple ring in Lemma~\ref{lem:hom_to_simple} is non-commutative,
and 
(ii) a class of non-commutative rings for which the simple ring in Lemma~\ref{lem:hom_to_simple} is commutative
  \begin{itemize}
    \item[(i)] Let $\Z_n$ denote the ring of integers mod $n$.
    For any positive integers $k,n$, and prime divisor $p$ of $n$,
    there exists a surjective homomorphism from
    the non-commutative ring $M_k(\Z_n)$ to
    the non-commutative simple ring $M_k(\Z_p)$,
    given by matrix-component-wise reduction mod $p$.
    
    \item[(ii)] For each field $\F$ and integer $k \geq 2$,
    there exists a surjective homomorphism from
    the non-commutative ring of upper triangular $k \times k$ matrices with entries in $\F$
    to the commutative simple ring $\F$
    (see the proof of Lemma~\ref{lem:p3_matrices}).
  \end{itemize}
\end{example}

The following theorem demonstrates that any smallest ring over which a network
is scalar linearly solvable is simple.
\begin{theorem}
  If a network is scalar linearly solvable over a ring $R$
  but not over any smaller ring,
  then $R$ is a matrix ring over a field.
  \label{thm:R_dom_by_simple}
\end{theorem}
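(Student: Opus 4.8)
The plan is to combine Lemma~\ref{lem:hom_to_simple} with Lemma~\ref{lem:simple_rings}. Suppose $\Network$ is scalar linearly solvable over $R$ but over no smaller ring. By Lemma~\ref{lem:hom_to_simple}, there is a simple ring $S$ with a surjective homomorphism $R \to S$, such that $\Network$ is scalar linearly solvable over $S$ and $|S|$ divides $|R|$. Since $R$ is a smallest ring over which $\Network$ is scalar linearly solvable, we must have $|S| \geq |R|$; combined with $|S| \mid |R|$ this forces $|S| = |R|$, and then the surjective homomorphism $R \to S$ is in fact a bijection, hence an isomorphism. Therefore $R \cong S$ is simple.

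Having established that $R$ is simple, I would invoke Lemma~\ref{lem:simple_rings}: a finite simple ring is isomorphic to a matrix ring over a field, which is exactly the desired conclusion. So the proof is essentially a two-line deduction once the preliminary lemmas are in hand.

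The only point requiring any care is the step where $|S| \geq |R|$. This uses that $R$ itself is a ring over which $\Network$ is scalar linearly solvable, so any strictly smaller ring cannot support a scalar linear solution; since $S$ does support one, $|S|$ cannot be strictly smaller than $|R|$. One should also note that $S$ being a quotient (or rather, here, isomorphic copy) of $R$ is a legitimate ``candidate'' ring — it is finite and has a multiplicative identity, consistent with the paper's standing assumptions — so there is no subtlety about whether $S$ is admissible as a competitor in the ``smallest ring'' comparison. I do not anticipate a genuine obstacle here; the content of the theorem is entirely carried by Lemmas~\ref{lem:hom_to_simple} and \ref{lem:simple_rings}, and this theorem is the clean packaging of those facts.
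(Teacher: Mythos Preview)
Your proposal is correct and takes essentially the same approach as the paper: both combine Lemma~\ref{lem:hom_to_simple} with Lemma~\ref{lem:simple_rings} to conclude that a minimal $R$ must be simple, hence a matrix ring over a field. The only cosmetic difference is that the paper argues the contrapositive (if $R$ is not simple, produce a strictly smaller $S$), whereas you argue directly (minimality forces $|S|=|R|$ and hence $R\cong S$); the content is identical.
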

\begin{proof}
  Suppose a network $\Network$ is scalar linearly solvable over a
  ring $R$ that is not simple.
  %
  By Lemma~\ref{lem:hom_to_simple} (a) (b),
  there exists a simple ring $S$ and
  a surjective homomorphism $\phi: R \to S$,
  such that
  $\Network$ is scalar linearly solvable over $S$.
  Since $\phi$ is surjective, $|R| \geq |S|$,
  but since $S$ is simple and $R$ is not, the two rings cannot be isomorphic,
  so $|R| \ne |S|$, and therefore $|R| > |S|$.

  This proves that every smallest size ring over which $\Network$
  is scalar linearly solvable must be simple,
  which implies that such a ring is a matrix ring over a field
  by Lemma~\ref{lem:simple_rings}.
\end{proof}

In \PartOne{} \cite[Theorem \PartOneTheoremSmallerField]{Connelly-Ring1}, 
we showed that the smallest-size commutative ring over which a network is scalar linearly solvable is unique.
However, there may exist multiple simple rings of the same size 
(e.g. $\GF{p^4}$ and $M_2(\GF{p})$ are non-isomorphic simple rings of size $p^4$).
An interesting open question is whether every network with a scalar linear solution over multiple simple rings of the same size
also must have a scalar linear solution over some smaller simple ring.
I.e. is the smallest ring $R$ in Theorem~\ref{thm:R_dom_by_simple} unique for a given network?

We demonstrate (in Corollaries~\ref{cor:choose-two-ring} and \ref{cor:dim-n_smallest_ring})
that for two infinite classes of networks (one of which is a class of multicast networks)
studied in this paper, 
the smallest-size ring over which each network is scalar linearly solvable is unique.

\subsection{Modules and vector linear codes}\label{ssec:mod_vec}

The following lemma shows that linear solutions over unfaithful modules
admit linear solutions over faithful modules.
\begin{lemma}
  Let $G$ be an $R$-module.
  There exists a ring $S$ such that
  $G$ is a faithful $S$-module,
  and any network that is linearly solvable over $\Module{G}{R}$
  is linearly solvable over $\Module{G}{S}$.
  If $R$ is commutative,
  then there exists a commutative such $S$.
  \label{lem:faithful_module}
\end{lemma}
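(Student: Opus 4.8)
The plan is to quotient out the annihilator of the module. Given the $R$-module $G$, define $I = \{r \in R : r \act g = 0 \text{ for all } g \in G\}$, the annihilator of $G$ in $R$. One checks $I$ is a two-sided ideal of $R$: it is closed under addition and negation by the module axiom $(r+s)\act g = (r\act g)\oplus(s\act g)$, it absorbs on the left because $(sr)\act g = s\act(r\act g) = s\act 0 = 0$, and it absorbs on the right because $(rs)\act g = r\act(s\act g)$ and $s\act g$ ranges over a subset of $G$, on all of which $r$ acts as zero. Set $S = R/I$. Then $G$ becomes an $S$-module via $(r+I)\act g := r\act g$; this is well-defined precisely because $I$ annihilates $G$, and the module axioms for $S$ follow directly from those for $R$. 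By construction, the only element of $S$ annihilating all of $G$ is $0$, so $G$ is a faithful $S$-module.

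Next I would transfer linear solvability. The quotient map $\phi : R \to S$, $\phi(r) = r+I$, is a surjective ring homomorphism by Lemma~\ref{lem:surjective_homomorphism}, and by construction its action on $G$ intertwines the two module structures, i.e. $\phi(r)\act g = r\act g$. So given a linear solution for $\Network$ over $\Module{G}{R}$, replacing every coefficient $M_i \in R$ in every edge and decoding function by $\phi(M_i) \in S$ yields a code over $\Module{G}{S}$ in which every edge carries exactly the same symbol of $G$ as before (since $\phi(M_i)\act x = M_i \act x$), and each decoding function still recovers its demand. Hence the code over $\Module{G}{S}$ is a linear solution. (Alternatively, one could note that $\phi$ is a homomorphism and invoke Lemma~\ref{lem:ModHomomorphism} after first passing to a faithful $R$-module, but the direct argument above is cleaner since it works with $G$ itself.)

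Finally, the commutative case: if $R$ is commutative, then $S = R/I$ is a quotient of a commutative ring, hence commutative. So taking this same $S$ settles the last sentence.

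The argument is essentially routine, so I do not expect a serious obstacle; the one point requiring a little care is verifying that $I$ is a \emph{two-sided} ideal — the right-absorption check uses the third module axiom $(rs)\act g = r\act(s\act g)$ together with the observation that $r$ kills every element of $G$, not just those of the form $s\act g$ — and then checking that the induced action of $S$ on $G$ is well-defined and satisfies all four module axioms. Neither step is hard, but both should be stated explicitly.
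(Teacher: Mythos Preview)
Your proposal is correct and follows essentially the same approach as the paper: quotient $R$ by the annihilator $J = \{r \in R : r\act g = 0 \text{ for all } g\in G\}$, make $G$ into a faithful $S = R/J$-module via $(r+J)\act g = r\act g$, and transfer a linear solution over $\Module{G}{R}$ to one over $\Module{G}{S}$ by replacing each coefficient $M_i$ with its image $M_i + J$, noting the edge symbols are unchanged. Your writeup even fills in a bit more detail on the two-sided ideal check than the paper does.
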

\begin{proof}
  We use ideas from \cite[p. 2750]{DFZ-Insufficiency} here.
  Let $J = \{r \in R \; : \; r \act g = 0, \; \forall g \in G\}$,
  which is easily verified to be a two-sided ideal of $R$.
  Let $S = R/J$.
  It can also be verified that
  $G$ is an $S$-module with action $\newaction$ given by
  $(r + J) \newaction g = r \act g$.
  
  If $(r + J), (s + J) \in S$ are such that
  $(r + J) \newaction g = (s + J) \newaction g$
  for all $g \in G$,
  then $(r-s) \act g = 0$,
  which implies $(r - s) \in J$.
  Hence $(r + J) = (s + J)$,
  so the ring $S$ acts faithfully on $G$.
  If $R$ is commutative, then the ring $R/J = S$ is also commutative.

  Suppose a network $\Network$ is linearly solvable over $\Module{G}{R}$.
  Every edge function in the solution is of the form
  \begin{align}
  y' = M_1 \act x_1 + \cdots + M_m \act x_m
  \label{eq:t}
  \end{align}
  where the $x_i$'s are the parent node's inputs and the $M_i$'s are constants from $R$.
  Form a linear code over $\Module{G}{S}$ replacing each coefficient $M_i$ in \eqref{eq:t}
  by $(M_i + J)$.
  Let $y$ be the edge symbol in the code over $\Module{G}{S}$
  corresponding to $y'$ in the code over $\Module{G}{R}$.
  Then
  \begin{align*}
  y &= ((M_1 + J) \newaction x_1) \oplus \cdots \oplus ((M_m + J) \newaction x_m)  \\
    &= (M_1 \act x_1) \oplus \cdots \oplus (M_m \act x_m) = y'.
  \end{align*}
  %
  %
  Thus, whenever an edge function in the solution over $\Module{G}{R}$ outputs the symbol $y'$,
  the corresponding edge function in the code over $\Module{G}{S}$ will output the same symbol $y'$.
  Likewise, whenever $x$ is an input to an edge function in the solution over $\Module{G}{R}$,
  the corresponding input of the corresponding edge function in the code over $\Module{G}{S}$ 
  will be the same symbol $x$.
  The same argument holds for the decoding functions in the code over $\Module{G}{S}$, so each
  receiver will correctly obtain its corresponding demands in the code over $\Module{G}{S}$.
  Hence, the code over $\Module{G}{S}$ is a linear solution for $\Network$.
\end{proof}

In a linear network code over a module $\Module{G}{R}$,
in principle, the ring $R$ need not be finite
(although representing linear code coefficients might be problematic).
For example, any Abelian group $(G,\oplus)$
is a $\Z$-module with action given by 
$$n \act g =
  \left\{ \begin{array}{ll}
    \underbrace{g \oplus \cdots \oplus g}_{n \text{ adds}} & n > 0 \\    
    (-n) \act (-g) & n < 0 \\
    0 & n = 0 .
  \end{array} \right. $$
However, in a linear network code over a module, the alphabet is finite,
so the Abelian group $G$ must be finite.%
\footnote{We will call a module ``finite'' if and only if its Abelian group is finite.}
The following corollary shows that if a network is linearly solvable over a module
where the ring is infinite,
then it is also linearly solvable over a faithful module where the ring is finite.

\begin{corollary}
  Let $R$ be an infinite ring
  and let $G$ be a finite $R$-module.
  Then there exists a finite ring $S$
  such that $G$ is a faithful $S$-module
  and any network that is linearly solvable over $\Module{G}{R}$
  is linearly solvable over $\Module{G}{S}$.
  If $R$ is commutative,
  then there exists a commutative such $S$.
  \label{cor:infinite_ring}
\end{corollary}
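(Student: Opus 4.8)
The plan is to reduce immediately to Lemma~\ref{lem:faithful_module} and then observe that the ring it produces must already be finite because $G$ is. First I would apply Lemma~\ref{lem:faithful_module} to the module $\Module{G}{R}$, obtaining the two-sided ideal $J = \{r \in R : r \act g = 0 \text{ for all } g \in G\}$ and the ring $S = R/J$, which acts faithfully on $G$ via $(r + J) \newaction g = r \act g$, has the property that every network linearly solvable over $\Module{G}{R}$ is linearly solvable over $\Module{G}{S}$, and is commutative whenever $R$ is. All that remains is to verify that $S$ is finite.

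For that, I would pass to the endomorphism ring $\mathrm{End}(G)$ of the finite Abelian group $(G,\oplus)$ (with pointwise addition of homomorphisms and composition as multiplication), which is finite since there are at most $|G|^{|G|}$ functions $G \to G$. The module axioms say precisely that the map $\psi: R \to \mathrm{End}(G)$ defined by $\psi(r)(g) = r \act g$ is a ring homomorphism: additivity of each $\psi(r)$ and additivity of $\psi$ follow from the first two distributive laws, $\psi(r * s) = \psi(r) \circ \psi(s)$ follows from the associativity axiom $(r*s)\act g = r \act (s \act g)$, and $\psi(1) = \mathrm{id}_G$ follows from $1 \act g = g$. The kernel of $\psi$ is exactly $J$, so by the first isomorphism theorem $S = R/J \cong \psi(R)$, a subring of the finite ring $\mathrm{End}(G)$, and hence $S$ is finite.

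Combining these two observations gives everything: $S$ is a finite ring, $G$ is a faithful $S$-module, linear solvability over $\Module{G}{R}$ implies linear solvability over $\Module{G}{S}$, and commutativity is inherited from $R$ — which is exactly the claim. I do not expect a genuine obstacle here; the only point needing a little care is matching the multiplication convention on $\mathrm{End}(G)$ with the left-module convention so that $\psi$ respects products, but that is routine.
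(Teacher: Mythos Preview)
Your proposal is correct and follows essentially the same approach as the paper: the paper's proof simply invokes Lemma~\ref{lem:faithful_module} together with ``the fact that the ring of a faithful finite module must also be finite,'' and your embedding of $S = R/J$ into $\mathrm{End}(G)$ is precisely the standard justification of that fact. The only difference is that you spell out this finiteness argument explicitly, which the paper leaves as a one-line assertion.
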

\begin{proof}
  This follows from Lemma~\ref{lem:faithful_module},
  and the fact that
  the ring of a faithful finite module
  must also be finite.
\end{proof}

A \textit{submodule} of an $R$-module $G$ is a subgroup $H$ of $G$
such that $H$ is closed when acted on by $R$. 
That is, both $H$ and $G$ are $R$-modules
and $H \subseteq G$.
Submodules are of particular interest, since 
by Lemma~\ref{lem:same_ring},
if $G$ and $H$ are faithful $R$-modules,
then the set of networks that are linearly solvable over $\Module{G}{R}$
and the set of networks that are linearly solvable over $\Module{H}{R}$
are equal, yet a linear code over $\Module{H}{R}$ has a smaller alphabet
if $H$ is a proper submodule of $G$.

As an example, let $I$ be a two-sided ideal in the ring $R$.
Then $(I,+)$ is a subgroup of $(R,+)$ that is closed under multiplication in $R$,
so $\Module{I}{R}$ is a submodule of the $R$-module $R$.
As another example,
for each finite field $\F$ and integer $k \geq 2$,
the $M_k(\F)$-module $\F^k$ is a proper submodule of
the $M_k(\F)$-module $M_k(\F)$.

Lemmas~\ref{lem:submodules} and \ref{lem:smaller_module}
show results related to submodules
that will be used to prove Theorem~\ref{thm:min_module}.

\begin{lemma}{\cite[Theorem 3.3 (2), p. 31]{Lam-Noncommutative}}
  Let $\F$ be a finite field and $k$ a positive integer.
  Then $\F^k$ is the only $M_k(\F)$-module that has no proper submodules.
  \label{lem:submodules}
\end{lemma}
%

By Lemma~\ref{lem:same_ring},
for each ring $R$,
if a network is linearly solvable over a faithful $R$-module,
then it is linearly solvable over every $R$-module.
When a network is solvable over the $R$-modules for a particular ring $R$,
it may be desirable for linear network coding to determine
the minimum-size $R$-modules.
Lemma~\ref{lem:smaller_module} considers this question for rings of matrices over a finite field.

\begin{lemma}
  Let $\F$ be a finite field and $k$ a positive integer.
  If $G$ is a finite non-zero $M_k(\F)$-module,
  then $|\F|^k$ divides $|G|$.
  \label{lem:smaller_module}
\end{lemma}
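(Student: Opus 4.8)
The plan is to use the structure theory of modules over a (simple) matrix ring to reduce $G$ to a direct sum of copies of the unique simple module, and then count. By the Artin--Wedderburn theory (invoked via Lemma~\ref{lem:simple_rings} and Lemma~\ref{lem:submodules}), $M_k(\F)$ is a simple Artinian ring, and $\F^k$ is its unique simple module up to isomorphism. A finite $M_k(\F)$-module $G$ is in particular a finitely generated module over a semisimple ring, hence semisimple, so $G \cong (\F^k)^n$ for some integer $n \ge 0$; since $G$ is assumed non-zero, $n \ge 1$. Therefore $|G| = |\F^k|^n = |\F|^{kn}$, and $|\F|^k$ divides $|G|$.

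The steps, in order: (1) record that $M_k(\F)$ is semisimple (finite simple ring $\Rightarrow$ Artinian simple $\Rightarrow$ semisimple), citing the same Artin--Wedderburn reference already used for Lemma~\ref{lem:simple_rings}; (2) invoke that every module over a semisimple ring is a direct sum of simple modules, and that by Lemma~\ref{lem:submodules} the only simple $M_k(\F)$-module is $\F^k$ (``no proper submodules'' being exactly simplicity for a non-zero module); (3) conclude $G \cong (\F^k)^n$ with $n \ge 1$ because $G \ne 0$; (4) take cardinalities: $|G| = (|\F|^k)^n$, so $|\F|^k \mid |G|$. Finiteness of $G$ guarantees $n$ is finite, which is all we need for the divisibility statement.

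Alternatively, if one wants to avoid quoting the semisimplicity of all modules, I could argue directly: pick any non-zero $g \in G$; the cyclic submodule $M_k(\F)\cdot g$ is a non-zero finite $M_k(\F)$-module, and since $\F^k$ is the only $M_k(\F)$-module with no proper submodules (Lemma~\ref{lem:submodules}), a minimal non-zero submodule of $M_k(\F)\cdot g$ must be isomorphic to $\F^k$; hence $G$ contains a copy of $\F^k$. Then peel off a complement — using that $\F^k$ is an injective (equivalently, a direct summand) $M_k(\F)$-module because $M_k(\F) \cong (\F^k)^k$ as a left module over itself — and induct on $|G|$ to write $G \cong (\F^k)^n$, $n \ge 1$, giving the same count.

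The main obstacle is purely expository rather than mathematical: deciding how much of the semisimple-module machinery to quote versus prove inline. The cleanest route is to cite semisimplicity of $M_k(\F)$ and the classification of its modules from the same sources already cited (Lam, McDonald), so that the lemma becomes a one-line cardinality computation; the one point needing a word of care is that ``non-zero module'' forces $n \ge 1$, which is why the hypothesis $G \ne 0$ appears in the statement.
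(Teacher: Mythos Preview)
Your proof is correct, but you are working harder than necessary. You aim for the full structure theorem $G \cong (\F^k)^n$ (via semisimplicity of $M_k(\F)$, or via injectivity/splitting and induction), and then read off the cardinality. The paper instead takes a shorter, more elementary route: since $G$ is finite and non-zero it contains a minimal (hence simple) submodule, which by Lemma~\ref{lem:submodules} must be $\F^k$; then $(\F^k,+)$ is a subgroup of $(G,\oplus)$, and Lagrange's theorem immediately gives $|\F|^k \mid |G|$. No complement, no induction, no decomposition is needed --- a single embedded copy of $\F^k$ already suffices for the divisibility statement. Your first paragraph in the ``alternative'' argument is essentially the paper's proof; you could have stopped right after observing $\F^k \subseteq G$ and invoked Lagrange, rather than peeling off summands. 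The trade-off: your argument yields the stronger conclusion $|G| = |\F|^{kn}$ (i.e.\ $|G|$ is a power of $|\F|^k$), at the cost of citing or reproving module-theoretic machinery; the paper's argument gets exactly the stated divisibility with nothing beyond Lagrange.
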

\begin{proof}
  Since $G$ is finite and non-zero,
  $G$ contains a submodule with no proper submodules.
  By Lemma~\ref{lem:submodules}, 
  $\F^k$ is the only $M_k(\F)$-module with no proper submodules,
  so $\F^k$ is a submodule of $G$.
  Hence by Lagrange's theorem of finite groups 
  (e.g. \cite[p. 89, Theorem 8]{Dummit-Algebra}),
  $|\F|^k$ divides $|G|$.
\end{proof}

The following theorem is a generalization of Theorem~\ref{thm:R_dom_by_simple},
where we characterize smallest-size modules over which networks are linearly solvable.
Theorem~\ref{thm:min_module} 
demonstrates that if a network is linearly solvable over some module,
then there exists a vector linear code over a field that
minimizes the alphabet size needed for a linear solution.

\begin{theorem}
  Suppose a network $\Network$ is linearly solvable over an $R$-module $G$. 
  Then the following hold:
  \begin{itemize}
  \item[(a)] There exists a finite field $\F$ and positive integer $k$
  such that $\Network$ has a $k$-dimensional vector linear solution over $\F$
  and $|\F|^k$ divides $|G|$.
  \item[(b)] If $R$ is commutative, then there exists a finite field $\F$
  such that $\Network$ has a scalar linear solution over $\F$
  and $|\F|$ divides $|G|$.
  \end{itemize}
  \label{thm:min_module}
\end{theorem}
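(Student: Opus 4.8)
The plan is to reduce to a finite, faithful module, then cut out of $G$ a small module over a \emph{simple} ring (which Lemma~\ref{lem:simple_rings} identifies as a matrix ring over a field), transport the linear solution across the associated quotient map, and read the divisibility off Lemma~\ref{lem:smaller_module}. First I would normalize the hypotheses: if $R$ is infinite, replace it using Corollary~\ref{cor:infinite_ring}, and if $R$ is finite but $G$ is unfaithful, replace it using Lemma~\ref{lem:faithful_module}; in both cases we obtain a finite ring $S$ for which $G$ is a faithful $S$-module, $\Network$ is linearly solvable over $\Module{G}{S}$, and $S$ is commutative whenever $R$ is. So I may assume $R$ is itself finite and $G$ is faithful (and also $G \ne \{0\}$, the case $|G| = 1$ being degenerate).

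Next, since $G$ is finite and nonzero, it contains a minimal nonzero submodule $H$, which is therefore a simple $R$-module. Let $I = \mathrm{Ann}_R(H) = \{r \in R : r \act h = 0 \text{ for all } h \in H\}$, a two-sided ideal of $R$, and put $T = R/I$; then $H$ is a faithful simple $T$-module. The crucial step is to argue that $T$ is a simple ring: a finite ring possessing a faithful simple module is simple (it is primitive and Artinian, hence simple Artinian; by the Jacobson density theorem together with Wedderburn's little theorem this forces $T \cong M_k(\F)$ directly). When $R$ is commutative this step is elementary, since $I$ is then a maximal ideal and $T$ is a field. By Lemma~\ref{lem:simple_rings}, $T \cong M_k(\F)$ for some finite field $\F$ and positive integer $k$, with $k = 1$ in the commutative case.

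Finally I would transport the solution and extract the divisibility. Applying Lemma~\ref{lem:ModHomomorphism} to the quotient homomorphism $R \to T$ and the faithful $R$-module $G$ shows that $\Network$ is linearly solvable over every $T$-module, in particular over $\Module{T}{T}$; that is, $\Network$ is scalar linearly solvable over $T \cong M_k(\F)$, so by Corollary~\ref{cor:same_ring} it has a $k$-dimensional vector linear solution over $\F$. Meanwhile $H$, viewed as a nonzero $M_k(\F)$-module through the isomorphism $T \cong M_k(\F)$, satisfies that $|\F|^k$ divides $|H|$ by Lemma~\ref{lem:smaller_module}, while $|H|$ divides $|G|$ by Lagrange's theorem since $H$ is a subgroup of $G$; hence $|\F|^k$ divides $|G|$, establishing part (a). In the commutative case $k = 1$, so $\Network$ is scalar linearly solvable over the field $\F$ with $|\F|$ dividing $|G|$, establishing part (b).

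The main obstacle is the single genuinely ring-theoretic point: that $R/\mathrm{Ann}_R(H)$ is a simple ring when $H$ is a simple module over a finite ring $R$. This is exactly the hinge that makes Lemmas~\ref{lem:simple_rings} and~\ref{lem:smaller_module} applicable; everything else is bookkeeping with results already established in the paper. In the commutative setting this difficulty evaporates, since the annihilator of a simple module over a commutative ring is a maximal ideal.
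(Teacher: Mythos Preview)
Your proof is correct and shares the paper's overall architecture---reduce to a faithful module over a finite ring, pass to a simple quotient $M_k(\F)$ of $R$, transport the solution via Lemma~\ref{lem:ModHomomorphism}, and read off divisibility via Lemma~\ref{lem:smaller_module}---but the two diverge in how the simple quotient is produced and in which module witnesses the divisibility. The paper simply invokes Lemma~\ref{lem:hom_to_simple} to pick an arbitrary maximal two-sided ideal, obtains a surjection $\phi:R\to M_k(\F)$, and then applies Lemma~\ref{lem:smaller_module} to $G$ itself, regarded as an $M_k(\F)$-module via $\phi$. You instead first locate a simple submodule $H\subseteq G$ and take the quotient by $\mathrm{Ann}_R(H)$; this costs you the extra ring-theoretic input that a finite (left) primitive ring is simple, but it buys a cleaner divisibility step, since $H$ is by construction a nonzero module over $R/\mathrm{Ann}_R(H)\cong M_k(\F)$, so Lemma~\ref{lem:smaller_module} applies to $H$ directly and Lagrange passes the divisibility from $|H|$ to $|G|$. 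The paper's route is shorter in that it bypasses the primitivity argument entirely; yours is more self-contained at the point where one must justify that the module fed into Lemma~\ref{lem:smaller_module} really carries an $M_k(\F)$-action.
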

\begin{proof}
  If the ring $R$ is infinite, then by Corollary~\ref{cor:infinite_ring},
  $\Network$ is linearly solvable over some faithful module with a finite ring.
  If $R$ is commutative, then by Corollary~\ref{cor:infinite_ring},
  $\Network$ is linearly solvable over some faithful module with a finite commutative ring.
  So without loss of generality, assume $R$ is finite
  and $G$ is a faithful $R$-module.
  By Lemmas~\ref{lem:simple_rings} and \ref{lem:hom_to_simple} (a),
  since $R$ is finite,
  there exists a field $\F$, a positive integer $k$,
  and a surjective homomorphism $\phi: R \to M_k(\F)$.
  By Lemma~\ref{lem:ModHomomorphism}
  any network that is linearly solvable over the faithful $R$-module $G$
  is also linearly solvable over every $M_k(\F)$-module,
  so in particular, 
  $\Network$ has a $k$-dimensional vector linear solution over $\F$.
  Since $\phi$ is a homomorphism,
  any $R$-module is also an $M_k(\F)$-module 
  (see the proof of Lemma~\ref{lem:ModHomomorphism}).
  Thus, both $G$ and $\F^k$ are $M_k(\F)$-modules,
  so by Lemma~\ref{lem:smaller_module},
  we have $|\F|^k$ divides $|G|$.

  If $R$ is commutative, then, since $\phi$ is a surjective homomorphism, 
  $M_k(\F)$ must also be commutative,
  which implies $k = 1$.
  Hence $\Network$ has a scalar linear solution over $\F$
  and $|\F|$ divides $|G|$.
\end{proof}

Theorem~\ref{thm:min_module} demonstrates that, in some sense,
vector linear codes over finite fields are optimal
for linear network coding,
as they can minimize the alphabet size needed for a linear solution.
The following lemmas will be used to show (in Theorem~\ref{thm:min_module_not_unique}) 
that a minimum-size module
over which a network is linearly solvable is not necessarily unique.
Lemma~\ref{lem:kn-vector} is a result of Sun et. al \cite{Sun-VL}.

\begin{lemma}{\cite[Proposition 1, p. 4513]{Sun-VL}}
  Let $q$ be a prime power and $k$ a positive integer.
  If a network has a scalar linear solution over $\GF{q^k}$,
  then it has a $k$-dimensional vector linear solution over $\GF{q}$.
  \label{lem:kn-vector}
\end{lemma}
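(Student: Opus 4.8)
The plan is to realize $\GF{q^k}$ as a subring of the matrix ring $M_k(\GF{q})$ via its regular representation and then invoke the homomorphism machinery already developed in Section~\ref{sec:basicmodules}. First I would recall that $\GF{q^k}$, viewed as a vector space over its subfield $\GF{q}$, has dimension $k$; fix a $\GF{q}$-basis $b_1,\dots,b_k$ of $\GF{q^k}$. For each $\alpha \in \GF{q^k}$, the map $L_\alpha \colon \GF{q^k} \to \GF{q^k}$ given by $L_\alpha(x) = \alpha x$ is $\GF{q}$-linear, hence is represented with respect to the chosen basis by a matrix $\phi(\alpha) \in M_k(\GF{q})$. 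The assignment $\alpha \mapsto \phi(\alpha)$ is additive because $L_{\alpha+\beta} = L_\alpha + L_\beta$, multiplicative because $L_{\alpha\beta} = L_\alpha \circ L_\beta$, and sends $1 \in \GF{q^k}$ to the identity matrix since $L_1$ is the identity map. Thus $\phi \colon \GF{q^k} \to M_k(\GF{q})$ is a (unital, and in fact injective) ring homomorphism.

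Next, I would apply Corollary~\ref{cor:ModHomomorphism} with $R = \GF{q^k}$ and $S = M_k(\GF{q})$: since there is a ring homomorphism $\phi \colon \GF{q^k} \to M_k(\GF{q})$, every network that is scalar linearly solvable over $\GF{q^k}$ is scalar linearly solvable over $M_k(\GF{q})$. Finally, I would apply Corollary~\ref{cor:same_ring} with the ring $\GF{q}$: scalar linear solvability over $M_k(\GF{q})$ is equivalent to $k$-dimensional vector linear solvability over $\GF{q}$. Chaining these two facts yields the claim. An alternative, coordinate-level argument would replace each field coefficient in the given scalar solution over $\GF{q^k}$ by its $k \times k$ matrix over $\GF{q}$ and replace each message/edge symbol by its coordinate vector in $\GF{q}^k$; but that is precisely what is packaged inside Lemmas~\ref{lem:same_ring} and \ref{lem:ModHomomorphism} and Corollary~\ref{cor:same_ring}, so the homomorphism route is cleaner.

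I expect no serious obstacle. The only point that needs a moment's care is confirming that the regular representation is a \emph{unital} ring homomorphism (the standard fact that multiplication in a field algebra is captured by composition of $\GF{q}$-linear maps) and checking that its direction is the one required by Corollary~\ref{cor:ModHomomorphism}, namely from $\GF{q^k}$ \emph{into} $M_k(\GF{q})$ rather than the reverse. Everything else is immediate from the lemmas already established.
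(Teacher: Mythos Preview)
Your argument is correct. The paper does not prove this lemma at all; it merely cites it from \cite{Sun-VL}. Your route via the regular representation $\phi:\GF{q^k}\hookrightarrow M_k(\GF{q})$, followed by Corollary~\ref{cor:ModHomomorphism} and Corollary~\ref{cor:same_ring}, is a clean self-contained proof using only machinery already established in Section~\ref{sec:basicmodules}. In fact this is arguably preferable to deferring to an external reference, since it shows the result is an immediate consequence of the paper's own framework; the only ingredient from outside is the standard fact that left multiplication by $\alpha\in\GF{q^k}$ is $\GF{q}$-linear and that $\alpha\mapsto L_\alpha$ is a unital ring homomorphism, which you verify explicitly.
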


For each integer $n \geq 3$,
the \textit{$n$-Choose-Two Network} is a multicast network that was described by
Rasala Lehman and Lehman \cite{Lehman-Complexity}
and further studied in our \PartOne{}
(see Figure~\PartOneNChooseTwoFigure{} in \cite{Connelly-Ring1}).
%
\begin{lemma}{\cite[p. 144]{Lehman-Complexity}}  
  Let $\A$ be a network alphabet and let integer $n \ge 3$. 
\begin{itemize}
    \item[(a)] If the $n$-Choose-Two Network has a solution over $\A$,
      then $|\A| \geq n - 1$.
    \item[(b)]
     Let $\A$ be a field. 
     The $n$-Choose-Two Network is linearly solvable over $\A$
      if and only if
      $|\A| \ge n - 1$.
  \end{itemize}
\label{lem:kchoosetwo_fields}
\end{lemma}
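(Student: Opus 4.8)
Recall that the $n$-Choose-Two Network is the multicast network in which a single source produces two messages $m_1,m_2$, the source (or an adjacent relay) has $n$ out-edges $e_1,\dots,e_n$ feeding relay nodes $v_1,\dots,v_n$, and there are $\binom{n}{2}$ receivers, where receiver $T_{\{i,j\}}$ is reached in the network only from $v_i$ and $v_j$ and demands both $m_1$ and $m_2$. The plan is to reduce both parts to a combinatorial statement about functions on $\A^2$. In any code over $\A$, since $m_1,m_2$ are the only messages, edge $e_i$ carries $g_i(m_1,m_2)$ for some $g_i\colon\A^2\to\A$, and, because $v_i$'s only input is $e_i$, the symbol that $T_{\{i,j\}}$ obtains along the path from $v_i$ is $\varphi_{ij}(g_i(m_1,m_2))$ for some $\varphi_{ij}\colon\A\to\A$. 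For $T_{\{i,j\}}$ to recover $(m_1,m_2)$ from its two received symbols, the map $\Phi_{ij}\colon (m_1,m_2)\mapsto \bigl(\varphi_{ij}(g_i(m_1,m_2)),\,\varphi_{ji}(g_j(m_1,m_2))\bigr)$ must be a bijection of $\A^2$. Since $\Phi_{ij}$ factors as $(\varphi_{ij}\times\varphi_{ji})\circ(g_i,g_j)$ and is surjective, each $\varphi_{ij}$ is surjective, hence a bijection of the finite set $\A$; therefore $(g_i,g_j)=(\varphi_{ij}\times\varphi_{ji})^{-1}\circ\Phi_{ij}$ is itself a bijection $\A^2\to\A^2$. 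So a solution over $\A$ yields functions $g_1,\dots,g_n\colon\A^2\to\A$ that are \emph{pairwise orthogonal} (each pair $(g_i,g_j)$ a bijection onto $\A^2$), and conversely any such family gives a solution by letting $e_i$ carry $g_i(m_1,m_2)$ and forwarding it unchanged to the receivers.

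For part~(a), write $q=|\A|$ and suppose $g_1,\dots,g_n$ are pairwise orthogonal with $n\ge3$. First, each $g_i$ is balanced: fixing some $j\ne i$, the restriction of $g_j$ to a fiber $g_i^{-1}(c)$ is injective (as $(g_i,g_j)$ is injective), so $|g_i^{-1}(c)|\le q$ for all $c\in\A$, and summing over $c$ forces $|g_i^{-1}(c)|=q$. Now fix any $p_0\in\A^2$ and set $c_i=g_i(p_0)$, so $p_0\in g_i^{-1}(c_i)$ for every $i$; by bijectivity of $(g_i,g_j)$, $g_i^{-1}(c_i)\cap g_j^{-1}(c_j)=\{p_0\}$ whenever $i\ne j$. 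Hence the sets $g_i^{-1}(c_i)\setminus\{p_0\}$, $i=1,\dots,n$, are pairwise disjoint subsets of $\A^2\setminus\{p_0\}$, each of size $q-1$, so $n(q-1)\le q^2-1=(q-1)(q+1)$ and therefore $n\le q+1$, i.e., $|\A|\ge n-1$.

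For part~(b), the ``only if'' direction is immediate from part~(a), since a linear solution is in particular a solution. For the ``if'' direction, let $\A=\GF{q}$ with $q\ge n-1$; then $\GF{q}^{\,2}$ has $q+1\ge n$ distinct one-dimensional subspaces, so we may pick vectors $\mathbf{v}_1,\dots,\mathbf{v}_n\in\GF{q}^{\,2}$ spanning $n$ of them, which are then pairwise linearly independent. Let $e_i$ carry the scalar $\mathbf{v}_i\cdot(m_1,m_2)^{\mathsf{T}}$ and let every downstream edge forward its input; then $T_{\{i,j\}}$ receives $\mathbf{v}_i\cdot(m_1,m_2)^{\mathsf{T}}$ and $\mathbf{v}_j\cdot(m_1,m_2)^{\mathsf{T}}$, and since $\mathbf{v}_i,\mathbf{v}_j$ are linearly independent it inverts the corresponding invertible $2\times2$ matrix to recover $m_1,m_2$. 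All edge and decoding functions are $\GF{q}$-linear, so this is a linear solution; equivalently, the $\mathbf{v}_i$ form the columns of a generator matrix of an $[n,2]$ MDS code over $\GF{q}$, which exists precisely when $q\ge n-1$.

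The one delicate point is the reduction in the first paragraph: in an arbitrary, possibly nonlinear, solution one must carefully track that every relevant edge symbol is a function of $(m_1,m_2)$ alone and that the short connecting functions $\varphi_{ij}$ are forced to be bijections, so that the problem genuinely collapses to the existence of $n$ pairwise orthogonal functions $\A^2\to\A$. Once that is set up, part~(a) is the short counting argument above and part~(b) is the explicit construction plus an appeal to~(a). (As usual $|\A|\ge2$ is assumed throughout; the bound in~(a) is vacuous for a one-element alphabet.)
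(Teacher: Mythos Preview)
The paper does not prove this lemma at all: it is stated as a citation to Rasala Lehman and Lehman \cite{Lehman-Complexity} and used as a black box. Your proposal therefore supplies a proof where the paper has none.

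Your argument is correct. The reduction in the first paragraph to a family of pairwise orthogonal maps $g_1,\dots,g_n\colon\A^2\to\A$ is exactly the right abstraction of the network, and your justification that the intermediate maps $\varphi_{ij}$ must be bijections (hence can be absorbed) is sound. The counting in part~(a) is the standard ``lines through a point'' argument: the fibers $g_i^{-1}(c_i)$ are $q$-sets meeting pairwise in the single point $p_0$, so the $n$ sets $g_i^{-1}(c_i)\setminus\{p_0\}$ of size $q-1$ pack into $\A^2\setminus\{p_0\}$, giving $n\le q+1$. (This is essentially the bound $n-2\le q-1$ on mutually orthogonal Latin squares, rephrased.) Part~(b) is the classical MDS/evaluation construction: $q+1$ one-dimensional subspaces of $\GF{q}^2$ give $q+1\ge n$ linear forms any two of which are independent, and decoding is inversion of a $2\times 2$ invertible matrix. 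The parenthetical caveat $|\A|\ge 2$ is needed exactly where you note it, to divide by $q-1$.
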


\begin{theorem}
  For each integer $k \geq 2$ and prime $p$,
  the $(p^k+1)$-Choose-Two Network is linearly solvable over at least two distinct modules 
  of size $p^k$
  but not over over any smaller modules.
  \label{thm:min_module_not_unique}
\end{theorem}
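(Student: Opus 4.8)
The plan is to prove the two halves of the statement separately: first that the $(p^k+1)$-Choose-Two Network is linearly solvable over at least two distinct modules of size $p^k$, and then that it is not linearly solvable over any module of size less than $p^k$.

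For the positive direction, I would exhibit two concrete modules of size $p^k$ over which the network has a linear solution. The natural first candidate is $\GF{p^k}$ acting on itself, i.e. $\Module{\GF{p^k}}{\GF{p^k}}$: since $|\GF{p^k}| = p^k \ge (p^k+1) - 1$, Lemma~\ref{lem:kchoosetwo_fields}(b) gives a scalar linear solution over the field $\GF{p^k}$, which is a linear solution over this module. For a second, non-isomorphic module of the same size, I would use a vector linear solution over $\GF{p}$: by Lemma~\ref{lem:kn-vector}, the scalar linear solution over $\GF{p^k}$ yields a $k$-dimensional vector linear solution over $\GF{p}$, which is a linear solution over the module $\Module{\GF{p}^k}{M_k(\GF{p})}$, and this module also has size $|\GF{p}|^k = p^k$. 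To finish this direction I must check these two modules are genuinely distinct — they have the same underlying abelian group $(\Z_p)^k$, but their rings of scalars ($\GF{p^k}$ versus $M_k(\GF{p})$, the latter being non-commutative for $k \ge 2$) are non-isomorphic, so they are different modules. One should state precisely in what sense ``distinct'' is meant; presumably it is that the ring–group pairs are not isomorphic as modules.

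For the negative direction — no smaller module works — I would argue by contradiction. Suppose the network is linearly solvable over some module $G$ with $|G| < p^k$. A linear solution is in particular a solution, so by Lemma~\ref{lem:kchoosetwo_fields}(a) we get $|G| \ge (p^k+1) - 1 = p^k$, a contradiction. This is the short and clean route, so the ``not over any smaller module'' clause should fall out immediately from the combinatorial lower bound of Lemma~\ref{lem:kchoosetwo_fields}(a) applied to the module's alphabet size $|G|$.

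The main obstacle, such as it is, lies entirely in the positive direction: making sure the two exhibited modules really are non-isomorphic and that this is what the theorem's ``distinct'' is asking for. The rings $\GF{p^k}$ and $M_k(\GF{p})$ are both simple of size $p^k$ (by Lemma~\ref{lem:simple_rings}) but one is commutative and the other is not when $k \ge 2$, so they cannot be isomorphic as rings; hence the modules $\Module{\GF{p^k}}{\GF{p^k}}$ and $\Module{\GF{p}^k}{M_k(\GF{p})}$ cannot be isomorphic as modules. For $k \ge 2$ and $p$ prime this is exactly the hypothesis range of the theorem, so the argument goes through; I would just need to be careful to record that for $k \ge 2$ the matrix ring $M_k(\GF{p})$ is indeed non-commutative (e.g. the elementary matrices $E_{11}$ and $E_{12}$ do not commute), which settles non-isomorphism and hence distinctness.
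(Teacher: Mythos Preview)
Your proposal is correct and follows essentially the same approach as the paper: use Lemma~\ref{lem:kchoosetwo_fields}(b) for the scalar solution over $\GF{p^k}$, Lemma~\ref{lem:kn-vector} for the $k$-dimensional vector solution over $\GF{p}$, and Lemma~\ref{lem:kchoosetwo_fields}(a) for the lower bound on alphabet size. Your explicit verification that the two modules are distinct (via non-isomorphism of their scalar rings) is more careful than the paper, which leaves this implicit; note only the minor slip that $M_k(\GF{p})$ has size $p^{k^2}$, not $p^k$, though this does not affect your distinctness argument, which correctly rests on commutativity.
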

\begin{proof}
  By Lemma~\ref{lem:kchoosetwo_fields},
  the $(p^k+1)$-Choose-Two Network
  is scalar linearly solvable over $\GF{p^k}$
  and is not solvable over any alphabet whose size is less than $p^k$.
  By Lemma~\ref{lem:kn-vector},
  any network with a scalar linear solution over $\GF{p^k}$
  has a $k$-dimensional vector linear solution over $\GF{p}$.
  Hence the $(p^k+1)$-Choose-Two Network has a scalar linear solution over $\GF{p^k}$
  and a $k$-dimensional vector linear solution over $\GF{p}$,
  yet the network has no linear solution over any module whose size is less than $p^k$.
\end{proof}

The following corollary generalizes Theorem \PartOneTheoremFieldBeatsRing{} from \PartOne,
which showed the $(p^k+1)$-Choose-Two Network is not scalar linearly solvable
over any commutative ring of size $p^k$ other than the field $\GF{p^k}$.
In fact, as a result of Corollary~\ref{cor:choose-two-ring}, the $(p^k+1)$-Choose-Two Network
is not scalar linearly solvable over any ring of size $p^k$ other than the field.
\begin{corollary}
  For each integer $k \geq 2$ and prime $p$,
  the unique smallest-size ring over which the $(p^k+1)$-Choose-Two Network is scalar linearly solvable 
  is $\GF{p^k}$.
  \label{cor:choose-two-ring}
\end{corollary}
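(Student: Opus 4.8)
The plan is to combine the known size lower bound for the $(p^k+1)$-Choose-Two Network with the structural results on smallest rings established above. First I would invoke Lemma~\ref{lem:kchoosetwo_fields}(a): since the $(p^k+1)$-Choose-Two Network has a (linear) solution over $\GF{p^k}$, and any alphabet over which it is solvable must have size at least $(p^k+1)-1 = p^k$, the smallest alphabet size for \emph{any} solution (linear or not) is exactly $p^k$. By Lemma~\ref{lem:kchoosetwo_fields}(b), the network is indeed linearly solvable over the field $\GF{p^k}$, so the minimum size of a ring over which it is scalar linearly solvable is precisely $p^k$, and $\GF{p^k}$ is one such ring.

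Next I would apply Theorem~\ref{thm:R_dom_by_simple}: any smallest-size ring $R$ over which the network is scalar linearly solvable must be a matrix ring over a field, say $R \cong M_m(\F)$ for some finite field $\F$ and positive integer $m$, with $|\F|^{m^2} = p^k$. It remains to rule out $m \geq 2$. If $m \geq 2$, then by Corollary~\ref{cor:same_ring} the network would have an $m$-dimensional vector linear solution over $\F$; but a vector linear code over $\F$ has alphabet size $|\F|^m < |\F|^{m^2} = p^k$ (strict inequality since $m \geq 2$ forces $m < m^2$). This contradicts the fact, from the previous paragraph, that the network has no solution over any alphabet of size less than $p^k$. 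Hence $m = 1$, so $R$ is a field of size $p^k$, and there is only one such field up to isomorphism, namely $\GF{p^k}$.

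The main obstacle — really the only subtlety — is making sure the alphabet-size lower bound from Lemma~\ref{lem:kchoosetwo_fields}(a) is being applied to the correct object: it constrains the size of the \emph{underlying set} of symbols, i.e. the Abelian group $G$ in a linear code over $\Module{G}{R}$ and in particular $|R|$ for a scalar linear code over $R$. Since a scalar linear solution over a ring $R$ is in particular a solution over the alphabet $R$, the bound $|R| \geq p^k$ applies directly, and an $m$-dimensional vector linear solution over $\F$ is a solution over the alphabet $\F^m$ of size $|\F|^m$, so the comparison $|\F|^m < p^k$ is exactly what part (a) forbids. Everything else is bookkeeping: tracing through Theorem~\ref{thm:R_dom_by_simple} and Corollary~\ref{cor:same_ring}, and noting uniqueness of finite fields of a given order.
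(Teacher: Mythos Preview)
Your proof is correct and follows essentially the same route as the paper's: both use Lemma~\ref{lem:kchoosetwo_fields} to pin the minimum alphabet size at $p^k$, pass to a matrix ring $M_m(\F)$, convert via Corollary~\ref{cor:same_ring} to an $m$-dimensional vector linear solution over $\F$, and then compare $|\F|^m$ against $p^k$ to force $m=1$. The only cosmetic difference is that you invoke Theorem~\ref{thm:R_dom_by_simple} as a black box (so that a smallest ring is already of the form $M_m(\F)$ with $|\F|^{m^2}=p^k$), whereas the paper works directly from Lemmas~\ref{lem:simple_rings} and~\ref{lem:hom_to_simple} to get a surjection $R\to M_n(\F)$ and the inequality $|\F|^{n^2}\le p^k$; the resulting squeeze $|\F|^n \ge p^k \ge |\F|^{n^2}$ is the same in both arguments.
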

\begin{proof}
  By Lemma~\ref{lem:kchoosetwo_fields},
  the $(p^k+1)$-Choose-Two Network
  is scalar linearly solvable over $\GF{p^k}$
  and is not solvable over any smaller alphabet.

  Suppose the $(p^k+1)$-Choose-Two Network is scalar linearly solvable over a ring $R$
  of size $p^k$.
  By Lemmas~\ref{lem:simple_rings} and \ref{lem:hom_to_simple} (a) (b),
  there exists a field $\F$, a positive integer $n$,
  and a surjective homomorphism $\phi:R \to M_n(\F)$ such that
  the $(p^k+1)$-Choose-Two Network is scalar linearly solvable over the ring $M_n(\F)$.
  Since $\phi$ is surjective, 
  $p^k \geq |\F|^{n^2}$.
  By Corollary~\ref{cor:same_ring},
  the $(p^k + 1)$-Choose-Two Network has an $n$-dimensional vector linear solution over $\F$,
  so by Lemma~\ref{lem:kchoosetwo_fields} (a),
  $|\F|^n \geq p^k$.

  Hence $|\F|^{n} \geq p^k \geq |\F|^{n^2}$,
  which implies $n = 1$ and $\F = \GF{p^k}$.
  Since $\phi:R \to \F$ is a surjective homomorphism
  and $|\F| = |R|$,
  we have $R \cong \GF{p^k}$.
\end{proof}

The following corollaries summarize our results on the linear solvability of networks
using scalar and linear vector codes over fields,
scalar linear codes over rings,
and linear codes over modules.
Corollary~\ref{cor:general_vector} shows an equivalence between vector linear solvability over fields
and linear solvability over rings and modules,
while Corollary~\ref{cor:general_scalar} shows an equivalence between scalar linear solvability over fields
and linear solvability over commutative rings and modules.

\begin{corollary}
  For any network $\Network$,
  the following three statements are equivalent:
  \begin{itemize}
    \item[(i)] $\Network$ is vector linearly solvable over some finite field.
    \item[(ii)] $\Network$ is scalar linearly solvable over some ring.
    \item[(iii)] $\Network$ is linearly solvable over some module.
  \end{itemize}
  \label{cor:general_vector} 
\end{corollary}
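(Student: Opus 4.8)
The plan is to prove the three equivalences cyclically: (iii) $\Rightarrow$ (i) $\Rightarrow$ (ii) $\Rightarrow$ (iii). Since (i) is the special case of (iii) where the module is $\Module{R^k}{M_k(R)}$ for a field $R = \F$, the implication (i) $\Rightarrow$ (iii) is immediate; similarly, a scalar linear code over a ring $R$ is a linear code over the module $\Module{R}{R}$, so (ii) $\Rightarrow$ (iii) is immediate as well. Thus it suffices to establish (iii) $\Rightarrow$ (i) and (i) $\Rightarrow$ (ii).

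\medskip

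First I would handle (iii) $\Rightarrow$ (i). Suppose $\Network$ is linearly solvable over an $R$-module $G$. By Lemma~\ref{lem:faithful_module} (or Corollary~\ref{cor:infinite_ring} if $R$ is infinite), we may pass to a finite ring $S$ so that $G$ is a faithful $S$-module and $\Network$ remains linearly solvable over $\Module{G}{S}$. Now apply Theorem~\ref{thm:min_module} (a): there is a finite field $\F$ and a positive integer $k$ such that $\Network$ has a $k$-dimensional vector linear solution over $\F$. This is exactly statement (i). In fact this is the bulk of the work, but it is entirely assembled from results already proved in the excerpt, so the argument is short.

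\medskip

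Next I would handle (i) $\Rightarrow$ (ii). Suppose $\Network$ has a $k$-dimensional vector linear solution over a finite field $\F$, i.e., a linear solution over $\Module{\F^k}{M_k(\F)}$. By Corollary~\ref{cor:same_ring}, $\Network$ is then scalar linearly solvable over the ring $M_k(\F)$. Since $M_k(\F)$ is a (finite) ring, this establishes (ii). Combining the two nontrivial implications with the two trivial ones closes the cycle, so all three statements are equivalent.

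\medskip

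I do not anticipate a genuine obstacle here, since every ingredient — the reduction to a faithful finite-ring module, the structure-theoretic passage to a matrix ring over a field via the Artin--Wedderburn-based Lemmas~\ref{lem:simple_rings} and \ref{lem:hom_to_simple}, and the matrix-ring/vector-code correspondence of Corollary~\ref{cor:same_ring} — has already been developed. The only point requiring a little care is making sure the direction of each implication in the cycle matches the hypotheses of the lemma being invoked (e.g., Theorem~\ref{thm:min_module} goes from module-solvability to vector-field-solvability, not the reverse), and noting explicitly that scalar/vector codes over rings are themselves instances of module codes so that (i) and (ii) each trivially imply (iii).
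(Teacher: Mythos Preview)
Your proposal is correct and essentially matches the paper's proof: the paper also uses Corollary~\ref{cor:same_ring} for (i) $\Rightarrow$ (ii), the trivial ``special case'' observation for (ii) $\Rightarrow$ (iii), and Theorem~\ref{thm:min_module}(a) for (iii) $\Rightarrow$ (i). Your extra invocation of Lemma~\ref{lem:faithful_module}/Corollary~\ref{cor:infinite_ring} before applying Theorem~\ref{thm:min_module} is harmless but unnecessary, since that reduction is already carried out inside the proof of Theorem~\ref{thm:min_module}.
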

\begin{proof}
  If a network has a $k$-dimensional vector linear solution over some field $\F$,
  then by Corollary~\ref{cor:same_ring} it has a scalar linear solution over the ring $M_k(\F)$,
  hence (i) implies (ii).
  A scalar linear code over a ring is a special case of a linear code over a module,
  so (ii) implies (iii).
  By Theorem~\ref{thm:min_module} (a),
  (iii) implies (i).
\end{proof}

%
\begin{corollary}
  For any network $\Network$,
  the following three statements are equivalent:
  \begin{itemize}
    \item[(i)] $\Network$ is scalar linearly solvable over some finite field.
    \item[(ii)] $\Network$ is scalar linearly solvable over some commutative ring.
    \item[(iii)] $\Network$ is linearly solvable over some module whose ring is commutative.
    \label{cor:general_scalar}
  \end{itemize}
\end{corollary}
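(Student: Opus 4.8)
The plan is to prove the three equivalences by a short cycle of implications, mirroring the structure of the proof of Corollary~\ref{cor:general_vector} but tracking commutativity throughout. First I would show (i) $\Rightarrow$ (ii): if $\Network$ is scalar linearly solvable over a finite field $\F$, then $\F$ is in particular a commutative ring, so this implication is immediate. Next, (ii) $\Rightarrow$ (iii): a scalar linear code over a commutative ring $R$ is by definition a linear code over the module $\Module{R}{R}$, and the ring of this module is $R$, which is commutative; so this implication is also immediate.

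The substantive implication is (iii) $\Rightarrow$ (i). Suppose $\Network$ is linearly solvable over some module $\Module{G}{R}$ with $R$ commutative. Here I would simply invoke Theorem~\ref{thm:min_module}(b): it states precisely that if a network is linearly solvable over an $R$-module $G$ with $R$ commutative, then there is a finite field $\F$ over which $\Network$ has a scalar linear solution (with $|\F|$ dividing $|G|$, though the divisibility is not needed here). That gives (i) and closes the cycle.

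I do not anticipate a serious obstacle, since all the work has been pushed into Theorem~\ref{thm:min_module}(b); the only thing to be careful about is the bookkeeping on commutativity at each step — namely, that a field is a commutative ring, that $\Module{R}{R}$ has $R$ as its ring of scalars so that "commutative ring" in (ii) matches "module whose ring is commutative" in (iii), and that Theorem~\ref{thm:min_module}(b) is stated with exactly the hypothesis "$R$ commutative." If a reader wanted more detail on why (iii) $\Rightarrow$ (i) holds, one could note that Theorem~\ref{thm:min_module}(b) itself reduces (via Corollary~\ref{cor:infinite_ring} and Lemma~\ref{lem:faithful_module}) to the case of a finite commutative ring acting faithfully, then applies the Artin–Wedderburn structure (Lemma~\ref{lem:simple_rings}) together with Lemma~\ref{lem:hom_to_simple}(a) to obtain a surjective homomorphism onto a matrix ring over a field, which must be $1\times 1$ by commutativity — but for the corollary itself a one-line appeal to the theorem suffices.
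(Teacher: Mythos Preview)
Your proposal is correct and matches the paper's proof essentially line for line: the paper also proves (i)$\Rightarrow$(ii) and (ii)$\Rightarrow$(iii) by noting that each is a special case of the next, and then invokes Theorem~\ref{thm:min_module}(b) for (iii)$\Rightarrow$(i). The extra unpacking you offer of Theorem~\ref{thm:min_module}(b) is accurate but, as you note, unnecessary for the corollary itself.
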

\begin{proof}
  A scalar linear code over a finite field is a special case of a scalar linear code over a commutative ring,
  hence (i) implies (ii).
  A scalar linear code over a commutative ring is a special case of a linear code over a module
  where the ring is commutative,
  so (ii) implies (iii).
  By Theorem~\ref{thm:min_module} (b),
  (iii) implies (i).
\end{proof}

We summarize our results on minimizing the alphabet size in linear network coding by:
\begin{itemize}
  \item If a network is scalar linearly solvable over some commutative ring,
  then the (unique) smallest such commutative ring is a field \cite[Theorem \PartOneTheoremSmallerField]{Connelly-Ring1}.

  \item If a network is scalar linearly solvable over some ring,
  then a smallest such ring is a matrix ring over field (Theorem~\ref{thm:R_dom_by_simple}).
  It is not known whether such a smallest ring is unique.
  
  \item If a network is linearly solvable over some module,
  then a smallest such module yields
  a vector linear solution over a field (Theorem~\ref{thm:min_module}).
  Such a module may not be unique (Theorem~\ref{thm:min_module_not_unique}).
\end{itemize}

\clearpage
\section{The Dim-$n$ Network} \label{sec:Gen_M}

\psfrag{x1}{$\xx{1}{1},\dots,\xx{1}{n}$}
\psfrag{xn}{$\xx{n}{1},\dots,\xx{n}{n}$}
\psfrag{e1}{$\ww{1}{1},\dots,\ww{1}{n-1}$}
\psfrag{en}{$\ww{n}{1},\dots,\ww{n}{n-1}$}
\psfrag{f1}{$\ww{1}{n}$}
\psfrag{fn}{$\ww{n}{n}$}
\psfrag{g1}{$u_1$}
\psfrag{gnn}{$u_{n^n}$}
\psfrag{a1}{$a_1$}
\psfrag{an}{$a_{n}$}
\psfrag{b1}{$b_1$}
\psfrag{bn}{$b_{n}$}
\psfrag{Z}{$Z$}
\psfrag{R1}{$R_1$}
\psfrag{Rnn}{$R_{n^n}$}
\psfrag{S1}{$\xx{1}{1},\dots,\xx{n}{1}$}
\psfrag{Snn}{$\xx{1}{n},\dots,\xx{n}{n}$}
\psfrag{e11}{$\ww{1}{1}$}
\psfrag{e1n}{$\ww{1}{n-1}$}
\psfrag{en1}{$\ww{n}{1}$}
\psfrag{enn}{$\ww{n}{n-1}$}
\psfrag{n-1}{$n-1$}
\begin{figure}[h]
  \begin{center}
    \leavevmode
    \hbox{\epsfxsize=.65\textwidth\epsffile{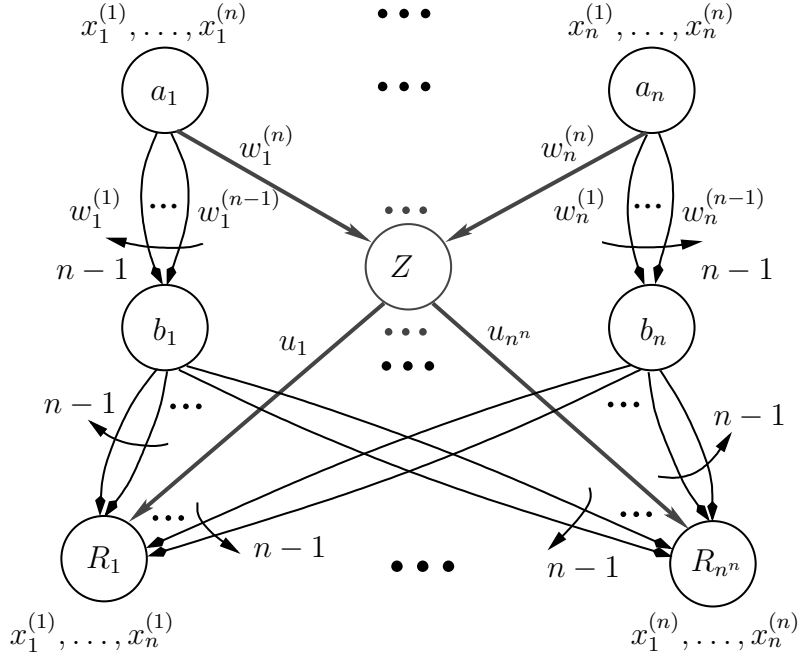}}
  \end{center}
  \caption{The Dim-$n$ network.
  For each $i = 1,\dots,n$, 
  the node $a_i$ is a source node that generates messages $\xx{i}{1},\dots,\xx{i}{n}$,
  and $a_i$ has $n-1$ parallel out-edges to node $b_i$
  and one out-edge to node $Z$.
  For each $j = 1,\dots,n^n$,
  the receiver $R_j$ has $n-1$ parallel in-edges from each of the nodes $b_1,\dots,b_n$
  and a single in-edge from node $Z$.
  Each receiver demands a single message from each source node
  and each set of $n$ messages demanded by each receiver is unique;
  that is, for any  $i_1, \dots, i_n \in \{1, \dots, n\}$,
  there is exactly one receiver which demands $\xx{1}{i_1},\dots,\xx{n}{i_n}$.
  }
\label{fig:Dim-n}
\end{figure}

For each integer $n \geq 2$,
the \textit{Dim-$n$ Network} is defined in Figure~\ref{fig:Dim-n}
and is referred to as such because
it has vector linear solutions precisely 
over vector dimensions that are multiples of $n$.
We prove this fact in Theorem~\ref{thm:dim-n_vector}.
This infinite family of networks will be used to demonstrate
several theorems related to commutative and non-commutative rings.
The special case of $n = 2$ corresponds to the \textit{M Network}
of \cite{Medard-NonMulticast},
shown later in Figure~\ref{fig:M-network}.

\begin{remark}
  The Dim-$n$ Network has $n^n + 2n + 1$ nodes
  and $n^n ( n^2 - n + 1) + n^2$ edges.
  %
  \label{rem:dim-n_size}
\end{remark}

\begin{lemma}
  For each integer $n \geq 2$ and alphabet $\A$,
  the Dim-$n$ Network has an $n$-dimensional vector routing solution over $\A$.
  \label{lem:dim-n_rout}
\end{lemma}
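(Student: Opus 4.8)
The plan is to exhibit an explicit $n$-dimensional vector routing code over $\A$ and verify it is a solution. Working over $\A^n$, each source $a_i$ generates $n$ messages $\xx{i}{1},\dots,\xx{i}{n}$, each a vector in $\A^n$; think of the vector component index $\ell \in \{1,\dots,n\}$ as a ``color.'' The idea is to route, on the parallel edges out of $a_i$ to $b_i$, all of the messages of $a_i$ \emph{except} for one designated component of each, and to route the missing components through node $Z$. Concretely: $a_i$ has $n-1$ parallel out-edges $\ww{i}{1},\dots,\ww{i}{n-1}$ to $b_i$ and one out-edge $\ww{i}{n}$ to $Z$. I would set $\ww{i}{n}$ to carry, in its $\ell$-th component, the $\ell$-th component of $\xx{i}{\ell}$ for each $\ell=1,\dots,n$ — that is, the ``diagonal'' entries, one coordinate taken from each of the $n$ messages of source $a_i$. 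The remaining $n-1$ edges $\ww{i}{1},\dots,\ww{i}{n-1}$ to $b_i$ then carry all the off-diagonal coordinates: message $\xx{i}{j}$ contributes its components with index $\ell \ne j$ (that is $n-1$ scalars per message, $n$ messages, giving $n(n-1)$ scalars total, exactly filling $n-1$ vectors of dimension $n$), distributed across these $n-1$ edges in any fixed injective way.

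Next I would describe the forwarding at $b_i$ and at $Z$, and the decoding. Node $b_i$ simply forwards each of its $n-1$ incoming edges $\ww{i}{1},\dots,\ww{i}{n-1}$ onward: it has $n-1$ parallel out-edges to each receiver $R_j$, so $R_j$ receives from $b_i$ exactly the vectors $\ww{i}{1},\dots,\ww{i}{n-1}$, i.e. all off-diagonal coordinates of all messages of $a_i$. Node $Z$ receives the $n$ diagonal-vectors $\ww{1}{n},\dots,\ww{n}{n}$ (one from each source) and must send a single vector in $\A^n$ to each receiver $R_j$. Here I use the structure of the receivers: $R_j$ demands $\xx{1}{i_1},\dots,\xx{n}{i_n}$ for some tuple $(i_1,\dots,i_n)$, and by the defining property of the Dim-$n$ Network these tuples range over all of $\{1,\dots,n\}^n$ bijectively. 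For $R_j$, the only coordinates of its demanded messages \emph{not} already delivered via the $b$-nodes are the diagonal ones: the $i_t$-th component of $\xx{t}{i_t}$, for $t=1,\dots,n$. But the $i_t$-th component of $\xx{t}{i_t}$ is exactly the $i_t$-th component of the vector $\ww{t}{n}$ that $Z$ holds. So $Z$ should send to $R_j$ the vector whose $t$-th component is the $i_t$-th component of $\ww{t}{n}$; since $Z$ holds all of $\ww{1}{n},\dots,\ww{n}{n}$, this is a legitimate routing function (each output coordinate is a copy of one input coordinate). Then $R_j$ reconstructs each demanded message $\xx{t}{i_t}$ by combining its off-diagonal coordinates (from $b_t$) with its single diagonal coordinate (from $Z$), so all demands are met.

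Finally I would check that this code genuinely satisfies the routing-code constraints stated after Corollary~\ref{cor:same_ring}: every out-edge function copies specified input coordinates to specified output coordinates, with no coordinate of an output vector receiving contributions from two different input edges, which holds by the injective assignments above. The main thing to get right — and the only real obstacle — is the bookkeeping at $Z$: one must confirm that across all $n^n$ receivers, the single diagonal coordinate each receiver still needs for message $\xx{t}{i_t}$ is always available among the $n$ diagonal-vectors $Z$ holds, and that $Z$'s out-edge to a given receiver, being a single $\A^n$-vector, has just enough room (exactly $n$ scalars) to carry the $n$ needed diagonal coordinates, one per source. This is precisely where the ``each tuple occurs exactly once'' hypothesis is used, and it matches the edge-count in Remark~\ref{rem:dim-n_size}; once this is verified the lemma follows.
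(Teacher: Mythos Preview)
Your proposal is correct and follows essentially the same approach as the paper: construct an explicit $n$-dimensional routing code in which each source $a_i$ sends one designated component of each of its messages to $Z$ and the remaining $n-1$ components to $b_i$, so that every receiver obtains all but one component of each demanded message from the $b$-nodes and the missing component from $Z$. The only difference is bookkeeping: the paper uniformly designates the $n$th component of every message as the one sent to $Z$ (setting $[\ww{i}{j}]_k = [\xx{i}{k}]_j$), which is slightly cleaner than your diagonal choice but otherwise equivalent.
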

\begin{proof}
  Each message and edge symbol is an element of $\A^n$.
  Let $[x]_i$ denote the $i$th component of $x \in \A^n$.
  Define an $n$-dimensional routing code over $\A$ by
  \begin{align*}
    \left[ \ww{i}{j} \right]_k &= \left[ \xx{i}{k} \right]_j 
          & & (i,j,k = 1,\dots,n).
  \end{align*}
  That is, the $k$th component of the $j$th out-edge of the $i$th source node
  carries the $j$th component of the $k$th message originating at the $i$th source node.

  For each $i = 1,\dots,n$ and each $j = 1,\dots,n^n$,
  let the set of $(n-1)$ parallel edges from node $b_i$ to receiver $R_j$ carry the symbols
  $\ww{i}{1},\dots,\ww{i}{n-1}$.
  Then each receiver gets the first $(n-1)$ components of every message from the edges originating at $b_1,\dots,b_n$,
  so in particular,
  each receiver can recover the first $(n-1)$ components of each of the messages it demands.

  Node $Z$ receives the $n$th component of each message,
  so each of its out-edges can carry any $n$ of these components.
  Let $j  \in \{1,\dots,n^n \}$,
  suppose $\xx{1}{i_1},\dots,\xx{n}{i_n}$ are the messages receiver $R_j$ demands,
  and let
  \begin{align*}
    \left[ u_j \right ]_k 
      &= \left[ \ww{k}{n}   \right]_{i_k}
      = \left[ \xx{k}{i_k} \right]_n 
        & & (k = 1,\dots,n).
  \end{align*}
  Then $R_j$ can recover the $n$th component of each of the messages it demands.
  Since $j$ was chosen arbitrarily,
  the code is an $n$-dimensional vector routing solution.
\end{proof}

The following lemmas will be used in later proofs.
\begin{lemma}
  Let $R$ be a finite ring and let $k_1,\dots,k_n$ be positive integers.
  If a network has $k_1,\dots,k_n$-dimensional vector linear solutions over $R$,
  then the network has a $(k_1+\cdots+k_n)$-dimensional vector linear solution over $R$.
  \label{lem:dim_sum}
\end{lemma}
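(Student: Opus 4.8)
The plan is to prove this by taking a suitable "direct sum" of the given vector linear solutions. Suppose the network has a $k_i$-dimensional vector linear solution over $R$ for each $i = 1,\dots,n$. Setting $k = k_1 + \cdots + k_n$, I want to build a $k$-dimensional vector linear solution. The natural idea: run all $n$ solutions in parallel, using disjoint blocks of the $k$ message/edge components for each solution. Concretely, partition the index set $\{1,\dots,k\}$ into consecutive blocks $B_1,\dots,B_n$ with $|B_i| = k_i$, and have the $B_i$-components of every message and edge carry exactly the symbols that the $i$th solution would assign. Each $k_i$-dimensional edge function is given by $k_i \times k_i$ matrices over $R$ acting on the inputs; placing these as diagonal blocks produces a $k \times k$ block-diagonal matrix over $R$ (with the $i$th diagonal block being the matrix from the $i$th solution and all off-diagonal blocks zero). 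This defines a legitimate $k$-dimensional vector linear code over $R$, i.e. a linear code over the module ${}_{M_k(R)} R^k$.

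First I would make precise that this block-diagonal construction yields valid edge functions of the form required by \eqref{eq:a} for the module ${}_{M_k(R)} R^k$ — this is immediate since block-diagonal matrices are elements of $M_k(R)$. Next, I would verify that each receiver can decode. In the $i$th original solution, each receiver recovers its demanded messages by applying $k_i \times k_i$ matrices to its incoming edge symbols; in the combined code, applying the corresponding block-diagonal matrix recovers, in block $B_i$, exactly the $i$th solution's reconstruction, which for a message demanded by that receiver is the identity on the $B_i$-components. Since each of the $n$ solutions is a solution, every receiver simultaneously recovers all $k$ components of each demanded message. Hence the combined code is a $k$-dimensional vector linear solution over $R$.

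Alternatively — and perhaps cleanly — I could phrase the argument via Corollary~\ref{cor:same_ring} and a homomorphism: a $k_i$-dimensional vector linear solution over $R$ is a scalar linear solution over $M_{k_i}(R)$, and the block-diagonal embedding $M_{k_1}(R) \times \cdots \times M_{k_n}(R) \hookrightarrow M_k(R)$, composed with projections, relates the rings; but actually the direct statement I need goes the "wrong way" for Corollary~\ref{cor:ModHomomorphism} (we want solvability over the larger ring $M_k(R)$ from solvability over the $M_{k_i}(R)$), so the homomorphism route is awkward and the explicit parallel-construction argument above is the right one. The main obstacle is simply bookkeeping: being careful that the block-diagonal structure is preserved under composition along paths in the network (it is, since products and sums of block-diagonal matrices with the same block sizes are block-diagonal) and that the decoding matrices can likewise be taken block-diagonal. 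None of this is deep, but it must be stated cleanly to be a genuine proof rather than an assertion.
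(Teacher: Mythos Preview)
Your proposal is correct and follows essentially the same approach as the paper: run the $n$ solutions in parallel by placing the $k_i\times k_i$ coefficient matrices from each solution as diagonal blocks of a $(k_1+\cdots+k_n)\times(k_1+\cdots+k_n)$ block-diagonal matrix. Your write-up is in fact more explicit than the paper's, which simply displays the block-diagonal construction and asserts that it is straightforward to verify the result is a solution; your remarks on decoding and on why the homomorphism route via Corollary~\ref{cor:ModHomomorphism} is unhelpful here are accurate and add nothing incorrect.
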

\begin{proof}
  Assume a network has a $k_i$-dimensional vector linear solution over $R$ for each $i = 1,\dots,n$.
  In the $k_i$-dimensional vector linear solution over $R$,
  every edge function is of the form
  $$
    y^{(i)} = M_1^{(i)} \xx{1}{i} + \cdots + M_m^{(i)} \xx{m}{i}
  $$
  where $\xx{j}{i} \in R^{k_i}$ are the inputs to the node
  and $M_j^{(i)}$ are $k_i \times k_i$ matrices over $R$.
  For any such edge function,
  define a $(k_1+\cdots+k_n)$-dimensional vector linear edge function over $R$ by letting 
  \begin{align*}
     \left[ \begin{array}{c}
        y^{(1)} \\
        \vdots \\
        y^{(n)}
      \end{array}\right] 
    &= \sum_{j = 1}^m 
      \left[ \begin{array}{ccl}
        M_j^{(1)} & & \text{\LARGE0}  \\ 
        & \ddots  \\
        \text{\LARGE0} & & M_j^{(n)}
      \end{array} \right]
      \,
      \left[ \begin{array}{c}
        \xx{j}{1} \\
        \vdots \\
        \xx{j}{n}
      \end{array}\right] .
  \end{align*}
It is straightforward to see this provides a vector linear solution for the network.
\end{proof}

Let $X$ and $Y$ be collections of discrete random variables
over alphabet $\A$, and let $p_X$ be the probability mass function of $X$.
We denote the (base $|\A|$) \textit{entropy} of $X$ as
$$H(X) = - \sum_{u} p_X(u) \log_{|\A|} \, p_X(u)$$
and the \textit{conditional entropy} of $X$ given $Y$ as
$$H(X | Y) = H(X,Y) - H(Y).$$
The proof of Theorem~\ref{thm:dim-n_vector} will make use of
Lemmas~\ref{lem:entropy_sum} and \ref{lem:int_ent} and the following basic information inequalities:
\begin{align}
  H(X|Y) & \leq H(X) 
    \label{eq:ent_1}\\
         & \leq H(X,Y) 
    \label{eq:ent_2}\\
         & \le H(X) + H(Y).
    \label{eq:ent_3}
\end{align}

\begin{lemma}
  Let $X,Y_1,\dots,Y_n$ be collections of discrete random variables.
  Then
  $$\entropy{X,Y_1} + \cdots + \entropy{X,Y_n} \geq (n-1) \entropy{X} + \entropy{X,Y_1,\dots,Y_n}.$$%
  \label{lem:entropy_sum}
\end{lemma}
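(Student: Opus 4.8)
The plan is to reduce the claimed inequality to the subadditivity of conditional entropy by repeatedly applying the chain rule $H(A,B) = H(A) + H(B\mid A)$. Expanding each term on the left via $\entropy{X,Y_i} = \entropy{X} + H(Y_i\mid X)$ and the last term on the right via $\entropy{X,Y_1,\dots,Y_n} = \entropy{X} + H(Y_1,\dots,Y_n\mid X)$, both sides acquire the common summand $n\,\entropy{X}$, so the inequality is equivalent to
\[ H(Y_1\mid X) + \cdots + H(Y_n\mid X) \;\ge\; H(Y_1,\dots,Y_n\mid X). \]

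First I would establish this reduced inequality. Applying the chain rule once more, $H(Y_1,\dots,Y_n\mid X) = \sum_{i=1}^n H(Y_i \mid X, Y_1,\dots,Y_{i-1})$, and then I would bound each summand using the fact that conditioning cannot increase entropy, i.e. the conditioned form of \eqref{eq:ent_1}: $H(Y_i \mid X, Y_1,\dots,Y_{i-1}) \le H(Y_i \mid X)$. Summing over $i$ gives exactly the reduced inequality, and unwinding the reduction yields the lemma. An equivalent route is induction on $n$: the case $n=1$ is trivial, and the inductive step amounts to the submodularity statement $\entropy{X,W} + \entropy{X,Y_n} \ge \entropy{X} + \entropy{X,W,Y_n}$ with $W = (Y_1,\dots,Y_{n-1})$, which again reduces — via the chain rule — to the subadditivity bound $H(W,Y_n\mid X)\le H(W\mid X)+H(Y_n\mid X)$, i.e. the conditioned form of \eqref{eq:ent_3}; combining this with the inductive hypothesis closes the induction.

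There is no serious obstacle here; the content is entirely bookkeeping with the chain rule. The only point requiring a word of care is that the basic inequalities \eqref{eq:ent_1}--\eqref{eq:ent_3} are stated unconditionally, whereas the argument uses their conditioned versions; these follow immediately by applying the unconditional inequalities to the conditional distribution given each fixed value of the conditioning variables and averaging, so I would note this briefly rather than belabor it.
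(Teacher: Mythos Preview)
Your proposal is correct and follows essentially the same approach as the paper: both arguments expand each $\entropy{X,Y_i}$ via the chain rule to extract $n\,\entropy{X}$, then bound $\sum_i H(Y_i\mid X)$ below by $\sum_i H(Y_i\mid X,Y_1,\dots,Y_{i-1})$ using the conditioned form of \eqref{eq:ent_1}, and finally reassemble via the chain rule into $\entropy{X,Y_1,\dots,Y_n}$. Your remark that the argument implicitly invokes the conditioned versions of \eqref{eq:ent_1}--\eqref{eq:ent_3} is apt; the paper's proof does the same without comment.
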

\begin{proof}
  \begin{align*}
    \sum_{i=1}^n \entropy{X,Y_i} &= n \entropy{X} + \sum_{i=1}^n \entropy{Y_i | X} \\
      & \geq n \entropy{X} + \entropy{Y_1 | X} +  \sum_{i=2}^n \entropy{Y_i | X, Y_1,\dots,Y_{i-1}} 
        & & \Comment{\eqref{eq:ent_1}}\\
      & = (n-1) \entropy{X} + \entropy{X,Y_1,\dots,Y_n}.
  \end{align*}
\end{proof}

\begin{lemma}{\cite[Lemma V.9]{DFZ-Matroids}}
  Let $L: \F^{m} \to \F^n$ be a linear map,
  and let $x$ be a uniformly distributed random variable on $\F^m$.
  Then $L(x)$ is uniformly distributed on the range of $L$,
  and the base $|\F|$ entropy of $L(x)$ is $H(L(x)) = \dim{ \range{L(x} } \cdot \log{|\F|}$.
  \label{lem:int_ent}
\end{lemma}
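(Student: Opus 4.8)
The plan is to prove the two assertions separately: first that $L(x)$ is uniformly distributed on $\range{L}$, via a coset-counting argument, and then that the entropy formula follows immediately from uniformity.

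First I would apply the rank--nullity theorem to the linear map $L:\F^m\to\F^n$, giving $\dim{\ker L}=m-\dim{\range{L}}$, so that $|\ker L|=|\F|^{\,m-\dim{\range{L}}}$. Then, for any vector $v\in\range{L}$, the fiber $L^{-1}(v)$ is a coset of the subspace $\ker L$ in $\F^m$, and hence $|L^{-1}(v)|=|\ker L|$, a value independent of $v$. Since $x$ is uniformly distributed on $\F^m$, it follows that for every $v\in\range{L}$,
\[
  \Pr[L(x)=v] \;=\; \frac{|L^{-1}(v)|}{|\F|^m} \;=\; |\F|^{-\dim{\range{L}}},
\]
while $\Pr[L(x)=v]=0$ for $v\notin\range{L}$. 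Because this probability is the same for all $v\in\range{L}$, the random variable $L(x)$ is uniformly distributed on $\range{L}$, a set of size $|\F|^{\dim{\range{L}}}$.

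For the entropy claim, I would then simply invoke the fact that a random variable uniformly distributed on a finite set of size $N$ has entropy $\log N$ (in whatever base the entropy is taken), so that $\entropy{L(x)}=\log\!\left(|\F|^{\dim{\range{L}}}\right)=\dim{\range{L}}\cdot\log|\F|$; when the entropy is taken base $|\F|$, as in the statement, this is simply $\dim{\range{L}}$. The only step requiring any care is identifying the fibers of $L$ with the cosets of $\ker L$ and tracking the dimensions correctly; there is no substantive obstacle here, which is consistent with the result being quoted verbatim from \cite[Lemma V.9]{DFZ-Matroids}.
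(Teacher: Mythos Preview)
Your proof is correct. The paper does not supply its own proof of this lemma but simply cites it from \cite[Lemma V.9]{DFZ-Matroids}; your rank--nullity and coset-counting argument is the standard one and matches what one finds in that reference.
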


\begin{theorem}
  For each integer $n \geq 2$ and each field $\F$,
  the Dim-$n$ Network has a $k$-dimensional vector linear solution over $\F$
  if and only if
  $\Div{n}{k}$.
  \label{thm:dim-n_vector}
\end{theorem}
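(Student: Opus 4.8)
The plan is to prove the two directions separately. The ``if'' direction is the easy one: if $\Div{n}{k}$, write $k = mn$ for some positive integer $m$. By Lemma~\ref{lem:dim-n_rout}, the Dim-$n$ Network has an $n$-dimensional vector routing (hence linear) solution over $\F$, and a vector routing solution can be viewed as a vector linear solution for any dimension that is a multiple of $n$ simply by stacking $m$ independent copies; more formally, invoke Lemma~\ref{lem:dim_sum} with $k_1 = \cdots = k_m = n$ to assemble the $n$-dimensional solution into an $mn = k$-dimensional vector linear solution over $\F$. (If one prefers, the routing description itself immediately gives a $k$-dimensional routing solution whenever $n \mid k$.)

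The ``only if'' direction is the substantive part, and I expect it to be the main obstacle. Suppose the network has a $k$-dimensional vector linear solution over $\F$; I must show $\Div{n}{k}$. The strategy is an entropy (information-theoretic) argument, following the template of the M Network impossibility proof in \cite{DFZ-Matroids}. Treat the $n^2$ messages $\xx{i}{j}$ as i.i.d.\ uniform random variables on $\F^k$, so each has entropy $k \log|\F|$ (base $|\F|$: entropy $k$). By Lemma~\ref{lem:int_ent}, every edge symbol, being a linear function of the messages, is uniform on the range of its defining linear map, with entropy equal to the dimension of that range. Key structural facts to extract from the network topology: (1) the $n-1$ parallel edges from $a_i$ to $b_i$ together carry at most $(n-1)k$ symbols' worth of information, i.e.\ $\entropy{\ww{i}{1},\dots,\ww{i}{n-1}} \le (n-1)k$; (2) the single edge from $a_i$ to $Z$ carries at most $k$; (3) from the edges out of $b_i$ (which are functions of only $\ww{i}{1},\dots,\ww{i}{n-1}$) plus the edge from $Z$, each receiver must recover its demanded message from source $i$, and in fact across the $n^n$ receivers every one of the $n^n$ possible demand-tuples is realized.

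The heart of the argument is to combine these constraints. Because receiver $R_j$ demanding $\xx{1}{i_1},\dots,\xx{n}{i_n}$ recovers all $n$ of these messages from the $b$-edges and the $Z$-edge, and the $b_i$-edges depend only on $W_i := (\ww{i}{1},\dots,\ww{i}{n-1})$, one gets that $\xx{i}{i_i}$ is a function of $W_i$ and the $Z$-edge; letting $u$ denote the symbol on the in-edge from $Z$ to $R_j$ (entropy $\le k$), we have $H(\xx{1}{i_1},\dots,\xx{n}{i_n} \mid W_1,\dots,W_n, u) = 0$, so $nk = \entropy{\xx{1}{i_1},\dots,\xx{n}{i_n}} \le \sum_i H(\ww{i}{1},\dots,\ww{i}{n-1} \text{ restricted appropriately}) + k$. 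The plan is to apply Lemma~\ref{lem:entropy_sum} (with $X$ playing the role of a shared ``first $n-1$ coordinates'' quantity and the $Y_i$ the ``last coordinate'' quantities) to show that the $(n-1)k$-dimensional space carried from $a_i$ must, when intersected appropriately across the $n^n$ demand patterns, force the message space $\F^k$ to split as a direct sum of $n$ pieces of equal dimension $k/n$ — one ``recovered from $W_i$'' and one ``recovered from $Z$'' — so that $n \mid k$. Concretely: fixing all demands but letting the $i$th demand range over $\{1,\dots,n\}$ shows the $n$ subspaces of $\F^k$ that $W_i$ determines (for the $n$ choices of demanded message index) together with the fixed $Z$-contribution cover $\F^k$, and a dimension count combined with the symmetric constraint from varying a different coordinate yields that each contributes exactly $k/n$. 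I would make this rigorous by writing the entropy inequality $\sum_{i=1}^n \entropy{X,Y_i} \ge (n-1)\entropy{X} + \entropy{X,Y_1,\dots,Y_n}$ from Lemma~\ref{lem:entropy_sum} with a careful choice of $X$ (the total information at the $b$ layer or at $Z$) and chasing equality cases, using that every inequality \eqref{eq:ent_1}–\eqref{eq:ent_3} must be tight for the solution to exist.

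The main obstacle is bookkeeping: correctly identifying which random variables play the roles of $X$ and the $Y_i$ in Lemma~\ref{lem:entropy_sum}, and exploiting the fact that \emph{all} $n^n$ demand-tuples occur to pin down the dimensions exactly rather than merely bounding them. I expect the argument to conclude that $k/n$ is forced to be the common dimension of $n$ complementary subspaces of $\F^k$, hence an integer, giving $\Div{n}{k}$; the divisibility falls out of an equality-case analysis of the entropy chain, so the delicate step is verifying that each information inequality used is met with equality in any valid linear solution.
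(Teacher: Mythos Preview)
Your plan matches the paper's: the ``if'' direction via Lemmas~\ref{lem:dim-n_rout} and~\ref{lem:dim_sum} is exactly right, and the ``only if'' direction correctly targets an entropy argument using Lemmas~\ref{lem:entropy_sum} and~\ref{lem:int_ent} together with the full set of $n^n$ receivers. Where you remain vague is the specific quantity that yields divisibility, and your ``direct sum into $n$ pieces of dimension $k/n$'' picture, while morally right, is not what actually gets computed. The paper shows instead that
\[
\entropy{\ww{i}{1},\dots,\ww{i}{n-1}, \xx{i}{j}} \;=\; \frac{k(n^2-n+1)}{n}
\]
for every $i,j$, by sandwiching: Lemma~\ref{lem:entropy_sum} with $X = (\ww{i}{1},\dots,\ww{i}{n-1})$ and $Y_l = \xx{i}{l}$ gives $\sum_l \entropy{X,Y_l} \ge k(n^2-n+1)$, while summing the per-receiver bound $\sum_l \entropy{\ww{l}{1},\dots,\ww{l}{n-1},\xx{l}{i_l}} \le k(n^2-n+1)$ over the $n^{n-1}$ receivers with one demand coordinate fixed gives the matching upper bound on each individual term. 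Then Lemma~\ref{lem:int_ent} forces this entropy to be an integer, and $\GCD{n}{n^2-n+1} = 1$ gives $\Div{n}{k}$. Your $k/n$ is recoverable from this same identity as the conditional entropy $\entropy{\xx{i}{j} \mid \ww{i}{1},\dots,\ww{i}{n-1}}$, but pinning it down exactly still requires the sandwich above --- so the ``bookkeeping'' you flag as the main obstacle is in fact the entire content of the argument, not a detail to be filled in later.
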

\begin{proof}
  Suppose $\Div{n}{k}$. 
  Then $k=nc$ for some integer $c \ge 1$.
  By Lemma~\ref{lem:dim-n_rout}, the Dim-$n$ Network has an $n$-dimensional vector linear solution over $\F$,
  so by taking $k_1=\cdots=k_c = n$ in Lemma~\ref{lem:dim_sum},
  the Dim-$n$ Network has an $nc$-dimensional vector linear solution over $\F$.

  Conversely, suppose the Dim-$n$ Network has a $k$-dimensional vector linear solution over field $\F$.
  Then all messages $\xx{i}{j}$ and edge symbols $\ww{i}{j}$ are $k$-vectors over $\F$.
  By viewing the message components as independent uniform random variables over $\F$
  and considering the entropy using logarithms base $|\F|$, we have
  \begin{align}
    \entropy{ \xx{1}{1},\dots,\xx{1}{n},\dots,\xx{n}{1},\dots,\xx{n}{n} } 
      &= \sum_{i,j=1}^n \entropy{ \xx{i}{j} }. 
      \label{eq:entropy_2}
  \end{align}
  For each $i = 1,\dots,n$, the edge symbols
  $\ww{i}{1},\dots,\ww{i}{n}$ are linear functions of
  $\xx{i}{1},\dots,\xx{i}{n}$,
  so
  \begin{align}
    \entropy{\ww{i}{1} ,\dots,\ww{i}{n}  \, | \, \xx{i}{1},\dots, \xx{i}{n} } & = 0
    & & (i = 1,\dots,n).
    \label{eq:entropy_3}
  \end{align}

  The receiver $R_{1}$
  demands the messages $\xx{1}{1},\dots,\xx{n}{1}$
  and recovers its demands from its in-edges, so
  \begin{align}
    \entropy{\xx{1}{1}, \dots, \xx{n}{1} \, | \, \ww{1}{1} ,\dots,\ww{1}{n-1} ,\dots,\ww{n}{1} ,\dots,\ww{n}{n-1} , u_{1} } = 0.
  \label{eq:entropy_4}
  \end{align}
  For each $i,j \in \{1,\dots,n\}$,
  the edge symbol $\ww{i}{j} $ is a linear function of only $ \xx{i}{1},\dots, \xx{i}{n}$,
  and the network's messages are jointly independent.
  Thus,
  \begin{align*}
    &\sum_{i=1}^n \entropy{\ww{i}{1} ,\dots,\ww{i}{n-1} , x_i^{(1)}} \\
      &\; = \entropy{ \xx{1}{1}, \dots, \xx{n}{1}, \ww{1}{1} ,\dots,\ww{1}{n-1} ,\dots,\ww{n}{1} ,\dots,\ww{n}{n-1}  }
        & & \Comment{independence} \\
      &\; \leq \entropy{ u_{1}, \xx{1}{1}, \dots, \xx{n}{1}, \ww{1}{1} ,\dots,\ww{1}{n-1} ,\dots,\ww{n}{1} ,\dots,\ww{n}{n-1} } 
        & & \Comment{\eqref{eq:ent_2}}\\
      &\; = \entropy{u_{1}, \ww{1}{1} ,\dots,\ww{1}{n-1} ,\dots,\ww{n}{1} ,\dots,\ww{n}{n-1}} 
        & & \Comment{\eqref{eq:entropy_4}}
        \\
      &\; \le \entropy{u_{1}} + \sum_{i = 1}^n \sum_{j = 1}^{n-1} \entropy{\ww{i}{j}}
        & & \Comment{\eqref{eq:ent_3}}\\
      &\; \leq k \, (1 + n (n-1)).
  \end{align*}
  By a similar argument,
  for any $i_1, \dots, i_n \in \{1,\dots,n\}$,
  there exists a receiver which demands the messages
  $\xx{1}{i_1},\dots,\xx{n}{i_n}$, so
  \begin{align}
    \sum_{j=1}^n \entropy{\ww{j}{1} ,\dots,\ww{j}{n-1} , \xx{j}{i_j} } 
      & \leq k \, (n^2 - n + 1).
      \label{eq:entropy_5}
  \end{align}
  Since $\displaystyle\bigcup_{j=1}^n \left\{ \ww{j}{1} ,\dots, \ww{j}{n} \right\}$
  is a cut-set for each receiver,
  we have
  \begin{align}
    \entropy{ \xx{1}{1},\dots, \xx{1}{n}, \dots, \xx{n}{1},\dots,\xx{n}{n} \, | \, \ww{1}{1} ,\dots,\ww{1}{n} ,\dots,\ww{n}{1} ,\dots,\ww{n}{n} }
      &= 0
      \label{eq:entropy_6}.
  \end{align}
  Therefore,
  \begin{align*}
    k n^2 &= \entropy{ \xx{1}{1} ,\dots, \xx{1}{n} ,\dots, \xx{n}{1} ,\dots, \xx{n}{n} }
        & & \Comment{\eqref{eq:entropy_2}}
        \\
      & \leq \entropy{ \xx{1}{1} ,\dots, \xx{1}{n} ,\dots, \xx{n}{1} ,\dots, \xx{n}{n} , \ww{1}{1} ,\dots,\ww{1}{n} ,\dots,\ww{n}{1} ,\dots,\ww{n}{n} } 
        & & \Comment{\eqref{eq:ent_2}}\\
      & = \entropy{\ww{1}{1} ,\dots,\ww{1}{n} ,\dots,\ww{n}{1} ,\dots,\ww{n}{n} } 
        & & \Comment{\eqref{eq:entropy_6}}
        \\
      &\le \sum_{i=1}^n \sum_{j=1}^n \entropy{\ww{i}{j}} & & \Comment{\eqref{eq:ent_3}}\\ 
     & \leq k n^2
  \end{align*}
which implies
\begin{align*}
\sum_{i=1}^n \sum_{j=1}^n \entropy{\ww{i}{j}} = k n^2.
\end{align*}
But, since $\entropy{\ww{i}{j}} \le k$, we get
  \begin{align*}
    \entropy{\ww{i}{j} } &= k & & (i,j=1,\dots,n).
  \end{align*}
  Also, since $\ww{1}{1} ,\dots,\ww{1}{n} ,\dots,\ww{n}{1} ,\dots,\ww{n}{n} $ are independent,
  \begin{align}
      \entropy{\ww{i}{1} ,\dots,\ww{i}{n-1} } = k (n-1) & & (i = 1,\dots, n)
      \label{eq:entropy_7}.
  \end{align}
  For each $i = 1,\dots,n$, we have
  \begin{align}
  & \sum_{j=1}^n \entropy{\ww{i}{1} ,\dots,\ww{i}{n-1} , \xx{i}{j} } 
      \notag \\
    &\;  \geq (n-1) \entropy{\ww{i}{1} ,\dots,\ww{i}{n-1}  } + \entropy{\ww{i}{1} ,\dots,\ww{i}{n-1} ,x_i^{(1)},\dots,x_i^{(n)}} 
        & & \Comment{Lemma~\ref{lem:entropy_sum}}
        \notag \\
    &\; = k (n-1) (n-1) + \entropy{ \xx{i}{1} ,\dots, \xx{i}{n} }
        & & \Comment{\eqref{eq:entropy_3}, \eqref{eq:entropy_7}} 
        \notag \\
    & \;= k (n^2 - n + 1) 
        & & \Comment{\eqref{eq:entropy_2}}
        \label{eq:entropy_9}. 
  \end{align}

  By fixing $i_1 = 1$ and summing over all $i_2,\dots,i_n$ in \eqref{eq:entropy_5}, we have
  \begin{align*}
    & n^{n-1} \, k \, (n^2 - n + 1)  \\
        & \; \geq 
          \sum_{i_2,\dots,i_n = 1}^n \left( \entropy{\ww{1}{1} ,\dots,\ww{1}{n-1} , \xx{1}{1} } 
            + \sum_{j=2}^n \entropy{\ww{j}{1} ,\dots,\ww{j}{n-1} , \xx{j}{i_j} }  \right) 
            & & \Comment{\eqref{eq:entropy_5}}\\
        &\; = n^{n-1} \entropy{\ww{1}{1} ,\dots,\ww{1}{n-1} , \xx{1}{1} } 
            + n^{n-2} \sum_{j=2}^n \sum_{i=1}^n \entropy{\ww{j}{1} ,\dots,\ww{j}{n-1} , \xx{j}{i} } \\
        &\; \geq n^{n-1} \entropy{\ww{1}{1} ,\dots,\ww{1}{n-1} , \xx{1}{1} } 
            + n^{n-2} \sum_{j=2}^n k (n^2 - n + 1) 
            & & \Comment{\eqref{eq:entropy_9}}
            \\
        &\; = n^{n-1} \entropy{\ww{1}{1} ,\dots,\ww{1}{n-1} , \xx{1}{1} } 
            + n^{n-2} \, k \, (n-1) (n^2 - n + 1)
  \end{align*}
  and so
  \begin{align*}
    \entropy{\ww{1}{1} ,\dots,\ww{1}{n-1} , \xx{1}{1} } 
      & \leq  k \left( \frac{ n^2 - n + 1 }{n}\right).
  \end{align*}

  Similarly, for each $i,j=1,\dots,n$, we have
  \begin{align}
    \entropy{\ww{i}{1} ,\dots,\ww{i}{n-1} , \xx{i}{j} } 
      & \leq  k \left( \frac{ n^2 - n + 1 }{n} \right)
      \label{eq:entropy_10}.
  \end{align}

  However, for each $i= 1,\dots,n$ we also have
  \begin{align*}
    k (n^2 - n + 1) 
      & \leq \sum_{j=1}^n \entropy{\ww{i}{1} ,\dots,\ww{i}{n-1} , \xx{i}{j} }
        & & \Comment{\eqref{eq:entropy_9}} \\
      & \leq \sum_{j = 1}^n k \left( \frac{ n^2 - n + 1 }{n} \right)
        & & \Comment{\eqref{eq:entropy_10}}\\
      & = k \, (n^2 - n + 1)
  \end{align*}
  and so
  for each $i,j=1,\dots,n$,
  \begin{align*}
    \entropy{\ww{i}{1} ,\dots,\ww{i}{n-1} , \xx{i}{j} }
      & =  k \left( \frac{ n^2 - n + 1 }{n} \right).
  \end{align*}

  The variables $\ww{i}{1} ,\dots,\ww{i}{n-1} , \xx{i}{j}$ are linear functions of the uniformly distributed messages,
  so by Lemma~\ref{lem:int_ent},
  $\entropy{\ww{i}{1} ,\dots,\ww{i}{n-1} , \xx{i}{j}}$ (with logarithms in base $|\F|$) is an integer.
  However, 
  $$\GCD{n}{n^2-n+1} = \GCD{n}{(n^2 - n + 1) - n(n-1)} = \GCD{n}{1} = 1$$ 
  so if $ k\left( \frac{ n^2 - n + 1}{n} \right)$
  is an integer, then we must have $\Div{n}{k}$.
\end{proof}

The following corollary demonstrates it is possible for a network 
to be scalar linearly solvable over a non-commutative ring
but not over any commutative rings,
which is, in fact, equivalent to a network being vector linearly solvable over some field
but not scalar linearly solvable over any field,
by Corollaries~\ref{cor:general_vector} and \ref{cor:general_scalar}.

\begin{corollary}
  For all integers $n \geq 2$, $k\ge 1$, and prime $p$,
  the Dim-$n$ Network has a scalar linear solution over a non-commutative ring of size $p^{k n^2}$
  but has no scalar linear solution over any commutative ring.
  \label{cor:non-comm-ring}
\end{corollary}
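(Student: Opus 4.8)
The plan is to derive both halves of the statement directly from Theorem~\ref{thm:dim-n_vector}, together with Corollaries~\ref{cor:same_ring} and \ref{cor:general_scalar}, so essentially no new work is required.

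For the positive direction I would take the ring $R = M_n(\GF{p^k})$. It has $|R| = (p^k)^{n^2} = p^{k n^2}$ elements, and since $n \ge 2$ it is non-commutative (for instance the matrix units $E_{1,2}$ and $E_{2,1}$ do not commute). Because $\Div{n}{n}$, Theorem~\ref{thm:dim-n_vector} gives an $n$-dimensional vector linear solution for the Dim-$n$ Network over the field $\GF{p^k}$. Applying Corollary~\ref{cor:same_ring} with this ring and this vector dimension converts that $n$-dimensional vector linear solution over $\GF{p^k}$ into a scalar linear solution over $M_n(\GF{p^k})$, which is the desired non-commutative ring of size $p^{k n^2}$.

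For the negative direction I would use Corollary~\ref{cor:general_scalar}: the Dim-$n$ Network is scalar linearly solvable over some commutative ring if and only if it is scalar linearly solvable over some finite field, so it suffices to rule out the latter. A scalar linear code over a field $\F$ is precisely a $1$-dimensional vector linear code over $\F$ (since $M_1(\F) = \F$), so by Theorem~\ref{thm:dim-n_vector} the Dim-$n$ Network would be scalar linearly solvable over $\F$ only if $\Div{n}{1}$, which is impossible for $n \ge 2$. Hence the network has no scalar linear solution over any finite field, and therefore none over any commutative ring.

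The step doing all the real lifting is Theorem~\ref{thm:dim-n_vector}, already proved; the only genuinely new checks here are cosmetic, namely that $M_n(\GF{p^k})$ has the claimed cardinality and is non-commutative, so I expect no obstacle. I would also remark, as the surrounding text anticipates, that via Corollaries~\ref{cor:general_vector} and \ref{cor:general_scalar} this corollary is equivalent to the assertion that the Dim-$n$ Network is vector linearly solvable over some field but not scalar linearly solvable over any field.
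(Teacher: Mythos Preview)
Your proposal is correct and matches the paper's own proof essentially line for line: both directions invoke Theorem~\ref{thm:dim-n_vector} together with Corollary~\ref{cor:same_ring} (for the existence of a scalar linear solution over $M_n(\GF{p^k})$) and Corollary~\ref{cor:general_scalar} (to reduce the commutative case to fields). The only additions you make are the explicit verifications that $|M_n(\GF{p^k})| = p^{kn^2}$ and that this ring is non-commutative, which the paper leaves implicit.
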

\begin{proof}
  If the Dim-$n$ Network were scalar linearly solvable over a commutative ring,
  then by Corollary~\ref{cor:general_scalar},
  the Dim-$n$ Network would also be scalar linearly solvable over some finite field.
  However, by Theorem~\ref{thm:dim-n_vector},
  the Dim-$n$ Network is not scalar linearly solvable over any finite field.

  By Theorem~\ref{thm:dim-n_vector},
  the Dim-$n$ Network has an $n$-dimensional vector linear solution over $\GF{p^k}$,
  so by Corollary~\ref{cor:same_ring}
  the Dim-$n$ Network has a linear solution over the ring $M_{n}(\GF{p^k})$.
\end{proof}

\begin{corollary}
  For each integer $n \geq 2$,
  the unique smallest-size ring over which the Dim-$n$ Network is
  scalar linearly solvable is
  the ring of all $n \times n$ matrices over $\GF{2}$.
  \label{cor:dim-n_smallest_ring}
\end{corollary}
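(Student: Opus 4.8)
The plan is to chain together four results already established in the paper: Corollary~\ref{cor:non-comm-ring} (which supplies a solution over $M_n(\GF{2})$), Theorem~\ref{thm:R_dom_by_simple} (which says a smallest ring is a matrix ring over a field), Corollary~\ref{cor:same_ring} (scalar solvability over a matrix ring $=$ vector solvability over the base ring), and Theorem~\ref{thm:dim-n_vector} (which pins down the admissible vector dimensions for the Dim-$n$ Network). First I would observe that the Dim-$n$ Network is scalar linearly solvable over at least one ring, namely $M_n(\GF{2})$ of size $2^{n^2}$, by Corollary~\ref{cor:non-comm-ring} with $p=2$ and $k=1$; since ring sizes are positive integers, a smallest-size ring $R$ over which the network is scalar linearly solvable therefore exists.

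Next, since $R$ is scalar linearly solvable over itself but over no smaller ring, Theorem~\ref{thm:R_dom_by_simple} gives $R \cong M_m(\GF{q})$ for some prime power $q$ and positive integer $m$. Applying Corollary~\ref{cor:same_ring} shows the Dim-$n$ Network has an $m$-dimensional vector linear solution over $\GF{q}$, and then Theorem~\ref{thm:dim-n_vector} forces $\Div{n}{m}$, hence $m \ge n$. Consequently $|R| = q^{m^2} \ge 2^{n^2}$.

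Finally, minimality of $R$ together with the availability of $M_n(\GF{2})$ (a valid ring of size $2^{n^2}$) forces $|R| = 2^{n^2}$, i.e. $q^{m^2} = 2^{n^2}$. Because $q \ge 2$ and $m \ge n$, this equality can hold only when $q = 2$ and $m = n$, so $R \cong M_n(\GF{2})$. This simultaneously establishes that $2^{n^2}$ is the minimum size and that the minimizing ring is unique, completing the proof.

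I do not expect a genuine obstacle here: the only things to be careful about are verifying that a smallest-size ring exists at all (immediate, since at least one ring works and sizes are positive integers) and checking that the inequality $q^{m^2} \ge 2^{n^2}$ is strict unless $q = 2$ and $m = n$ (immediate from $q \ge 2$, $m \ge n \ge 2$). Everything else is a direct invocation of the cited results.
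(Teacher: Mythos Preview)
Your proposal is correct and follows essentially the same route as the paper: establish solvability over $M_n(\GF{2})$, reduce any candidate smallest ring to a matrix ring $M_m(\GF{q})$ over a field, invoke Theorem~\ref{thm:dim-n_vector} via Corollary~\ref{cor:same_ring} to force $n\mid m$, and then compare sizes to pin down $q=2$, $m=n$. The only cosmetic difference is that the paper cites Lemmas~\ref{lem:simple_rings} and~\ref{lem:hom_to_simple} directly (working with an arbitrary $R$ of size at most $2^{n^2}$ rather than a smallest one), whereas you invoke their packaged consequence Theorem~\ref{thm:R_dom_by_simple}; the substance of the argument is the same.
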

\begin{proof}
  By Theorem~\ref{thm:dim-n_vector},
  the Dim-$n$ Network has an $n$-dimensional vector linear solution over $\GF{2}$,
  and by Corollary~\ref{cor:same_ring},
  the Dim-$n$ Network has a linear solution over the ring $M_{n}(\GF{2})$.

  Suppose the Dim-$n$ Network is scalar linearly solvable over a ring $R$
  such that $|R| \leq 2^{n^2}$.
  By Lemmas~\ref{lem:simple_rings} and \ref{lem:hom_to_simple} (a) (b)
  there exists a field $\F$, a positive integer $k$,
  and a surjective homomorphism $\phi:R \to M_k(\F)$
  such that the Dim-$n$ Network is scalar linearly solvable over $M_k(\F)$.
  By Corollary~\ref{cor:same_ring},
  this implies the Dim-$n$ Network has a $k$-dimensional vector linear solution over $\F$,
  which by Theorem~\ref{thm:dim-n_vector}, 
  implies $n$ divides $k$.
  Since $\phi$ is surjective, $|M_k(\F)| \leq |R|$.
  Hence we have 
  $$2^{n^2} \leq 2^{k^2} \leq |\F|^{k^2} = |M_k(\F)| \leq |R| \leq 2^{n^2}.$$
  Therefore $k = n$ and $\F = \GF{2}$.
  Since $|R| = |M_k(\F)|$ and $\phi$ is a surjective homomorphism,
  we have $R \cong M_n(\GF{2}).$
\end{proof}

\begin{example}
Setting $k=1$ and $p=n=2$ in Corollary~\ref{cor:non-comm-ring}
results in the M Network
(see Figure~\ref{fig:M-network})
having no scalar linear solution over any commutative ring
but having a scalar linear solution over a non-commutative ring of size $16$.
The non-commutative ring $M_2(\GF{2})$ consists of all $2\times 2$ binary matrices under ordinary
matrix addition and multiplication mod $2$. 
Denote the $16$ ring elements by:
\begin{align*}
R_{qrst} &= \left[ \begin{array}{cc} q & r \\ s & t \\ \end{array}\right] 
\ \ \ \ \ \ \ \ (q,r,s,t \in \{0,1\}).
\end{align*}
A scalar linear solution for the M Network over the non-commutative ring $M_2(\GF{2})$ \\
(i.e. where $A,B,C,D,E,F,G,H, W,X,Y,Z \in M_2(\GF{2})$)
is given by:
\begin{align*}
\text{Edge (1,3)}: A &= R_{1000} W + R_{0010} X & \ \ \ \text{Decode at node 6}: W &= R_{1000} A + R_{0010} E + R_{0000} D \\
\text{Edge (1,4)}: B &= R_{0100} W + R_{0001} X & \ \ \                          Y &= R_{0000} A + R_{0001} E + R_{1000} D \\
\text{Edge (2,4)}: C &= R_{0100} Y + R_{0001} Z & \ \ \ \text{Decode at node 7}: W &= R_{1000} A + R_{0010} F + R_{0000} D \\
\text{Edge (2,5)}: D &= R_{1000} Y + R_{0010} Z & \ \ \                          Z &= R_{0000} A + R_{0001} F + R_{0100} D \\
\text{Edge (4,6)}: E &= R_{1000} B + R_{0010} C & \ \ \ \text{Decode at node 8}: X &= R_{0100} A + R_{0010} G + R_{0000} D \\
\text{Edge (4,7)}: F &= R_{1000} B + R_{0001} C & \ \ \                          Y &= R_{0000} A + R_{0001} G + R_{1000} D \\
\text{Edge (4,8)}: G &= R_{0100} B + R_{0010} C & \ \ \ \text{Decode at node 9}: X &= R_{0100} A + R_{0010} H + R_{0000} D \\
\text{Edge (4,9)}: H &= R_{0100} B + R_{0001} C & \ \ \                          Z &= R_{0000} A + R_{0001} H + R_{0100} D,
\end{align*}
where the out-edges of nodes with a single in-edge
each carry the symbol on the in-edge,
that is, each receiver directly receives the edge symbols $A$ and $D$
from the nodes $3$ and $5$, respectively.

We also note that if the messages and edge symbols of the M Network
are $2$-dimensional vectors over $\GF{2}$, instead of $2 \times 2$ binary matrices, 
then a small modification of the linear code described above
provides the $2$-dimensional vector linear solution over $\GF{2}$ given in 
\cite{Medard-NonMulticast}.
This agrees with Corollary~\ref{cor:same_ring}.

\psfrag{w,x}{$W,X$}
\psfrag{y,z}{$Y,Z$}
\psfrag{w,y}{$W,Y$}
\psfrag{w,z}{$W,Z$}
\psfrag{x,y}{$X,Y$}
\psfrag{x,z}{$X,Z$}
\psfrag{a}{$A$}
\psfrag{b}{$B$}
\psfrag{c}{$C$}
\psfrag{d}{$D$}
\psfrag{e}{$E$}
\psfrag{f}{$F$}
\psfrag{g}{$G$}
\psfrag{h}{$H$}
\begin{figure}[h]
  \begin{center}
    \leavevmode
    \hbox{\epsfxsize=.65\textwidth\epsffile{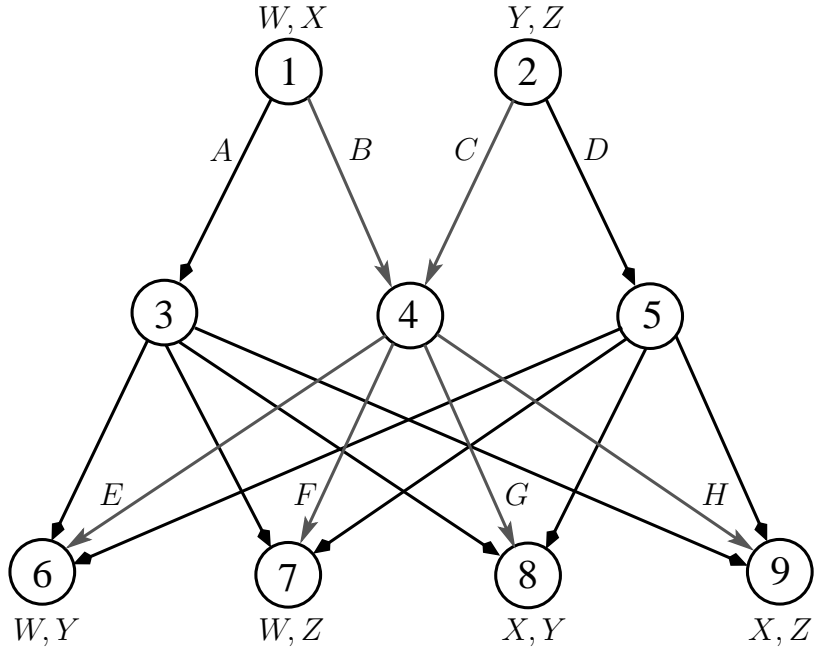}}
  \end{center}
  \caption{The M network has a non-commutative scalar linear solution.
           The messages $W,X,Y,Z$ take values in $M_2(\GF{2})$.
           The variables $A,B,C,D,E,F,G,H$ also take values in $M_2(\GF{2})$ and
           represent the symbols carried on the 8 indicated edges.
  }
\label{fig:M-network}
\end{figure}

\label{ex:16-noncommutative}
\end{example}

The bound in the following theorem is tight via Example~\ref{ex:16-noncommutative}.

\begin{theorem}
  If a network is scalar linearly solvable over some non-commutative ring $R$,
  but not over any commutative rings,
  then $|R| \geq 16$.
  \label{thm:non-comm_16}
\end{theorem}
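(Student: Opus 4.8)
The plan is to pass from $R$ to its simple quotient and then apply the Artin--Wedderburn classification (Lemma~\ref{lem:simple_rings}). Since a scalar linear code is, by convention, over a finite ring, $R$ is finite, so Lemma~\ref{lem:hom_to_simple} applies: there is a simple ring $S$ admitting a surjective homomorphism from $R$ such that the network $\Network$ in question is scalar linearly solvable over $S$ and $|S|$ divides $|R|$.

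First I would argue that $S$ must itself be non-commutative. Indeed, Lemma~\ref{lem:hom_to_simple}(b) guarantees $\Network$ is scalar linearly solvable over $S$; if $S$ were commutative this would contradict the hypothesis that $\Network$ is not scalar linearly solvable over any commutative ring. (Equivalently, one could invoke Corollary~\ref{cor:general_scalar}: $\Network$ is not scalar linearly solvable over any field, and a commutative simple ring is a field.) Next, by Lemma~\ref{lem:simple_rings}, $S \cong M_k(\F)$ for some finite field $\F$ and positive integer $k$; since $S$ is non-commutative and $M_1(\F) \cong \F$ is commutative, we must have $k \geq 2$. Hence $|S| = |\F|^{k^2} \geq 2^{4} = 16$, and since $|S|$ divides $|R|$ we conclude $|R| \geq |S| \geq 16$, as claimed.

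I do not expect a real obstacle here: the substantive content is carried entirely by Lemma~\ref{lem:hom_to_simple} and the Artin--Wedderburn-based Lemma~\ref{lem:simple_rings}, and the only genuinely new observation is that the simple quotient inherits non-commutativity, which forces its matrix dimension up to at least $2$. The point worth being careful about is that the statement is not vacuous: non-commutative rings with identity of order $8$ do exist (e.g. the upper-triangular $2 \times 2$ matrices over $\GF{2}$), so one really must use that every simple quotient of such a small ring is a commutative field, which is exactly what lets commutative solvability be recovered. Tightness of the bound $16$ is already witnessed by Example~\ref{ex:16-noncommutative} (the M Network over $M_2(\GF{2})$), so no separate construction is needed.
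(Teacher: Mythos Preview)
Your proposal is correct and follows essentially the same route as the paper: the paper invokes Theorem~\ref{thm:R_dom_by_simple} to obtain $M_k(\F)$ with $|R|\ge|M_k(\F)|$ and then rules out $k=1$, whereas you unpack that theorem into its ingredients Lemma~\ref{lem:hom_to_simple} and Lemma~\ref{lem:simple_rings} directly. The logic (pass to a simple quotient, identify it as $M_k(\F)$, force $k\ge2$ via the no-commutative-solution hypothesis, bound $|\F|^{k^2}\ge16$) is identical.
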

\begin{proof}
  Suppose network $\Network$ is scalar linearly solvable over some non-commutative ring $R$
  and is not linearly solvable over any commutative ring.
  By Theorem~\ref{thm:R_dom_by_simple},
  there exists a positive integer $k$ and a field $\F$ such that
  $\Network$ has a linear solution over $M_k(\F)$
  and $|R| \geq |M_k(\F)|$.
  If $k = 1$, then $\Network$ is linearly solvable over a field,
  which contradicts the assumption that $\Network$ is not linearly solvable over any commutative ring.
  So $k \geq 2$,
  which implies $|R| \geq |M_k(\F)| = |\F|^{k^2} \ge 2^4 = 16$. 
\end{proof}

\clearpage
\section{Modules with the same alphabet size}
\label{sec:mods}

The following theorem demonstrates that there exists a network that is linearly solvable over a module
of size $p^k$
but not over any ring of size $p^k$.
\begin{theorem}
  For each integer $k \geq 2$ and prime $p$,
  the Dim-$k$ Network has a $k$-dimensional vector linear solution over the field $\GF{p}$
  but is not scalar linearly solvable over any ring of size $p^k$.
  \label{thm:module-beats-rings}
\end{theorem}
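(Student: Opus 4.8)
The plan is to treat the two assertions separately, the first being immediate and the second being where all the work lies. For the positive direction, apply Theorem~\ref{thm:dim-n_vector} with $n=k$: since $k \mid k$, the Dim-$k$ Network has a $k$-dimensional vector linear solution over $\GF{p}$ (in fact over every field). So the only substance is the impossibility of a scalar linear solution over a ring of size $p^k$.

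For that, I would argue by contradiction. Suppose the Dim-$k$ Network is scalar linearly solvable over some ring $R$ with $|R| = p^k$. By Lemma~\ref{lem:hom_to_simple}, there is a simple ring $S$, a surjective homomorphism $\phi \colon R \to S$, such that the Dim-$k$ Network is still scalar linearly solvable over $S$ and $|S|$ divides $|R| = p^k$. By Lemma~\ref{lem:simple_rings}, $S \cong M_m(\F)$ for some finite field $\F$ and positive integer $m$. By Corollary~\ref{cor:same_ring}, scalar linear solvability over $M_m(\F)$ gives an $m$-dimensional vector linear solution over $\F$, so Theorem~\ref{thm:dim-n_vector} forces $k \mid m$; in particular $m \ge k \ge 2$.

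The size constraint then finishes it. Since $|S| = |\F|^{m^2}$ divides the prime power $p^k$, the prime power $|\F|$ must be a power of $p$, say $|\F| = p^j$ with $j \ge 1$, and hence $j m^2 \le k$. But $m \ge k \ge 2$ gives $m^2 \ge 2m \ge 2k > k$, contradicting $j m^2 \le k$. Therefore no ring $R$ of size $p^k$ admits a scalar linear solution for the Dim-$k$ Network.

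The one point that must be handled carefully — and really the only obstacle — is to invoke the \emph{divisibility} $|S| \mid |R|$ from Lemma~\ref{lem:hom_to_simple}(c) rather than the mere inequality $|S| \le |R|$. The inequality alone is not enough (for instance $|M_2(\GF{2})| = 16 \le 25 = 5^2$), whereas divisibility by a prime power both rules out a matrix ring over a field whose characteristic differs from $p$ and pins down the exponent count $j m^2 \le k$, which is then incompatible with $m \ge k \ge 2$.
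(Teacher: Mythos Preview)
Your proof is correct and follows essentially the same route as the paper: both use Theorem~\ref{thm:dim-n_vector} for the positive assertion, and for the negative assertion both pass via Lemmas~\ref{lem:simple_rings} and~\ref{lem:hom_to_simple} to a matrix ring $M_m(\F)$ with $|\F|^{m^2}\mid p^k$, then invoke Corollary~\ref{cor:same_ring} and Theorem~\ref{thm:dim-n_vector} to force $k\mid m$, yielding the contradiction $m^2\le k<m^2$. Your closing remark on why divisibility (rather than mere inequality) is the essential input from Lemma~\ref{lem:hom_to_simple}(c) is apt and worth keeping.
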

\begin{proof}
  By Theorem~\ref{thm:dim-n_vector}, the Dim-$k$ Network has a $k$-dimensional vector linear solution over $\GF{p}$.
  Let $R$ be a ring of size $p^k$ and suppose the Dim-$k$ Network has a scalar linear solution over $R$.
  By Lemmas~\ref{lem:simple_rings} and \ref{lem:hom_to_simple} (b) (c),
  there exists a field $\F$ and a positive integer $n$
  such that any network that is scalar linearly solvable over $R$ is also scalar linearly solvable over $M_n(\F)$
  and $|\F|^{n^2}$ divides $p^k$.
  Hence $\F$ is a field of characteristic $p$ and $n^2 \leq k$.
  
  Since the Dim-$k$ Network is scalar linearly solvable over $R$,
  the Dim-$k$ Network is scalar linearly solvable over the ring $M_n(\F)$.
  By Corollary~\ref{cor:same_ring}, this implies the Dim-$k$ Network 
  has an $n$-dimensional vector linear solution over $\F$,
  which by Theorem~\ref{thm:dim-n_vector} implies
  $\Div{k}{n}$.
  However, this contradicts the fact that $n^2 \leq k$.
  Thus, no such ring $R$ exists.
\end{proof}

\subsection{Commutative rings}

Both a scalar linear code over a ring of size $p^k$
and a $k$-dimensional vector linear code
are linear codes over a module of size $p^k$.
We have already seen (in Theorem~\ref{thm:module-beats-rings})
that there exists a network with a $k$-dimensional vector linear solution over $\GF{p}$
yet with no scalar linear solutions over any ring of size $p^k$.
The main result of this section (Theorem~\ref{thm:commutative_implies_vector})
will show that
any network that is scalar linearly solvable over a commutative ring of size $p^k$
must also have a $k$-dimensional vector linear solution over $\GF{p}$.

The following lemma was proved in \PartOne{}
(in \cite[Lemmas \PartOneLemmaDirectProduct{} and \PartOnePartDomRing]{Connelly-Ring1})
and will be used in what follows.

\begin{lemma}
  For each prime $p$ and positive integer $k$,
  if a network $\Network$ has a scalar linear solution over some commutative ring of size $p^k$,
  then there exists an integer partition $(n_1,\dots,n_r)$ of $k$ such that
  $\Network$ is scalar linearly solvable over each of the fields $\GF{p^{n_1}},\dots,\GF{p^{n_r}}$.
  \label{lem:dom_by_max}
\end{lemma}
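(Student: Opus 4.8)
The plan is to decompose the given commutative ring into local factors via the structure theorem for finite commutative rings, and then deal with each local factor separately by passing to its residue field and then to an extension field of the right size. First I would invoke the fact that any finite commutative ring $R$ with $|R| = p^k$ is isomorphic to a direct product $R \cong R_1 \times \cdots \times R_r$ of local commutative rings, where $|R_i| = p^{n_i}$ and $n_1 + \cdots + n_r = k$; since each $R_i$ is nonzero, each $n_i \geq 1$, so $(n_1,\dots,n_r)$ is an integer partition of $k$. For each $i$, the coordinate projection $R \to R_i$ is a surjective ring homomorphism (it sends $1$ to $1$), and all rings in sight are finite, so Corollary~\ref{cor:ModHomomorphism} (equivalently, Lemma~\PartOneLemmaDirectProduct{} of \PartOne) gives that if $\Network$ is scalar linearly solvable over $R$ then it is scalar linearly solvable over each $R_i$.

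Next I would show that, for a local commutative ring $R_i$ of order $p^{n_i}$, scalar linear solvability over $R_i$ implies scalar linear solvability over $\GF{p^{n_i}}$. Letting $J_i$ be the maximal ideal of $R_i$, the quotient map $R_i \to R_i/J_i$ is a surjective ring homomorphism onto the residue field, which is a finite field of characteristic $p$ (its additive group is a $p$-group), say $\GF{p^{m_i}}$; by Corollary~\ref{cor:ModHomomorphism} this already yields scalar linear solvability of $\Network$ over $\GF{p^{m_i}}$. To upgrade $m_i$ to $n_i$, I would note that $m_i \mid n_i$: since $R_i$ is Artinian, $J_i$ is nilpotent, and in the filtration $R_i \supseteq J_i \supseteq J_i^2 \supseteq \cdots \supseteq 0$ each successive quotient $J_i^{\,j}/J_i^{\,j+1}$ is a vector space over $R_i/J_i = \GF{p^{m_i}}$, so $|R_i|$ is a power of $p^{m_i}$, i.e. $n_i = m_i d$ for some $d \geq 1$. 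Hence $\GF{p^{m_i}}$ is (isomorphic to) a subfield of $\GF{p^{n_i}}$, the inclusion is a ring homomorphism, and one last application of Corollary~\ref{cor:ModHomomorphism} gives scalar linear solvability of $\Network$ over $\GF{p^{n_i}}$. Ranging over all $i$ proves the lemma. (Alternatively, the two ingredients — ``pass to a direct-product factor'' and ``a local ring of order $p^{n_i}$ is dominated by $\GF{p^{n_i}}$'' — are precisely Lemmas~\PartOneLemmaDirectProduct{} and \PartOnePartDomRing{} of \PartOne, so the statement follows at once by quoting those.)

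The hard part will be the local case, and in particular verifying that the residue field order $p^{m_i}$ divides $p^{n_i}$ (so that $\GF{p^{m_i}} \hookrightarrow \GF{p^{n_i}}$); this is the step that forces the exponents in the partition to be exactly the sizes of the local factors, and it is presumably where \PartOne's Lemma~\PartOnePartDomRing{} carries the real content. Everything else is a routine combination of the homomorphism transfer principle of Corollary~\ref{cor:ModHomomorphism} with the standard structure theory of finite commutative rings.
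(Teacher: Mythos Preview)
Your proposal is correct and matches the paper's approach exactly: the paper does not reprove this lemma here but simply cites Lemmas~\PartOneLemmaDirectProduct{} and \PartOnePartDomRing{} of \PartOne{}, which are precisely the two ingredients you identify in your parenthetical remark. Your detailed sketch---decompose into local factors, project, quotient by the maximal ideal, and use the filtration to get $m_i \mid n_i$---is exactly the content of those \PartOne{} lemmas, so there is nothing to add.
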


The following standard result on rings will be used in later proofs.
\begin{lemma}{\cite[Theorem I.1]{McDonald-FiniteRings}}
  Every finite ring is isomorphic to a direct product of rings of prime power sizes.
  \label{lem:prime-power-sizes}
\end{lemma}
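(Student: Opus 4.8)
The plan is to reduce everything to the structure of the \emph{additive} group $(R,+)$, which is a finite Abelian group. Write $|R| = p_1^{k_1}\cdots p_r^{k_r}$ with the $p_i$ distinct primes. For each $i$, let $R_i$ denote the $p_i$-primary component of $(R,+)$, i.e.\ the set of $x \in R$ annihilated by some power of $p_i$. By the fundamental theorem of finite Abelian groups, $(R,+)$ is the internal direct sum $R_1 \oplus \cdots \oplus R_r$, and $|R_i| = p_i^{k_i}$.

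The key step is to upgrade each $R_i$ from a subgroup to a two-sided ideal. If $p_i^m x = 0$ and $y \in R$, then $p_i^m(yx) = y(p_i^m x) = 0$ and $p_i^m(xy) = (p_i^m x)y = 0$, so $yx, xy \in R_i$; hence $R_i$ is a two-sided ideal. Moreover, for $i \ne j$ we get $R_i R_j \subseteq R_i \cap R_j = \{0\}$, since the product lies in both ideals while distinct primary components meet trivially. Consequently the multiplication on $R$ is ``block diagonal'' with respect to $R = R_1 \oplus \cdots \oplus R_r$: for $x = x_1 + \cdots + x_r$ and $y = y_1 + \cdots + y_r$ with $x_i, y_i \in R_i$, one has $xy = x_1 y_1 + \cdots + x_r y_r$.

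Finally I would identify the $R_i$ as rings in their own right and assemble the isomorphism. Decomposing the identity as $1 = e_1 + \cdots + e_r$ with $e_i \in R_i$, and using $R_i R_j = \{0\}$ for $i \ne j$, the element $e_i$ acts as a two-sided identity on $R_i$; thus each $R_i$ is a ring of prime-power order $p_i^{k_i}$, and the component map $R \to R_1 \times \cdots \times R_r$ is a ring isomorphism. The only point requiring any care is verifying that the primary components absorb multiplication on both sides and that $1$ splits compatibly with the decomposition; everything else is just the standard primary decomposition of a finite Abelian group. Since this is a classical fact, in the paper I would simply cite it, as is done here with \cite[Theorem I.1]{McDonald-FiniteRings}.
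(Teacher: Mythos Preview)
Your argument is correct and is exactly the standard proof of this classical fact: decompose the additive group into its primary components, observe that each is a two-sided ideal with identity obtained from the decomposition of $1$, and conclude. The paper itself gives no proof at all; it simply cites \cite[Theorem I.1]{McDonald-FiniteRings}, as you anticipated in your final sentence.
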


\begin{theorem}
  Let $m$ be a positive integer with prime factorization
  $m = \PrimeFact{t}$.
  If a network $\Network$ has a scalar linear solution over some commutative ring of size $m$,
  then the following hold:
  \begin{itemize}
    \item[(a)] For each $i = 1,\dots,t$,
    network $\Network$ has a $k_i$-dimensional vector linear solution over $\GF{p_i}$.
    \item[(b)] Network $\Network$ has a linear solution over 
    the 

    $M_{k_{1}}(\GF{p_{1}}) \times \cdots \times M_{k_{t}}(\GF{p_{t}})$-module
    $\GF{p_{1}}^{k_{1}} \times \cdots \times \GF{p_{t}}^{k_{t}}$.
  \end{itemize}
  \label{thm:commutative_implies_vector}
\end{theorem}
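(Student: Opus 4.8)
The plan is to bootstrap everything from the prime-power case using the structure theory of finite rings, and then to chain together the already-established lemmas that convert field solvability into vector (i.e.\ matrix-ring) solvability and recombine the pieces coordinatewise. Nothing genuinely new needs to be invented; the work is assembling Lemmas~\ref{lem:prime-power-sizes}, \ref{lem:dom_by_max}, \ref{lem:kn-vector}, \ref{lem:dim_sum}, Corollaries~\ref{cor:ModHomomorphism}, \ref{cor:same_ring}, and Lemma~\ref{lem:same_ring} in the right order and keeping the bookkeeping straight.

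For part (a), let $S$ be a commutative ring of size $m$ over which $\Network$ is scalar linearly solvable. By Lemma~\ref{lem:prime-power-sizes}, $S$ is isomorphic to a direct product of rings of prime power sizes; grouping these factors according to their underlying prime and using uniqueness of prime factorization of $m$, I obtain $S \cong S_1 \times \cdots \times S_t$ where each $S_i$ is commutative (a direct factor of a commutative ring) and $|S_i| = p_i^{k_i}$. For each $i$ the projection $S \to S_i$ is a surjective ring homomorphism, so by Corollary~\ref{cor:ModHomomorphism}, $\Network$ is scalar linearly solvable over $S_i$. Then Lemma~\ref{lem:dom_by_max} supplies an integer partition $(n_1,\dots,n_r)$ of $k_i$ with $\Network$ scalar linearly solvable over each of $\GF{p_i^{n_1}},\dots,\GF{p_i^{n_r}}$; Lemma~\ref{lem:kn-vector} converts each of these into an $n_j$-dimensional vector linear solution over $\GF{p_i}$; and Lemma~\ref{lem:dim_sum} glues them into an $(n_1 + \cdots + n_r) = k_i$-dimensional vector linear solution over $\GF{p_i}$. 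This is exactly (a).

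For part (b), set $R = M_{k_1}(\GF{p_1}) \times \cdots \times M_{k_t}(\GF{p_t})$ and $G = \GF{p_1}^{k_1} \times \cdots \times \GF{p_t}^{k_t}$, with $R$ acting on $G$ coordinatewise by matrix–vector multiplication; one checks directly that this makes $G$ a faithful $R$-module. By part (a) and Corollary~\ref{cor:same_ring}, for each $i$ the network $\Network$ has a linear solution over the module $\Module{\GF{p_i}^{k_i}}{M_{k_i}(\GF{p_i})}$. I would then take the coordinatewise direct product of these $t$ module solutions: since every edge and decoding function over $\Module{G}{R}$ decomposes into its $t$ coordinates, running the $i$-th given solution in the $i$-th coordinate simultaneously yields a valid linear solution for $\Network$ over $\Module{G}{R}$, which is (b). (Equivalently, one first combines the factors into a scalar linear solution over $\Module{R}{R}$, then invokes Lemma~\ref{lem:same_ring} to move it onto the faithful module $G$.)

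The only steps requiring care are routine verifications: that the coordinatewise action on $G$ is indeed a faithful module action, and that a direct product of module linear codes over a direct product ring is again a module linear code. Beyond those, the one bit of bookkeeping I expect to be the main thing to get right is the grouping step in part (a) — the structure theorem may output several prime-power factors sharing the same prime, and one must use uniqueness of prime factorization to see that the product of all the $p_i$-power factors has size exactly $p_i^{k_i}$, so that Lemma~\ref{lem:dom_by_max} can be applied to a commutative ring of size precisely $p_i^{k_i}$. No genuine obstacle is anticipated.
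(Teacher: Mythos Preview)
Your proposal is correct and follows essentially the same route as the paper: decompose the commutative ring into prime-power factors via Lemma~\ref{lem:prime-power-sizes}, project onto each factor using Corollary~\ref{cor:ModHomomorphism}, invoke Lemma~\ref{lem:dom_by_max} to get field solutions, convert to vector solutions via Lemma~\ref{lem:kn-vector}, and sum the dimensions with Lemma~\ref{lem:dim_sum}; for (b) the paper likewise just takes the Cartesian product code of the per-prime solutions. Your extra remark about grouping factors sharing a prime and your alternative (b) argument via Lemma~\ref{lem:same_ring} are minor elaborations the paper leaves implicit, but the structure is the same.
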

\begin{proof}
  Suppose $\Network$ is scalar linearly solvable over a commutative ring $R$ of size $m$.
  By Lemma~\ref{lem:prime-power-sizes},
  there exist rings $R_1,\dots,R_t$
  such that $R \cong R_1 \times \cdots \times R_t$
  and $|R_i| = p_i^{k_i}$ for all $i$.
  
  Let $i \in \{1,\dots,t\}$.
  Since the projection mapping from $R$ to $R_i$ is a surjective homomorphism,
  by Corollary~\ref{cor:ModHomomorphism},
  network $\Network$ is scalar linearly solvable over $R_i$.
  Then by Lemma~\ref{lem:dom_by_max},
  there exists an integer partition $(n_1,\dots,n_r)$ of $k_i$ such that
  $\Network$ is scalar linearly solvable over each of the fields $\GF{p_i^{n_1}},\dots,\GF{p_i^{n_r}}$.
  By Lemma~\ref{lem:kn-vector},
  this implies that $\Network$ has an $n_j$-dimensional vector linear solution over $\GF{p_i}$
  for each $j = 1,\dots,r$.
  However, by Lemma~\ref{lem:dim_sum},
  this then implies that $\Network$ has a $k_i = (n_1+\cdots+n_r)$-dimensional vector linear solution over $\GF{p_i}$.

  Hence, for all $i \in \{1,\dots,t\}$,
  a Cartesian product code formed from the
  $k_{i}$-dimensional vector linear solutions over $\GF{p_{i}}$
  gives a linear solution to $\Network$ over the described module.
\end{proof}

In \PartOne, we showed 
(in \cite[Theorems~\PartOneTheoremBestRingNetwork{} and \PartOneTheoremPFiveSeven]{Connelly-Ring1})
that with respect to ring domination for scalar linear coding,
some ring sizes give rise to multiple maximal commutative rings 
whereas other ring sizes yield only a single unique maximal commutative ring.
If there is just one maximal commutative ring of size $m$,
then every network that is linearly solvable over some commutative ring of size $m$
is also linearly solvable over the maximal ring.
In contrast, if there are multiple maximal commutative rings of size $m$, 
then for any commutative ring $R$ of size $m$,
there is always a different commutative ring $S$ also of size $m$,
such that some network is scalar linearly solvable over $S$
but not over $R$.
Thus, in this sense, there is no ``best'' commutative ring of a given size.

However,
by Theorem~\ref{thm:commutative_implies_vector} (b),
if a network has a linear solution over some commutative ring of size $m = \PrimeFact{t}$,
then it has a linear solution over the $M_{k_{1}}(\GF{p_{1}}) \times \cdots \times M_{k_{t}}(\GF{p_{t}})$-module
$\GF{p_{1}}^{k_{1}} \times \cdots \times \GF{p_{t}}^{k_{t}}$,
which also has size $m$.
In fact, we showed (in Theorem \ref{thm:module-beats-rings}) 
that when $m = p^k$, the converse is not true.
So in this sense, $k$-dimensional vector linear codes over $\GF{p}$ are strictly ``better'' than 
scalar linear codes over commutative rings of size $p^k$.

\subsection{Non-commutative rings}

This section generalizes the results of Theorem~\ref{thm:commutative_implies_vector}
to (not necessarily commutative) rings of size $m$ with prime factor multiplicity less than or equal to $6$.
In order to do so, we first will prove some intermediate results and consider special cases.

The following lemma was proved in \PartOne{}
(in \cite[Theorem \PartOneTheoremPFiveSeven]{Connelly-Ring1})
and will be used in what follows.

\begin{lemma}
  For each $k \in \{1,2,3,4,6\}$ and prime $p$,
  if a network is scalar linearly solvable over some commutative ring of size $p^k$,
  then it is scalar linearly solvable over $\GF{p^k}$.
\label{lem:p12346}
\end{lemma}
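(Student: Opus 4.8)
The plan is to reduce the statement about commutative rings to a statement about fields via Lemma~\ref{lem:dom_by_max}, and then to use a small combinatorial fact about the exponents $k \in \{1,2,3,4,6\}$. Suppose a network $\Network$ is scalar linearly solvable over some commutative ring of size $p^k$. By Lemma~\ref{lem:dom_by_max}, there is an integer partition $(n_1,\dots,n_r)$ of $k$ such that $\Network$ is scalar linearly solvable over each of the fields $\GF{p^{n_1}},\dots,\GF{p^{n_r}}$. It therefore suffices to produce a single index $j$ with $n_j \mid k$: for then $\GF{p^{n_j}}$ is a subfield of $\GF{p^k}$, and applying Corollary~\ref{cor:ModHomomorphism} to the inclusion homomorphism $\GF{p^{n_j}} \hookrightarrow \GF{p^k}$ (this is exactly case~(1) listed after that corollary), scalar linear solvability over $\GF{p^{n_j}}$ promotes to scalar linear solvability over $\GF{p^k}$, as desired.

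What remains is the combinatorial claim: for each $k \in \{1,2,3,4,6\}$, every integer partition of $k$ has a part dividing $k$. If a part equals $1$ this is immediate since $1 \mid k$; otherwise I would simply enumerate the finitely many partitions of $k$ with all parts at least $2$. For $k \in \{1,2,3\}$ the only such partition is the trivial one $(k)$; for $k=4$ they are $(4)$ and $(2,2)$; and for $k=6$ they are $(6)$, $(4,2)$, $(3,3)$, and $(2,2,2)$. In each case some part divides $k$ (using $2 \mid 4$, and $2 \mid 6$, $3 \mid 6$ where needed), which completes the proof. The only external input is the standard fact that $\GF{p^d}$ is a subring of $\GF{p^k}$ precisely when $d \mid k$.

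I expect the only real obstacle to be recognizing \emph{why} the hypothesis is restricted to $k \in \{1,2,3,4,6\}$: these are exactly the exponents for which no partition of $k$ simultaneously avoids the part $1$ and all divisors of $k$. For other $k$ the scheme breaks down — e.g. for $k=5$ the partition $(3,2)$, or for $k=7$ the partition $(4,3)$, would only yield scalar linear solutions over proper subfields that are not subfields of $\GF{p^k}$ — which is consistent with, and explains, the stated range of $k$. Everything else in the argument is a direct appeal to results already available in the excerpt (Lemma~\ref{lem:dom_by_max}, Corollary~\ref{cor:ModHomomorphism}).
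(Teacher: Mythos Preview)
Your argument is correct. The combinatorial claim --- that for $k \in \{1,2,3,4,6\}$ every integer partition of $k$ contains a part dividing $k$ --- is verified by your enumeration, and together with Lemma~\ref{lem:dom_by_max} and the subring case of Corollary~\ref{cor:ModHomomorphism} it yields the conclusion cleanly.

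As for comparison: the present paper (\PartTwo) does not actually prove Lemma~\ref{lem:p12346}; it imports the result wholesale from \PartOne{} \cite[Theorem~\PartOneTheoremPFiveSeven]{Connelly-Ring1}, just as it imports Lemma~\ref{lem:dom_by_max}. So there is no proof here to compare against directly. That said, your route --- reducing to Lemma~\ref{lem:dom_by_max} and then invoking the divisor-in-every-partition property of $\{1,2,3,4,6\}$ --- is exactly the natural derivation and is almost certainly what \PartOne{} does as well (the partition observation is the obvious explanation for the peculiar set $\{1,2,3,4,6\}$, and your remark about $k=5$ and $k=7$ matches the paper's need to treat $p^5$ separately in Lemma~\ref{lem:p5} and Theorem~\ref{thm:p23456}(d)). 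Within \PartTwo{} your argument is self-contained modulo the two cited results, so nothing is missing.
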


Lemma~\ref{lem:ring_size_p123} characterizes the non-commutative rings of prime-power size
whose multiplicity is at most three.

\begin{lemma}{\cite[pp. 512--513]{Eldridge-Orders}}
  For each prime $p$,
  all rings of size $p$ and of size $p^2$ are commutative,
  and the ring of all upper-triangular $2 \times 2$ matrices over $\GF{p}$
  is the only non-commutative ring of size $p^3$.
  \label{lem:ring_size_p123}
\end{lemma}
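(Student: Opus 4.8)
The plan is to classify, from scratch, all finite unital rings of order $p$, $p^2$, and $p^3$. The key elementary observation is: if $\langle 1\rangle$ denotes the additive subgroup generated by the identity and $R/\langle 1\rangle$ is cyclic, then $R$ is generated as a ring by $1$ together with a single element, hence equals $\{\,f(x):f\in\Z[t]\,\}$ and is commutative. The additive group of a ring of order $p^k$ has exponent equal to the additive order of $1$ (the characteristic), which is $p^j$ for some $j\le k$; if $j=k$ then $R\cong\Z_{p^k}$, and if $j=k-1$ then $R/\langle 1\rangle$ has order $p$ and is cyclic. This handles all rings of order $p$ and $p^2$ immediately, and reduces the order-$p^3$ case to rings of characteristic $p$, i.e.\ to $3$-dimensional unital $\F_p$-algebras.

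So the heart of the argument is to show that the only non-commutative $3$-dimensional unital $\F_p$-algebra is $T$, the ring of upper-triangular $2\times2$ matrices over $\F_p$ (which has order $p^3$ and is non-commutative). I would proceed via the Jacobson radical $J=\mathrm{rad}(R)$, splitting on the $\F_p$-dimension of $J$, which is $0$, $1$, or $2$ (not $3$, since $1\notin J$). If $J=0$, then $R$ is semisimple of dimension $3$, hence by the Artin--Wedderburn theorem (cf.\ Lemma~\ref{lem:simple_rings}) a product of matrix rings over finite fields; dimension $3$ forces every factor to be a field, so $R$ is commutative. If $J$ has dimension $2$, then $R$ is local with $R=\F_p\cdot 1\oplus J$ and residue field $\F_p$; Nakayama's lemma gives $J^2\subsetneq J$, so either $J^2=0$, in which case $(a\cdot 1+u)(b\cdot 1+v)=ab\cdot 1+av+bu$ is symmetric in the two factors, or $J^2$ is $1$-dimensional, in which case picking $x\in J\setminus J^2$ one checks that $J^3=0$, $J^2=\F_p x^2$, and $J=\F_p x\oplus\F_p x^2$, so $R\cong\F_p[t]/(t^3)$; either way $R$ is commutative.

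The delicate case is $\dim J=1$. Then $J^2=0$ and $R/J$ is semisimple of dimension $2$, hence isomorphic to $\F_{p^2}$ or to $\F_p\times\F_p$. I would exclude $R/J\cong\F_{p^2}$ via the Wedderburn--Malcev splitting (available since $\F_{p^2}/\F_p$ is separable): it gives $R=S\oplus J$ with $S\cong\F_{p^2}$ a subalgebra, so $J$ is a nonzero $S$-bimodule and therefore has $\F_p$-dimension divisible by $2$, contradicting $\dim J=1$. When $R/J\cong\F_p\times\F_p$, I would lift the two orthogonal idempotents to orthogonal idempotents $e_1,e_2\in R$ with $e_1+e_2=1$ and use the Peirce decomposition $J=e_1Je_1\oplus e_1Je_2\oplus e_2Je_1\oplus e_2Je_2$, in which exactly one summand is $1$-dimensional. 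If a ``diagonal'' summand $e_iJe_i$ is the nonzero one, then the Peirce cross-terms of $R$ vanish and $R$ is the direct product of the field $\F_p e_j$ ($j\ne i$) with a $2$-dimensional commutative local ring, hence commutative. If an ``off-diagonal'' summand is nonzero, say $e_1Je_2=\F_p x$, then $\{e_1,e_2,x\}$ is a basis with $e_1x=xe_2=x$ and $xe_1=e_2x=x^2=0$, which is exactly $T$ under $e_1\mapsto E_{11}$, $e_2\mapsto E_{22}$, $x\mapsto E_{12}$ (the $e_2Je_1$ case yields the transposed, isomorphic ring), and this ring is non-commutative.

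Collecting the cases shows $T$ is the unique non-commutative ring among those of order $p$, $p^2$, or $p^3$. The main obstacle is precisely the $\dim J=1$ case: it is the only place non-commutativity can arise, and the only part of the argument requiring the two non-elementary inputs above (idempotent lifting / Wedderburn--Malcev and the Peirce decomposition); everything else follows from the ``generated by $1$ and one element'' reduction or from short nilpotent computations. A minor point to get right is the mixed-characteristic situation (additive group $\Z_{p^2}\times\Z_p$), where one must confirm that $\langle 1\rangle$ has order $p^2$ — which holds because the additive exponent of a ring equals the additive order of its identity.
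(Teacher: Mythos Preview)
The paper does not give its own proof of this lemma; it simply quotes the result from Eldridge \cite{Eldridge-Orders} and uses it as a black box. Your proposal, by contrast, supplies a complete self-contained argument, and the argument is correct. The reduction via ``exponent of $(R,+)$ equals the additive order of $1$'' cleanly disposes of orders $p$, $p^2$, and the characteristic-$p^2$ and $p^3$ cases of order $p^3$; the remaining case of $3$-dimensional unital $\F_p$-algebras is then handled by a standard Jacobson-radical case split, with Wedderburn--Malcev and the Peirce decomposition doing the real work in the $\dim J=1$ branch. Each of the subcases you describe checks out, including the bimodule-dimension parity argument ruling out $R/J\cong\GF{p^2}$ and the explicit identification with the upper-triangular ring when an off-diagonal Peirce block survives.

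So there is nothing to compare against in the paper itself: what you have written is strictly more than the paper provides, and it is sound. The one stylistic remark is that your proof leans on Artin--Wedderburn (already available in the paper as Lemma~\ref{lem:simple_rings}) together with idempotent lifting and Wedderburn--Malcev, which are heavier tools than strictly necessary for a $3$-dimensional algebra; Eldridge's original note proceeds more by bare hands. But your route is cleaner, generalizes better, and fits naturally with the ring-theoretic machinery the paper is already using in the Appendix.
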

%

We remark that there exist rings of size $p$ and $p^2$ without identity.
For example, the set
$\left\{0,2,4,6 \right\}$
with mod $8$ addition and multiplication
satisfies all of the properties of a ring except 
there is no multiplicative identity.
However, such rings (sometimes called ``rngs'')
do not appear to be practical for linear network coding,
as receivers must recover their demands from linear combinations of their inputs.

For example, 
consider the trivial network shown in Figure~\ref{fig:trivial}
consisting of a single message $x$ emitted by a source directly 
connected by a single edge to a receiver demanding message $x$.
The only possible linear functions that can be carried on the edge
are of the form $cx$ for some fixed $c \in \{0,2,4,6\}$.
However, no matter what the choice of $c$ is,
the messages $0$ and $4$ always get received as $0$ mod $8$,
so the receiver cannot uniquely determine $x$ in general.
Thus, there is no linear solution for the network over this ring 
(with no multiplicative identity).
A similar issue arises for the set $\{0,2\}$
with mod $4$ addition and multiplication,
which also satisfies all of the properties of a ring except
there is no multiplicative identity.

\psfrag{x}{$x$}
\begin{figure}[h]
  \begin{center}
    \leavevmode
    \hbox{\epsfxsize=.35\textwidth\epsffile{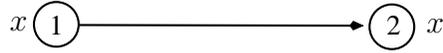}}
  \end{center}
  \caption{A trivial network with one message $x$ that is demanded by the receiver. }
\label{fig:trivial}
\end{figure}

\begin{lemma}
  For each prime $p$,
  if a network is scalar linearly solvable over some ring of size $p^2$,
  then it is a scalar linearly solvable over $\GF{p^2}$.
  \label{lem:p2}
\end{lemma}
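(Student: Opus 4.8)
The plan is to combine the classification of small rings with the Part~I reduction for commutative rings of these sizes, so that the lemma becomes an immediate two-step deduction. First I would invoke Lemma~\ref{lem:ring_size_p123}, which asserts that every (unital) ring of size $p^2$ is commutative; consequently, if the network is scalar linearly solvable over some ring of size $p^2$, that ring is in fact a commutative ring of size $p^2$. Then I would apply Lemma~\ref{lem:p12346} with $k = 2$ (and $2 \in \{1,2,3,4,6\}$), which states that scalar linear solvability over a commutative ring of size $p^k$ implies scalar linear solvability over $\GF{p^k}$; taking $k = 2$ yields scalar linear solvability over $\GF{p^2}$, as desired.

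There is essentially no obstacle to overcome: both of the substantive ingredients are already in hand — the structural fact that all rings of size $p^2$ are commutative (Lemma~\ref{lem:ring_size_p123}) and the reduction from commutative rings of size $p^k$ for $k \in \{1,2,3,4,6\}$ down to the field $\GF{p^k}$ (Lemma~\ref{lem:p12346}, from Part~I). The only point worth flagging is that the paper's standing convention that rings are unital is exactly what makes Lemma~\ref{lem:ring_size_p123} applicable here, and the preceding discussion of non-unital ``rngs'' (for instance $\{0,2,4,6\}$ under arithmetic mod $8$) explains why relaxing the identity requirement is irrelevant for linear network coding. Thus this lemma is really just the $k = 2$ special case of the more general statement to come, recorded on its own because, unlike the larger cases, it needs no additional machinery beyond the two cited results.
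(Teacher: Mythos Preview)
Your proposal is correct and matches the paper's proof essentially verbatim: invoke Lemma~\ref{lem:ring_size_p123} to conclude any ring of size $p^2$ is commutative, then apply Lemma~\ref{lem:p12346} with $k=2$. The additional commentary about the unital convention is accurate but not needed for the argument.
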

\begin{proof}
  By Lemma~\ref{lem:ring_size_p123},
  every ring of size $p^2$ is commutative,
  and by Lemma~\ref{lem:p12346},
  every network that is scalar linearly solvable over some commutative ring of size $p^2$
  has a scalar linear solution over $\GF{p^2}$.
\end{proof}

\newcommand{\MatBig}[1]{ 
\left[ \begin{array}{ccc}
    #1_{1,1} & \cdots & #1_{1,k} \\
      & \ddots & \vdots \\
    \text{\LARGE0} &  & #1_{k,k}
  \end{array} \right]}

By Lemma~\ref{lem:ring_size_p123},
the smallest non-commutative ring is the ring of the $8$ binary upper-triangular $2\times 2$ matrices.
As a special case of the following lemma,
any network that is scalar linearly solvable over this ring
must also have a scalar linear solution over $\GF{2}$.

\begin{lemma}
  For each finite field $\F$ and integer $k \geq 2$,
  any network that is scalar linearly solvable over the ring of upper-triangular $k \times k$ matrices over $\F$
  is also scalar linearly solvable over $\F$.
  \label{lem:p3_matrices}
\end{lemma}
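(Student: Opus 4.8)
The plan is to exhibit a surjective ring homomorphism from the ring $T_k(\F)$ of upper-triangular $k\times k$ matrices over $\F$ onto the field $\F$, and then invoke Corollary~\ref{cor:ModHomomorphism}. The natural candidate is the map $\phi: T_k(\F)\to\F$ that sends an upper-triangular matrix to one of its diagonal entries, say the $(1,1)$ entry: $\phi\left(\MatBig{A}\right)=A_{1,1}$. First I would check that $\phi$ is additive, which is immediate since matrix addition is entrywise. Then I would check multiplicativity: if $A$ and $B$ are upper-triangular, then $(AB)_{1,1}=\sum_{j} A_{1,j}B_{j,1}$, but $B_{j,1}=0$ for $j>1$ since $B$ is upper-triangular, so $(AB)_{1,1}=A_{1,1}B_{1,1}=\phi(A)\phi(B)$. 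The identity matrix maps to $1\in\F$, so $\phi$ preserves the multiplicative identity as required by our homomorphism convention. Surjectivity is clear: for any $c\in\F$, the matrix $cI$ (or just $c$ in the $(1,1)$ slot and zeros elsewhere) maps to $c$.

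Having established that $\phi$ is a surjective ring homomorphism, the result follows: by Corollary~\ref{cor:ModHomomorphism}, any network that is scalar linearly solvable over $T_k(\F)$ is also scalar linearly solvable over $\F$. This also explains the parenthetical remark in Example~(ii) earlier in the paper, where it was asserted that there is a surjective homomorphism from the upper-triangular matrix ring onto the commutative simple ring $\F$.

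I do not anticipate a genuine obstacle here — the argument is short and essentially a routine verification. The only mild subtlety worth being careful about is the direction of the homomorphism and the fact that it must preserve $1$; one must use the $(1,1)$ entry (equivalently any fixed diagonal position) rather than, say, an off-diagonal entry, and the upper-triangularity of both factors is exactly what forces $(AB)_{1,1}=A_{1,1}B_{1,1}$. (Using the $(1,1)$ corner works because the first column of an upper-triangular matrix has support only in the first row; symmetrically the $(k,k)$ corner would also work, using that the last row has support only in the last column.) If one instead wanted a self-contained argument avoiding Corollary~\ref{cor:ModHomomorphism}, one could take any scalar linear solution over $T_k(\F)$ and replace every coefficient matrix $M$ by $\phi(M)=M_{1,1}\in\F$, but appealing to the already-proved corollary is cleaner.
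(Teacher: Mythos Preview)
Your proposal is correct and essentially identical to the paper's own proof: both define $\phi:T_k(\F)\to\F$ by $\phi(A)=A_{1,1}$, verify that $\phi$ is a surjective ring homomorphism preserving identities, and then invoke Corollary~\ref{cor:ModHomomorphism}. Your write-up is slightly more explicit about why $(AB)_{1,1}=A_{1,1}B_{1,1}$, but the argument is the same.
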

\begin{proof}
  Let $R$ be the ring of upper-triangular $k \times k$ matrices with entries in $\F$
  and let
  $\phi: R \to \F$ be given by
  $$\phi\left(
  \MatBig{a} \right)
  = a_{1,1}.$$
  Then $\phi$ is clearly surjective and preserves identities,
  and for any $A,B \in R$,
  \begin{align*}
  \phi(A + B)
  &= a_{1,1} + b_{1,1} 
  = \phi(A) + \phi(B)
  \\
  \phi(A B)
  &= a_{1,1}\, b_{1,1} 
  = \phi(A) \phi(B).
  \end{align*}
  Thus $\phi$ is a surjective homomorphism,
  so by Corollary~\ref{cor:ModHomomorphism},
  any network that is scalar linearly solvable over $R$
  is scalar linearly solvable over $\F$.
\end{proof}

\begin{lemma}
  For each prime $p$,
  if a network is scalar linearly solvable over some ring of size $p^3$,
  then it is scalar linearly solvable over $\GF{p^3}$.
  \label{lem:p3}
\end{lemma}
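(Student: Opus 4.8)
The plan is to split into cases according to whether the ring $R$ of size $p^3$ is commutative, using Lemma~\ref{lem:ring_size_p123} to completely pin down the non-commutative case. Suppose $\Network$ is scalar linearly solvable over a ring $R$ with $|R| = p^3$.

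\textbf{Commutative case.} If $R$ is commutative, then since $3 \in \{1,2,3,4,6\}$, Lemma~\ref{lem:p12346} applies directly and yields that $\Network$ is scalar linearly solvable over $\GF{p^3}$, finishing this case.

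\textbf{Non-commutative case.} If $R$ is non-commutative, then by Lemma~\ref{lem:ring_size_p123} the ring $R$ is isomorphic to the ring of upper-triangular $2 \times 2$ matrices over $\GF{p}$. Applying Lemma~\ref{lem:p3_matrices} with $\F = \GF{p}$ and $k = 2$ shows that $\Network$ is scalar linearly solvable over $\GF{p}$. It then remains only to upgrade $\GF{p}$ to $\GF{p^3}$: since $1$ divides $3$, the field $\GF{p}$ is a subfield (hence a subring) of $\GF{p^3}$, so the inclusion map is a ring homomorphism, and Corollary~\ref{cor:ModHomomorphism} (equivalently, special case (1) listed after it) shows that $\Network$ is scalar linearly solvable over $\GF{p^3}$.

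There is no substantive obstacle in this proof; the only point meriting care is recognizing that the non-commutative branch, via Lemma~\ref{lem:p3_matrices}, delivers solvability only over the prime field $\GF{p}$ rather than over $\GF{p^3}$ directly. This gap is closed at no cost by the subfield embedding $\GF{p} \hookrightarrow \GF{p^3}$ together with Corollary~\ref{cor:ModHomomorphism}.
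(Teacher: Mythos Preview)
Your proof is correct and follows essentially the same approach as the paper: both split into the commutative and non-commutative cases, invoke Lemma~\ref{lem:p12346} for the former, and use Lemma~\ref{lem:ring_size_p123} together with Lemma~\ref{lem:p3_matrices} and the subring inclusion $\GF{p}\subseteq\GF{p^3}$ for the latter. The only cosmetic difference is that you explicitly cite Corollary~\ref{cor:ModHomomorphism} for the subring step, whereas the paper states it directly.
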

\begin{proof}
  By Lemma~\ref{lem:ring_size_p123},
  the only non-commutative ring of size $p^3$ is the ring of upper triangular matrices with entries in $\GF{p}$,
  and by Lemma~\ref{lem:p3_matrices},
  any network that is scalar linearly solvable over this ring
  is also scalar linearly solvable over $\GF{p}$.
  Since $\GF{p}$ is a subring of $\GF{p^3}$,
  any network that is scalar linearly solvable over $\GF{p}$
  is scalar linearly solvable over $\GF{p^3}$.

  By Lemma~\ref{lem:p12346},
  every network that is scalar linearly solvable over some commutative ring of size $p^3$
  has a scalar linear solution over $\GF{p^3}$.
\end{proof}

The following three lemmas are proved in the Appendix.

\begin{lemma}
  For each prime $p$,
  if a network is scalar linearly solvable over some ring of size $p^4$,
  then it is scalar linearly solvable over at least one of the rings $\GF{p^4}$ or $M_2(\GF{p})$.
  \label{lem:p4}
\end{lemma}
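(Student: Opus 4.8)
The strategy is the same reduction we used for $p^2$ and $p^3$: reduce to simple rings via the homomorphism to the maximal quotient, and then handle the commutative and non-commutative simple cases separately. So suppose a network $\Network$ is scalar linearly solvable over a ring $R$ with $|R| = p^4$. By Lemmas~\ref{lem:simple_rings} and \ref{lem:hom_to_simple}~(b)(c), there is a field $\F$ and a positive integer $n$ such that $\Network$ is scalar linearly solvable over the simple ring $M_n(\F)$ and $|\F|^{n^2}$ divides $p^4$. Thus $\F$ has characteristic $p$, say $|\F| = p^d$, and $dn^2 \le 4$. The possible pairs $(d,n)$ are $(1,1), (2,1), (3,1), (4,1), (1,2)$, i.e. the simple ring $M_n(\F)$ is one of $\GF{p}$, $\GF{p^2}$, $\GF{p^3}$, $\GF{p^4}$, or $M_2(\GF{p})$. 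If it is $M_2(\GF{p})$, we are done. If it is $\GF{p^4}$, we are done. If it is $\GF{p}$, then since $\GF{p}$ is a subring of $\GF{p^4}$, Corollary~\ref{cor:ModHomomorphism} gives a scalar linear solution over $\GF{p^4}$.

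The remaining cases are $\GF{p^2}$ and $\GF{p^3}$, where I need to climb from the small field up to $\GF{p^4}$ or $M_2(\GF{p})$. For $\GF{p^2}$: by Lemma~\ref{lem:kn-vector} a scalar linear solution over $\GF{p^2}$ yields a $2$-dimensional vector linear solution over $\GF{p}$, hence by Corollary~\ref{cor:same_ring} a scalar linear solution over $M_2(\GF{p})$ — done. For $\GF{p^3}$: since $\GF{p^3}$ is \emph{not} a subfield of $\GF{p^4}$ and $3 \nmid 4$, I cannot directly embed or apply Lemma~\ref{lem:kn-vector} to land in $M_2(\GF{p})$; this is the case requiring more care. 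However, $\GF{p}$ is a subfield of $\GF{p^3}$, so by Lemma~\ref{lem:p12346} (with $k=3$) — wait, that goes the wrong direction. Instead, the relevant observation is: can a network actually have its \emph{smallest} simple ring be $\GF{p^3}$ while having no solution over $\GF{p}$, $\GF{p^4}$, or $M_2(\GF{p})$?

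The clean way to close the $\GF{p^3}$ case is to note that, by Lemma~\ref{lem:kchoosetwo_fields}-style reasoning is not available, but rather: a scalar linear solution over $\GF{p^3}$ need not give anything over $\GF{p^4}$ or $M_2(\GF{p})$ in general, so the truth of the lemma in this case must rest on the fact that $R$ itself has size exactly $p^4$, and $\GF{p^3}$ is a quotient of $R$ only if $p^3 \mid p^4$, which holds. So one should look more carefully: the homomorphism $R \to \GF{p^3}$ has kernel of size $p$, and one can try to analyze rings of order $p^4$ that surject onto $\GF{p^3}$. Concretely, I expect the Appendix proof to enumerate (or invoke a classification of) rings of order $p^4$ whose maximal quotient is $\GF{p^3}$, and show that each such ring also admits a homomorphism into $\GF{p^4}$, or else that such a ring must actually be $\GF{p^3} \times \GF{p}$-like and hence commutative, allowing Lemma~\ref{lem:p12346} to apply directly after all. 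The main obstacle is precisely this $\GF{p^3}$ subcase: showing that a ring of order $p^4$ surjecting onto $\GF{p^3}$ forces scalar linear solvability over $\GF{p^4}$, which likely requires either a hands-on structure theorem for order-$p^4$ rings with a $\GF{p^3}$ quotient, or a clever argument that the kernel (a one-dimensional module/ideal) can be "absorbed" to extend the solution. Everything else — the reduction to simple rings, the enumeration of $(d,n)$, the $\GF{p^2}$ step via Lemma~\ref{lem:kn-vector}, and the subring embeddings $\GF{p} \hookrightarrow \GF{p^4}$ — is routine.
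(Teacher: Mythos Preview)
Your reduction via Lemma~\ref{lem:hom_to_simple} to a single simple quotient $M_n(\F)$ with $|\F|^{n^2}\mid p^4$ is sound, and your handling of the cases $\GF{p}$, $\GF{p^2}$, $\GF{p^4}$, $M_2(\GF{p})$ is fine (note that $\GF{p^2}$ is already a subring of $\GF{p^4}$, so the detour through Lemma~\ref{lem:kn-vector} is unnecessary there). The genuine gap is exactly where you flag it: the $\GF{p^3}$ case is left unresolved, and your speculation about ``$\GF{p^3}\times\GF{p}$-like'' rings, while on the right track, is not an argument.

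The paper closes this gap with one structural fact you are missing. Rather than passing to \emph{some} simple quotient, the paper works with the Jacobson radical $J$ of $R$ and the full semisimple quotient $R/J$ (Lemma~\ref{lem:semisimple_radical}), then enumerates the semisimple rings of size dividing $p^4$. The case $R/J\cong\GF{p^3}$ is not handled by climbing up to $\GF{p^4}$; it is \emph{ruled out entirely} by Lemma~\ref{lem:local_rings}: if $R/J$ is a field then $R$ is local, and for a local ring of order $p^s$ with residue field $\GF{p^t}$ one has $t\mid s$. Here $s=4$, $t=3$, and $3\nmid 4$, so no ring of order $p^4$ has $R/J\cong\GF{p^3}$. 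In your single-quotient framework the same fact finishes the job: if $R$ surjects onto $\GF{p^3}$ then (since $J$ lies in every maximal ideal) $R/J$ surjects onto $\GF{p^3}$, so $|R/J|\in\{p^3,p^4\}$; the first is impossible by the divisibility above, and the second forces $R\cong R/J\cong\GF{p^3}\times\GF{p}$, whence $R$ also surjects onto $\GF{p}\subset\GF{p^4}$. Either way, the missing ingredient is the local-ring divisibility $t\mid s$, not a bespoke classification of order-$p^4$ rings.
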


\begin{lemma}
  For each prime $p$,
  if a network is scalar linearly solvable over some ring of size $p^5$,
  then it is scalar linearly solvable over at least one of the rings $\GF{p^5}$ or $\GF{p^3} \times \GF{p^2}$.
  \label{lem:p5}
\end{lemma}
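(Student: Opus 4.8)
The plan is to reduce the statement to the semisimple quotient $\overline{R} := R/J(R)$, where $J(R)$ is the Jacobson radical, and then to show that this quotient is always ``large enough'' to force scalar linear solvability over $\GF{p^5}$ or $\GF{p^3}\times\GF{p^2}$. Let $\Network$ be scalar linearly solvable over a ring $R$ with $|R| = p^5$. First I would recall the standard facts (citable from \cite{McDonald-FiniteRings} and \cite{Lam-Noncommutative}) that for a finite ring the ideal $J(R)$ is nilpotent and $\overline{R}$ is semisimple, so by Artin--Wedderburn (as in Lemma~\ref{lem:simple_rings}) there is an isomorphism $\overline{R}\cong M_{n_1}(\GF{p^{d_1}})\times\cdots\times M_{n_r}(\GF{p^{d_r}})$ with $d_1 n_1^2 + \cdots + d_r n_r^2 = s$, where $|\overline{R}| = p^s$ and $s\le 5$. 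Since $R\to\overline{R}$ is a ring homomorphism, Corollary~\ref{cor:ModHomomorphism} gives that $\Network$ is scalar linearly solvable over $\overline{R}$, and, composing with the projection onto any direct factor, over each $M_{n_i}(\GF{p^{d_i}})$; here a factor equals $\GF p$ exactly when $d_i n_i^2 = 1$, that is, when $d_i = n_i = 1$.

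The heart of the argument is the claim that $\overline{R}$ is isomorphic to $\GF{p^5}$, to $\GF{p^3}\times\GF{p^2}$, or to a ring having $\GF p$ as a direct factor. If $R$ is semisimple, then $s = 5$ and the claim amounts to the observation that the only ways to write $5$ as a sum of integers $d_i n_i^2$ each at least $2$ are $5$ itself (giving $\overline{R}\cong\GF{p^5}$) and $3+2$ (giving $\overline{R}\cong\GF{p^3}\times\GF{p^2}$), every other semisimple ring of size $p^5$ having a summand equal to $1$ and hence a $\GF p$ factor. If $R$ is not semisimple, suppose toward a contradiction that $\overline{R}$ has no $\GF p$ factor, i.e.\ $d_i n_i\ge 2$ for all $i$. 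Then by Lemma~\ref{lem:submodules} every simple $\overline{R}$-module --- the simple module of $M_{n_i}(\GF{p^{d_i}})$ has size $p^{d_i n_i}$ --- has size at least $p^2$, so every nonzero $\overline{R}$-module has size at least $p^2$. Since $J(R)$ is nonzero and nilpotent we have $J(R)^2\subsetneq J(R)$, so $J(R)/J(R)^2$ is a nonzero $\overline{R}$-module, giving $|J(R)|\ge p^2$ and thus $|\overline{R}|\le p^3$. But a semisimple ring of size at most $p^3$ with no $\GF p$ factor must be $\GF{p^2}$ or $\GF{p^3}$, and then each layer $J(R)^j/J(R)^{j+1}$ of the radical filtration is a vector space over that field, forcing $|J(R)|$ to be a power of $p^2$ (respectively $p^3$) --- contradicting $|J(R)| = p^3$ (respectively $p^2$).

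Granting the claim, the proof concludes as follows. If $\overline{R}\cong\GF{p^5}$ or $\overline{R}\cong\GF{p^3}\times\GF{p^2}$ we are done. Otherwise $\overline{R}$ has $\GF p$ as a direct factor, so composing the quotient map with the corresponding projection yields a surjective ring homomorphism $R\to\GF p$, whence by Corollary~\ref{cor:ModHomomorphism} the network $\Network$ is scalar linearly solvable over $\GF p$; since $\GF p$ is a subring of each of $\GF{p^2}$ and $\GF{p^3}$, applying Corollary~\ref{cor:ModHomomorphism} once more shows $\Network$ is scalar linearly solvable over each of these two fields, and the Cartesian product of the two solutions (exactly as in the proof of Theorem~\ref{thm:commutative_implies_vector}) is a scalar linear solution over $\GF{p^3}\times\GF{p^2}$.

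The main obstacle I anticipate is the structural claim, and within it the non-semisimple case --- specifically ruling out $\overline{R}\cong\GF{p^2}$ and $\overline{R}\cong\GF{p^3}$. This is the one place where the exponent $5$ genuinely enters: it is the interaction between $|R| = p^5$ and the fact that finite modules over $\GF{p^2}$ and $\GF{p^3}$ have orders that are powers of $p^2$ and $p^3$ respectively which produces the divisibility contradiction. Everything else --- the Artin--Wedderburn structure of $\overline{R}$, the descent of solvability along the quotient and projection maps, the subring inclusions $\GF p\subset\GF{p^2}$ and $\GF p\subset\GF{p^3}$, and the Cartesian product construction --- is routine given the machinery already developed in the paper.
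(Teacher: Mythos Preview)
Your argument is correct and shares the paper's overall strategy---pass to the semisimple quotient $R/J$ and analyze its Artin--Wedderburn decomposition---but the execution differs in two places worth noting. First, for the possibility $|R/J|=p^4$ the paper does not attempt to rule out $M_2(\GF{p})$, $\GF{p^4}$, or $\GF{p^2}\times\GF{p^2}$ as quotients; instead it observes that then $|J|=p$, regards $J$ itself as an $R$-module of size $p$, and invokes Theorem~\ref{thm:min_module} to conclude solvability over $\GF{p}$. You instead establish directly (via the minimum size of a nonzero $\overline{R}$-module) that none of these three rings can occur as $R/J$ when $|R|=p^5$, which is a sharper structural statement. Second, to exclude $R/J\cong\GF{p^2}$ or $\GF{p^3}$ the paper simply cites Lemma~\ref{lem:local_rings}(c), whereas your radical-filtration argument essentially reproves that divisibility constraint in place. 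A minor cosmetic difference: in the $\GF{p}$-factor case you finish at $\GF{p^3}\times\GF{p^2}$ via a product code, while the paper finishes at $\GF{p^5}$ via the inclusion $\GF{p}\subset\GF{p^5}$; either is fine. Your route is a bit more self-contained and yields a cleaner intermediate claim; the paper's $|J|=p$ module trick is a technique that applies even without knowing which $p^4$-cases are vacuous.
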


\begin{lemma}
  For each prime $p$,
  if a network is scalar linearly solvable over some ring of size $p^6$,
  then it is scalar linearly solvable over $\GF{p^6}$.
  \label{lem:p6}
\end{lemma}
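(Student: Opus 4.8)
The plan is to pass to the semisimple quotient of the ring, read off its Wedderburn decomposition, and then either extract a scalar linear solution over a subfield of $\GF{p^6}$ directly, or use a size count on the few surviving possibilities together with a module-size argument on the Jacobson radical. Concretely, suppose $\Network$ is scalar linearly solvable over a ring $R$ with $|R| = p^6$, and set $\overline R := R/J(R)$ with $J(R)$ the Jacobson radical. Since $R$ is finite, $R \to \overline R$ is a surjective ring homomorphism (Lemma~\ref{lem:surjective_homomorphism}), $\overline R$ is semisimple, and hence $\overline R \cong M_{k_1}(\GF{p^{d_1}}) \times \cdots \times M_{k_s}(\GF{p^{d_s}})$, a product of finite simple rings that are matrix rings over fields of characteristic $p$ (Lemma~\ref{lem:simple_rings}), with $|\overline R| = p^{\,d_1 k_1^2 + \cdots + d_s k_s^2}$ dividing $p^6$. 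Applying Corollary~\ref{cor:ModHomomorphism} to $R \to \overline R$ and then to each coordinate projection $\overline R \to M_{k_i}(\GF{p^{d_i}})$, we get that $\Network$ is scalar linearly solvable over each factor $M_{k_i}(\GF{p^{d_i}})$.

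Next I would split on whether some factor is a field $\GF{p^d}$ with $d \in \{1,2,3\}$. If so, then $d \mid 6$, so $\GF{p^d}$ embeds as a subring of $\GF{p^6}$, and Corollary~\ref{cor:ModHomomorphism} applied to that inclusion gives a scalar linear solution of $\Network$ over $\GF{p^6}$, finishing the proof. If no factor is such a field, then every factor $M_{k_i}(\GF{p^{d_i}})$ has $k_i \ge 2$ or $d_i \ge 4$, hence $|M_{k_i}(\GF{p^{d_i}})| = p^{d_i k_i^2} \ge p^4$; since $|\overline R| \le p^6 < p^8$, there can be only one factor, and the matrix rings over fields of characteristic $p$ of size at most $p^6$ satisfying $k \ge 2$ or $d \ge 4$ are exactly $M_2(\GF p)$, $\GF{p^4}$, $\GF{p^5}$, and $\GF{p^6}$.

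It then remains to rule out the first three. If $\overline R \cong \GF{p^6}$ then $|J(R)| = 1$, so $R \cong \GF{p^6}$ and we are done. In each of the other three cases $|\overline R| < p^6$, so $J := J(R) \ne 0$; since $J$ is nilpotent, $J \ne J^2$ and thus $J/J^2 \ne 0$, and $J/J^2$ is naturally a nonzero $\overline R$-bimodule (the left and right actions commute by associativity, and $\GF p$ acts centrally). If $\overline R = \GF{p^4}$ or $\GF{p^5}$, then already as a one-sided module $J/J^2$ is a nonzero $\GF{p^d}$-vector space, so $|J/J^2| \ge p^d \ge p^4$; but $|J/J^2| \le |J| = p^6/|\overline R| \le p^2$, a contradiction. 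If $\overline R = M_2(\GF p)$, then $J/J^2$ is a nonzero module over $M_2(\GF p) \otimes_{\GF p} M_2(\GF p)^{\mathrm{op}} \cong M_4(\GF p)$, whose smallest nonzero module has size $p^4$, so $|J/J^2| \ge p^4 > p^2 \ge |J|$, again a contradiction. Hence these cases do not occur. (In particular this route recovers the $k=6$ instance without invoking Lemma~\ref{lem:p12346}.)

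The crux is excluding $\overline R \cong M_2(\GF p)$: a one-sided module count only forces $|J/J^2|$ to be a power of $p^2$, which is not yet a contradiction, so one genuinely needs the bimodule structure and the identification of $M_2(\GF p)$-bimodules with $M_4(\GF p)$-modules. This is also exactly the configuration that is unavoidable when $|R| = p^4$ (take $R = M_2(\GF p)$ itself), which is why Lemma~\ref{lem:p4} must keep $M_2(\GF p)$ as an alternative, whereas for $|R| = p^6$ the leftover $\GF p$-dimension available to the radical is too small to support a nilpotent ideal over $M_2(\GF p)$.
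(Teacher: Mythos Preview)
Your argument is correct. The overall structure---pass to $\overline R = R/J$, enumerate its Wedderburn factors, and observe that any factor $\GF{p^d}$ with $d\in\{1,2,3\}$ already embeds in $\GF{p^6}$---matches the paper's proof. Where you diverge is in how you exclude the three ``bad'' residues $\overline R \cong M_2(\GF p),\ \GF{p^4},\ \GF{p^5}$. The paper invokes Lemma~\ref{lem:local_rings}: since $\overline R$ is simple in these cases, $R$ is semi-local, hence $R \cong M_r(S)$ with $S$ local of size $p^s$ and residue field $\GF{p^t}$ with $t\mid s$; the constraints $r^2 s = 6$ together with $(r,t)=(2,1),(1,4),(1,5)$ then fail immediately. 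You instead argue directly on $J/J^2$: for the field cases this is the familiar ``each layer of the radical filtration is a vector space over the residue field'' count, and for $M_2(\GF p)$ you upgrade to the bimodule structure, identify $M_2(\GF p)$-bimodules with $M_4(\GF p)$-modules via $M_2(\GF p)\otimes_{\GF p} M_2(\GF p)^{\mathrm{op}}\cong M_4(\GF p)$, and use that the smallest nonzero such module has size $p^4$. This is a genuinely different (and self-contained) elimination, trading the cited structure theorem for a short tensor-product computation; it also makes transparent, as you note, why the same obstruction does not arise at $|R|=p^4$.
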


Theorem~\ref{thm:p23456}
is a generalization of Lemma~\ref{lem:p12346} to scalar linear codes over non-commutative rings.
Extending Theorem~\ref{thm:p23456} to $|R| = p^k$ for $k \geq 7$ is left as an open problem.

\begin{theorem}
  Let $p$ be a prime, and suppose $\Network$ is scalar linearly solvable over a ring $R$.
  \begin{itemize}[leftmargin=0.65cm]
    \item[(a)] If $|R| = p^2$, then $\Network$ is scalar linearly solvable over $\GF{p^2}$.
    \item[(b)] If $|R| = p^3$, then $\Network$ is scalar linearly solvable over $\GF{p^3}$.
    \item[(c)] If $|R| = p^4$, then $\Network$ is scalar linearly solvable over at least one of $\GF{p^4}$ or $M_2(\GF{p})$.
    \item[(d)] If $|R| = p^5$, then $\Network$ is scalar linearly solvable over at least one of $\GF{p^5}$ or $\GF{p^3} \times \GF{p^2}$.
    \item[(e)] If $|R| = p^6$, then $\Network$ is scalar linearly solvable over $\GF{p^6}$.
  \end{itemize}
  \label{thm:p23456}
\end{theorem}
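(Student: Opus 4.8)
The plan is to observe that Theorem~\ref{thm:p23456} is nothing more than the consolidation of the five lemmas already in place, so each part is obtained by a direct citation: part~(a) is Lemma~\ref{lem:p2}, part~(b) is Lemma~\ref{lem:p3}, part~(c) is Lemma~\ref{lem:p4}, part~(d) is Lemma~\ref{lem:p5}, and part~(e) is Lemma~\ref{lem:p6}. Thus the proof of the theorem itself is essentially a single sentence once those lemmas are available.

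It is worth recording the uniform structure behind all five cases, since that is where the substance lies. Given a ring $R$ with $|R| = p^k$ over which $\Network$ is scalar linearly solvable, Lemmas~\ref{lem:simple_rings} and~\ref{lem:hom_to_simple} produce a surjective homomorphism $\phi \colon R \to M_n(\F)$ with $M_n(\F)$ simple and $|\F|^{n^2}$ dividing $p^k$; hence $\F$ has characteristic $p$ and $n^2 \le k$. When $k \le 3$ this forces $n = 1$, so $R$ maps onto a field of characteristic $p$; combining this with Lemma~\ref{lem:p3_matrices} (which disposes of the non-semisimple rings, such as upper-triangular matrix rings) and Lemma~\ref{lem:p12346} (for the commutative rings) yields solvability over $\GF{p^k}$, giving parts~(a) and~(b). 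When $k = 4$ the alternative $n = 2$, $\F = \GF{p}$ genuinely occurs, which is exactly why part~(c) permits $M_2(\GF{p})$; for $k = 5$ the non-semisimple and decomposable possibilities force the alternative $\GF{p^3} \times \GF{p^2}$; and for $k = 6$ a finer case analysis over all ring structures of order $p^6$ (carried out in the Appendix) shows $\GF{p^6}$ always suffices.

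The main obstacle is therefore not the statement of Theorem~\ref{thm:p23456} but the enumeration-and-homomorphism arguments in Lemmas~\ref{lem:p4}, \ref{lem:p5}, and~\ref{lem:p6}: one must list, up to isomorphism, the relevant rings of each prime-power order, exhibit for each a surjective homomorphism onto $\GF{p^k}$ or the stated alternative, and then invoke Corollary~\ref{cor:ModHomomorphism} (together with the special case~(1) following it for the steps that pass through a subfield). Since those lemmas are assumed available here, the proof of the theorem reduces to the one-line citation of Lemmas~\ref{lem:p2}--\ref{lem:p6} indicated above.
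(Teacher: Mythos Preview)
Your proof is correct and matches the paper's own proof exactly: the paper's proof of Theorem~\ref{thm:p23456} is the single sentence ``This follows immediately from Lemmas~\ref{lem:p2}, \ref{lem:p3}, \ref{lem:p4}, \ref{lem:p5}, and \ref{lem:p6}.''

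One remark on your surrounding commentary (which is not part of the proof proper): your sketch of the mechanism behind Lemmas~\ref{lem:p4}--\ref{lem:p6} is not quite how the paper actually argues in the Appendix. You describe passing to a single simple quotient $M_n(\F)$ via Lemma~\ref{lem:hom_to_simple}. The paper instead takes the radical $J$ of $R$, uses that $R/J$ is semisimple (Lemma~\ref{lem:semisimple_radical}), enumerates the semisimple rings of the relevant sizes (not all rings of those sizes), and then applies Lemma~\ref{lem:p6_semi1} to each simple factor, together with Lemma~\ref{lem:local_rings} to rule out certain quotients. This finer decomposition is what makes the $p^5$ case land on $\GF{p^3}\times\GF{p^2}$ and what handles the $p^6$ case cleanly. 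Your single-simple-quotient picture would not by itself explain, for instance, why $R/J\cong\GF{p^3}$ is impossible when $|R|=p^4$.
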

\begin{proof}
  This follows immediately from Lemmas~\ref{lem:p2}, \ref{lem:p3}, \ref{lem:p4}, \ref{lem:p5}, and \ref{lem:p6}.
\end{proof}

We also note that by 
Corollary~\ref{cor:choose-two-ring},
the $(p^4+1)$-Choose-Two Network is scalar linearly solvable over $\GF{p^4}$ but not over $M_2(\GF{p})$
and the $(p^5+1)$-Choose-Two Network is scalar linearly solvable over $\GF{p^5}$ but not over $\GF{p^3} \times \GF{p^2}$.
By Corollary~\ref{cor:non-comm-ring},
the Dim-$2$ Network is scalar linearly solvable over $M_2(\GF{p})$ but not over $\GF{p^4}$.
We showed in \PartOne{} \cite[Theorem \PartOneTheoremOnlyThirtyTwo]{Connelly-Ring1}
that there exists a network that is scalar linearly solvable over $\GF{p^3} \times \GF{p^2}$
but not over $\GF{p^5}$.
Hence it is necessary to include both rings in (c) and (d) in Theorem~\ref{thm:p23456}.

\begin{corollary}
  Let $p$ be a prime and $k \in \{2,3,4,5,6\}$, 
  and suppose $\Network$ is scalar linearly solvable over a ring of size $p^k$.
  Then $\Network$ has a $k$-dimensional vector linear solution over $\GF{p}$.
  \label{cor:p23456}
\end{corollary}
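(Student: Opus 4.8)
The plan is to reduce Corollary~\ref{cor:p23456} to the five cases of Theorem~\ref{thm:p23456} together with the vector-lifting results already established, namely Corollary~\ref{cor:same_ring}, Lemma~\ref{lem:kn-vector}, Lemma~\ref{lem:dim_sum}, and Theorem~\ref{thm:dim-n_vector}. By Theorem~\ref{thm:p23456}, scalar linear solvability over a ring of size $p^k$ with $k\in\{2,3,4,5,6\}$ forces scalar linear solvability over one of a short explicit list of ``canonical'' rings of size $p^k$; for each such canonical ring $S$ I would simply exhibit a $k$-dimensional vector linear solution over $\GF{p}$.

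First I would handle the field cases $S=\GF{p^k}$ (arising for $k=2,3,6$ and as one alternative for $k=4,5$): by Lemma~\ref{lem:kn-vector}, a scalar linear solution over $\GF{p^k}$ yields a $k$-dimensional vector linear solution over $\GF{p}$ directly. Next, for $S=M_2(\GF{p})$ (the other alternative when $k=4$): by Corollary~\ref{cor:same_ring}, a scalar linear solution over $M_2(\GF{p})$ is exactly a $2$-dimensional vector linear solution over $\GF{p}$; applying Lemma~\ref{lem:dim_sum} with $k_1=k_2=2$ promotes this to a $4$-dimensional vector linear solution over $\GF{p}$, as desired since $k=4$. Finally, for $S=\GF{p^3}\times\GF{p^2}$ (the other alternative when $k=5$): projecting onto each factor (a surjective ring homomorphism, cf.\ Corollary~\ref{cor:ModHomomorphism}) gives scalar linear solutions over $\GF{p^3}$ and $\GF{p^2}$, hence by Lemma~\ref{lem:kn-vector} a $3$-dimensional and a $2$-dimensional vector linear solution over $\GF{p}$, and then Lemma~\ref{lem:dim_sum} with $k_1=3,k_2=2$ gives a $5$-dimensional vector linear solution over $\GF{p}$, matching $k=5$.

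Assembling these, in every case $\Network$ has a $k$-dimensional vector linear solution over $\GF{p}$, which is the claim. I do not anticipate a genuine obstacle here, since all the heavy lifting is already done in Theorem~\ref{thm:p23456} and the cited lemmas; the only thing to be careful about is bookkeeping the dimensions so that the dimension obtained from Lemma~\ref{lem:dim_sum} is exactly $k$ and not merely a divisor or multiple of something. One could alternatively phrase the $S=\GF{p^3}\times\GF{p^2}$ step using Theorem~\ref{thm:commutative_implies_vector} applied to that commutative ring of size $p^5$, which immediately yields the $5$-dimensional vector solution over $\GF{p}$; either route works, and I would present whichever reads most cleanly.
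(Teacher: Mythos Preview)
Your proposal is correct and follows essentially the same approach as the paper. The paper streamlines the presentation slightly by observing that for $k\in\{2,3,5,6\}$ the canonical rings from Theorem~\ref{thm:p23456} are all commutative, so Theorem~\ref{thm:commutative_implies_vector} handles those four cases uniformly (exactly the alternative you mention for the $\GF{p^3}\times\GF{p^2}$ case), and then treats $k=4$ separately just as you do.
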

\begin{proof}
  If $k \in \{2,3,5,6\}$,
  then by Theorem~\ref{thm:p23456},
  $\Network$ has a scalar linear solution over a commutative ring of size $p^k$,
  so by Theorem~\ref{thm:commutative_implies_vector},
  $\Network$ has a $k$-dimensional vector linear solution over $\GF{p}$.

  Now suppose $k=4$.
  If $\Network$ is scalar linearly solvable over $\GF{p^4}$,
  then by Lemma~\ref{lem:kn-vector},
  $\Network$ has a $4$-dimensional vector linear solution over $\GF{p}$.
  If $\Network$ is not scalar linearly solvable over $\GF{p^4}$,
  then by Theorem~\ref{thm:p23456} (c),
  $\Network$ must be scalar linearly solvable over $M_2(\GF{p})$,
  so by Corollary~\ref{cor:same_ring},
  $\Network$ has a $2$-dimensional vector linear solution over $\GF{p}$,
  in which case
  $\Network$ also has a $4$-dimensional vector linear solution over $\GF{p}$
  by Lemma~\ref{lem:dim_sum}.
\end{proof}

Theorem~\ref{thm:p6_implies_vector} generalizes the results of Theorem~\ref{thm:commutative_implies_vector}
to rings of size $m$ with prime factor multiplicity less than or equal to $6$.

\begin{theorem}
  Let $m$ be a positive integer with prime factorization
  $m = \PrimeFact{t}$.
  If a network $\Network$ has a scalar linear solution over a ring of size $m$,
  then,
  for each $i = 1,\dots,t$ such that $k_i \leq 6$,
  network $\Network$ has a $k_i$-dimensional vector linear solution over $\GF{p_i}$.
\label{thm:p6_implies_vector}
\end{theorem}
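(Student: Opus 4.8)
The plan is to peel the ring of size $m$ apart into its prime-power pieces and then apply Corollary~\ref{cor:p23456} to each piece. Suppose $\Network$ has a scalar linear solution over a ring $R$ with $|R| = m$. By Lemma~\ref{lem:prime-power-sizes}, $R$ is isomorphic to a direct product of rings of prime-power size; grouping together those factors whose size is a power of $p_i$, we may write $R \cong R_1 \times \cdots \times R_t$ with $|R_i| = p_i^{k_i}$ for each $i = 1,\dots,t$. For each $i$, the projection $R \to R_i$ is a surjective ring homomorphism, so by Corollary~\ref{cor:ModHomomorphism}, $\Network$ is scalar linearly solvable over $R_i$.

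Now fix an index $i$ with $k_i \le 6$. If $2 \le k_i \le 6$, then $\Network$ is scalar linearly solvable over a ring of size $p_i^{k_i}$, so Corollary~\ref{cor:p23456} gives directly that $\Network$ has a $k_i$-dimensional vector linear solution over $\GF{p_i}$. If $k_i = 1$, then $R_i$ is a ring of prime order $p_i$ with a multiplicative identity; its additive group is cyclic of order $p_i$ and is generated by $1$, which forces $R_i \cong \GF{p_i}$, so the scalar linear solution over $R_i$ is already a $1$-dimensional vector linear solution over $\GF{p_i}$. In either case $\Network$ has a $k_i$-dimensional vector linear solution over $\GF{p_i}$, and since $i$ was an arbitrary index with $k_i \le 6$, the theorem follows.

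I expect essentially no obstacle in this particular argument: it is pure bookkeeping once the prime-power decomposition and the projection maps are in hand. All of the substance is already contained in Corollary~\ref{cor:p23456}, which in turn rests on the classification of rings of size $p^k$ for $k \le 6$ (Lemmas~\ref{lem:ring_size_p123}, \ref{lem:p4}, \ref{lem:p5}, \ref{lem:p6}, i.e.\ Theorem~\ref{thm:p23456}) combined with Theorem~\ref{thm:commutative_implies_vector} and Lemma~\ref{lem:kn-vector}. The only point that needs a moment's care, since it falls outside the range of Corollary~\ref{cor:p23456}, is the $k_i = 1$ case, handled above by the elementary observation that a finite ring of prime order with identity is the prime field.
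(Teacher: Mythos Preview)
Your proof is correct and follows exactly the paper's approach: decompose $R$ into prime-power factors via Lemma~\ref{lem:prime-power-sizes}, project onto each factor using Corollary~\ref{cor:ModHomomorphism}, and then invoke Corollary~\ref{cor:p23456}. The only difference is that you treat the case $k_i = 1$ separately, since Corollary~\ref{cor:p23456} is stated only for $k \in \{2,3,4,5,6\}$; the paper tacitly folds this trivial case into its citation of the corollary, so your version is in fact slightly more careful.
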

\begin{proof}
  Suppose $\Network$ is scalar linearly solvable over a ring $R$ of size $m$.
  By Lemma~\ref{lem:prime-power-sizes},
  there exists rings $R_1,\dots,R_t$
  such that $R \cong R_1 \times \cdots \times R_t$
  and $|R_i| = p_i^{k_i}$ for all $i$.
  
  Now, let $i \in \{1,\dots,t\}$ and suppose $k_i \leq 6$.
  The projection mapping from $R$ to $R_i$
  is a surjective homomorphism,
  so by Corollary~\ref{cor:ModHomomorphism},
  network $\Network$ is scalar linearly solvable over $R_i$.
  Since $\Network$ is scalar linearly solvable over a ring of size $p_i^{k_i}$ where $k_i \leq 6$,
  by Corollary~\ref{cor:p23456},
  $\Network$ has a $k_i$-dimensional vector linear solution over $\GF{p_i}$.
\end{proof}

We leave as an open question whether
the restriction that $k_i \leq 6$ can be removed from the statement of Theorem~\ref{thm:p6_implies_vector}.
If this generalization is false, 
then for what primes $p$ and positive integers $k$
is it the case that there exists a network with
a scalar linear solution over a ring of size $p^k$
but with no $k$-dimensional vector linear solution over $\GF{p}$?
If such a ring and such a network do exist,
the ring must be non-commutative
and $k \geq 7$.
More generally, does there exist a network with a linear solution over some alphabet of size $p^k$
but with no $k$-dimensional vector linear solution over $\GF{p}$?

\clearpage
\section{Concluding Remarks}
\label{sec:complexity}

For each positive integer $k$ and prime $p$, we have shown 
the set
$$\{\Network \, : \, \Network \text{ has a scalar linear solution over some commutative ring of size } p^k  \}$$
is properly contained in
$$\{\Network \, : \, \Network \text{ has a $k$-dimensional vector linear solution over } \GF{p}\}.$$

So in this sense, $k$-dimensional vector linear codes over $\GF{p}$
may be advantageous compared to scalar linear codes over commutative rings of the same size $p^k$.
In addition, there are more
$k$-dimensional linear functions over $\GF{p}$
than there over a commutative ring of size $p^k$.
Vector linear codes over fields are also optimal in the sense that
they minimize the alphabet size needed for a linear solution over a particular network.
On the other hand, the complexity of implementing vector linear codes is generally higher than 
for scalar linear codes over commutative rings of the same size.

\clearpage

\makeatletter\renewcommand{\@seccntformat}[1]{}\makeatother
\appendix
\section{Appendix}

The main purpose of this Appendix is to prove Lemmas~\ref{lem:p4}, \ref{lem:p5}, and \ref{lem:p6},
which are used in the proof of Theorem~\ref{thm:p23456}.
It is an open question whether Theorem~\ref{thm:p6_implies_vector}
can be extended to all finite rings.
The techniques presented in this section may additionally be useful for examining such questions.

Recall that a finite ring is simple if it has no proper two-sided ideals.
The \textit{radical} of a ring $R$ is the intersection of all its maximal left ideals.
The radical of a ring is a two-sided ideal.
A finite ring $R$ with radical $J$ is said to be:
\begin{itemize}
  \item \textit{local}
  \footnote{If $R$ is a local commutative ring, then $R$ has a single maximal ideal,
  which corresponds to our definition of a commutative local ring in \PartOne.} 
  if $R/J$ is a field.
  \item \textit{semi-local} if $R/J$ is simple,
  or equivalently $R$ is isomorphic to a matrix over a local ring
  (e.g. \cite[p. 162]{McDonald-FiniteRings}).
  \item \textit{semi-simple} if $R$ is isomorphic to a direct product of simple rings (matrix rings over fields)
  or equivalently, $J = \{0\}$ (e.g. \cite[pp. 75, 128]{McDonald-FiniteRings}).
\end{itemize}

The following lemma is a result on local rings that will be used in later proofs.
\begin{lemma}
  Let $p$ be a prime, $k$ a positive integer,
  and $R$ a semi-local ring of size $p^k$.
  Then there exists a unique local ring $S$ and positive integers $r,s,t$
  such that the following hold:
  \begin{itemize}
  \item[(a)] \cite[Theorem VIII.26]{McDonald-FiniteRings} $R \cong M_r(S)$ 
  \item[(b)] \cite[Theorem 6.1.2]{Bini-FCR} $|S| = p^s$
  \item[(c)] \cite[Theorem 6.1.2]{Bini-FCR} $\GF{p^t} \cong S/J$, where $J$ is the radical of $S$
  and $\Div{t}{s}$.
  \end{itemize}
  \label{lem:local_rings}
\end{lemma}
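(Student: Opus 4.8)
The plan is to obtain each of (a), (b), (c) by quoting the indicated structure theorems for finite (and hence semiperfect) rings, and then to supply the two short arguments that are not literal citations: the uniqueness of the pair $(r,S)$ and the divisibility $t \mid s$.

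First I would use the hypothesis that $R$ is semi-local, i.e. that $R/J(R)$ is simple; by the structure theory of finite rings \cite[Theorem VIII.26]{McDonald-FiniteRings} this is equivalent to $R \cong M_r(S)$ for some local ring $S$ and positive integer $r$, and that theorem also supplies the uniqueness (up to isomorphism) of $S$, and of $r$. If a self-contained justification of uniqueness is desired, I would argue that a finite ring is semiperfect, so $1$ is a sum of orthogonal primitive idempotents; the number of summands in any such decomposition is an invariant of $R$ (pinning down $r$), and any two primitive idempotents are conjugate by a unit, so the corner ring $eRe$ — which is a local ring isomorphic to $S$ — does not depend on the primitive idempotent $e$ chosen. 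This settles (a).

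Part (b) is then a cardinality count: $|R| = |M_r(S)| = |S|^{r^2} = p^k$, so $|S|$ is a power of $p$, say $|S| = p^s$ with $s r^2 = k$, as recorded in \cite[Theorem 6.1.2]{Bini-FCR}. For (c), since $S$ is local, $S/J$ is by definition a field, and it is finite of characteristic $p$ (because $|S| = p^s$ makes $(S,+)$ a $p$-group), so $S/J \cong \GF{p^t}$ for some positive integer $t$. To get $t \mid s$, I would filter $S$ by powers of its radical $J$: each layer $J^i/J^{i+1}$ is annihilated by $J$ and so is a vector space over $S/J \cong \GF{p^t}$, hence has order a power of $p^t$; multiplying the orders of the layers $J^0/J^1, J^1/J^2, \dots$ up to the nilpotency index of $J$ gives $|S| = (p^t)^d$, i.e. $s = td$. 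This is the divisibility asserted in \cite[Theorem 6.1.2]{Bini-FCR}.

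The only real subtlety is the uniqueness assertion in (a): everything else is either a direct citation, a size count, or the one-line graded-ring computation above. I expect to lean on \cite{McDonald-FiniteRings} for that uniqueness and merely remark that it follows from the conjugacy of primitive idempotents in a semiperfect ring, rather than develop that theory in the proof.
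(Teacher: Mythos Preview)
The paper gives no proof of this lemma at all: it is stated purely as a compilation of cited results, with the references \cite[Theorem VIII.26]{McDonald-FiniteRings} and \cite[Theorem 6.1.2]{Bini-FCR} embedded directly in the statement, and no \texttt{proof} environment follows. Your proposal is therefore not so much an alternative route as an actual unpacking of what those citations contain. The arguments you outline are correct: the cardinality count for (b) is immediate from $|M_r(S)| = |S|^{r^2}$; the radical filtration $S \supset J \supset J^2 \supset \cdots$ with each layer $J^i/J^{i+1}$ a vector space over $S/J \cong \GF{p^t}$ is the standard way to obtain $t \mid s$ in (c); and your remark on uniqueness in (a) via conjugacy of primitive idempotents in a semiperfect ring is exactly the content behind the McDonald citation. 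One small observation: the reference \cite{Bini-FCR} is a text on finite \emph{commutative} rings, whereas the local ring $S$ here need not be commutative, so your self-contained arguments for (b) and (c) are in fact more honest than the paper's bare citation.
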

%

As an example,
let $p$ be a prime and let $r,s$ be positive integers.
Then $M_r(\Z_{p^s})$ is a semi-local ring,
since $\Z_{p^s}$ is a local ring.
We also remark that in Lemma~\ref{lem:local_rings}, 
if $R$ is itself local, 
then $S \cong R$.

The following lemmas are results on semi-simple rings and the radicals of rings.

\begin{lemma}{\cite[Proposition IV.6, Theorem VIII.4]{McDonald-FiniteRings})}
  Let $R$ be a finite ring with radical $J$.
  Then there exist fields $\F_1,\dots,\F_s$
  and positive integers $r_1,\dots,r_s$ such that
  $$R/J \cong M_{r_1}(\F_1) \times \cdots \times M_{r_s}(\F_s).$$
  \label{lem:semisimple_radical}
\end{lemma}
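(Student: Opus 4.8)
The plan is to reduce the statement to the structure theorem for finite simple rings (Lemma~\ref{lem:simple_rings}) via the standard fact that quotienting a ring by its radical yields a ring with trivial radical. First I would recall, as noted at the start of the Appendix, that the radical $J$ of the finite ring $R$ is a two-sided ideal, so the quotient ring $R/J$ is well defined and is again a finite ring.

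The key step is to verify that $R/J$ is semi-simple, i.e. that its own radical is $\{0\}$. For this I would use the correspondence between left ideals of $R/J$ and left ideals of $R$ containing $J$: this correspondence is inclusion-preserving and carries maximal left ideals to maximal left ideals, so, since the radical is by definition the intersection of the maximal left ideals, one obtains $J(R/J) = \bigl(\bigcap \{\,\mathfrak{m} : \mathfrak{m} \text{ maximal left ideal of } R,\ J \subseteq \mathfrak{m}\,\}\bigr)/J$. But every maximal left ideal of $R$ contains $J$ (again by definition of $J$), so this intersection is just $J$ itself, giving $J(R/J) = J/J = \{0\}$. By the equivalent formulation of ``semi-simple'' recorded in the Appendix, a finite ring with trivial radical is isomorphic to a direct product of simple rings; hence $R/J \cong S_1 \times \cdots \times S_s$ for some finite simple rings $S_1,\dots,S_s$.

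Finally, each $S_i$ is a finite simple ring, so by Lemma~\ref{lem:simple_rings} there exist a finite field $\F_i$ and a positive integer $r_i$ with $S_i \cong M_{r_i}(\F_i)$; substituting these isomorphisms into the direct product gives $R/J \cong M_{r_1}(\F_1) \times \cdots \times M_{r_s}(\F_s)$, as claimed. Equivalently, one may simply combine the two cited facts from \cite{McDonald-FiniteRings} --- that $R/J$ is semi-simple (Theorem VIII.4) and that a finite semi-simple ring decomposes as a product of matrix rings over fields (Proposition IV.6) --- and the lemma is immediate. The only real technical point is the radical computation $J(R/J)=\{0\}$; the rest is a direct appeal to results already available in the paper.
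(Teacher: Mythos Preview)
Your argument is correct: the radical computation $J(R/J)=\{0\}$ via the lattice correspondence for left ideals is the standard step, and once $R/J$ is known to be semi-simple the decomposition into matrix rings over fields follows from Lemma~\ref{lem:simple_rings} exactly as you describe. There is nothing to compare against, however, since the paper does not supply its own proof of this lemma; it is stated with a bare citation to \cite[Proposition IV.6, Theorem VIII.4]{McDonald-FiniteRings}, and your closing remark already identifies precisely those two ingredients.
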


\begin{lemma}
  Let $R$ be a finite ring with radical $J$,
  and suppose 
  $$R/J \cong M_{r_1}(\F_1) \times \cdots \times M_{r_s}(\F_s)$$
  for some fields $\F_1,\dots,\F_s$ and positive integers $r_1,\dots,r_s$.
  If a network is scalar linearly solvable over $R$,
  then it is also scalar linearly solvable over each of the rings
  $M_{r_1}(\F_1),\dots, M_{r_s}(\F_s)$.
  \label{lem:p6_semi1}
\end{lemma}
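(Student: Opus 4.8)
The plan is to chain together two applications of Corollary~\ref{cor:ModHomomorphism} via surjective ring homomorphisms. First I would observe that the radical $J$ is a two-sided ideal of $R$, so by Lemma~\ref{lem:surjective_homomorphism} the quotient map $x \mapsto x + J$ is a surjective ring homomorphism from $R$ onto $R/J$. Since $R/J \cong M_{r_1}(\F_1) \times \cdots \times M_{r_s}(\F_s)$, composing with this isomorphism gives a surjective homomorphism from $R$ onto the direct product $M_{r_1}(\F_1) \times \cdots \times M_{r_s}(\F_s)$. Hence, by Corollary~\ref{cor:ModHomomorphism}, any network that is scalar linearly solvable over $R$ is also scalar linearly solvable over $M_{r_1}(\F_1) \times \cdots \times M_{r_s}(\F_s)$.

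Next I would fix $i \in \{1,\dots,s\}$ and use the projection mapping
$$\pi_i : M_{r_1}(\F_1) \times \cdots \times M_{r_s}(\F_s) \longrightarrow M_{r_i}(\F_i),$$
which sends a tuple to its $i$th coordinate. This $\pi_i$ is a surjective ring homomorphism preserving both identities (this is precisely case (3) of the list following Corollary~\ref{cor:ModHomomorphism}, applied with one factor taken to be $M_{r_i}(\F_i)$ and the other the product of the remaining factors). Applying Corollary~\ref{cor:ModHomomorphism} once more, any network scalar linearly solvable over $M_{r_1}(\F_1) \times \cdots \times M_{r_s}(\F_s)$ is scalar linearly solvable over $M_{r_i}(\F_i)$. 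Combining the two steps, a network scalar linearly solvable over $R$ is scalar linearly solvable over each $M_{r_i}(\F_i)$, as desired.

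There is no real obstacle here; the only minor point to be careful about is that Corollary~\ref{cor:ModHomomorphism} is stated for homomorphisms between finite rings that preserve both additive and multiplicative identities, so I would just note explicitly that the quotient map, the isomorphism, and each projection $\pi_i$ all have these properties, and that $R/J$ and all the matrix rings involved are finite since $R$ is.
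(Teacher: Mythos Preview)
Your proposal is correct and follows essentially the same approach as the paper: use the surjective quotient map $R\to R/J$ from Lemma~\ref{lem:surjective_homomorphism} together with the projection onto each factor $M_{r_i}(\F_i)$, and then invoke Corollary~\ref{cor:ModHomomorphism}. The only cosmetic difference is that the paper first composes the two maps into a single surjective homomorphism $R\to M_{r_i}(\F_i)$ and applies Corollary~\ref{cor:ModHomomorphism} once, whereas you apply it twice in succession.
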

\begin{proof}
  By Lemma~\ref{lem:surjective_homomorphism},
  there exists a surjective homomorphism $\phi: R \to R/J$.
  Let $i \in \{1,\dots,s\}$.
  Then the projection mapping $\psi_{i}: R/J \to M_{r_i}(\F_i)$
  is a surjective homomorphism.
  Hence the composition of mappings 
  $\psi_{i} \circ \phi: R \to M_{r_i}(\F_i)$
  is a surjective homomorphism.
  Thus by Corollary~\ref{cor:ModHomomorphism}, any network with a scalar linear solution over $R$
  has a scalar linear solution over the ring $M_{r_i}(\F_i)$.
\end{proof}

The following is an enumeration of semi-simple rings that we will reference in upcoming proofs.
For each prime $p$, 
it can be verified that the rings given in \eqref{eq:r1}--\eqref{eq:r111111}
are all of the semi-simple rings of sizes $p,p^2,p^3,p^4,p^5,$ or $p^6$ (up to isomorphism).

\begin{align}
  \text{Size } p \;: \; \; 
  &\GF{p} \label{eq:r1}\\
  \notag \\
  \text{Size } p^2: \; \; 
  &\GF{p^2} \label{eq:r2}\\
  &\GF{p} \times \GF{p} \label{eq:r11} \\
  \notag \\
  \text{Size } p^3: \; \; 
  &\GF{p^3} \label{eq:r3} \\
  &\GF{p^2} \times \GF{p} \label{eq:r21} \\
  &\GF{p} \times \GF{p} \times \GF{p} \label{eq:r111} \\
  \notag \\
  \text{Size } p^4: \; \; 
  &M_2(\GF{p}) \label{eq:M4}\\
  &\GF{p^4} \label{eq:r4}\\
  &\GF{p^3} \times \GF{p} \label{eq:r31} \\
  &\GF{p^2} \times \GF{p^2} \label{eq:r22} \\
  &\GF{p^2} \times \GF{p} \times \GF{p} \label{eq:r211} \\
  &\GF{p} \times \GF{p} \times \GF{p} \times \GF{p} \label{eq:r1111} \\
  \notag \\
  \text{Size } p^5: \; \; 
  & \GF{p^5} \label{eq:r5} \\
  &M_2(\GF{p}) \times \GF{p} \label{eq:M41} \\
  &\GF{p^4} \times \GF{p} \label{eq:r41} \\
  &\GF{p^3} \times \GF{p^2} \label{eq:r32} \\
  &\GF{p^3} \times \GF{p} \times \GF{p} \label{eq:r311} \\
  &\GF{p^2} \times \GF{p^2} \times \GF{p} \label{eq:r221} \\
  &\GF{p^2} \times \GF{p} \times \GF{p} \times \GF{p} \label{eq:r2111} \\
  &\GF{p}\times \GF{p} \times \GF{p}\times \GF{p}\times \GF{p} \label{eq:r11111} 
\end{align}
\begin{align}
\text{Size } p^6: \; \; 
  &\GF{p^6} \label{eq:r6} \\
  &\GF{p^5} \times \GF{p} \label{eq:r51} \\
  &M_2(\GF{p}) \times \GF{p^2} \label{eq:M42} \\
  &\GF{p^4} \times \GF{p^2} \label{eq:r42} \\
  &M_2(\GF{p}) \times \GF{p} \times \GF{p} \label{eq:M411} \\
  &\GF{p^4} \times \GF{p} \times \GF{p} \label{eq:r411} \\
  &\GF{p^3} \times \GF{p^3} \label{eq:r33} \\
  &\GF{p^3} \times \GF{p^2} \times \GF{p} \label{eq:r321} \\
  &\GF{p^3} \times \GF{p} \times \GF{p} \times \GF{p} \label{eq:r3111} \\
  &\GF{p^2} \times \GF{p^2} \times \GF{p^2} \label{eq:r222} \\
  &\GF{p^2} \times \GF{p^2} \times \GF{p} \times \GF{p} \label{eq:r2211} \\
  &\GF{p^2} \times \GF{p} \times \GF{p} \times \GF{p} \times \GF{p} \label{eq:r21111} \\
  &\GF{p} \times \GF{p} \times \GF{p} \times \GF{p} \times \GF{p} \times \GF{p} \label{eq:r111111}
\end{align}

We now prove Lemmas~\ref{lem:p4}, \ref{lem:p5}, and \ref{lem:p6}.


\begin{proof}[Proof of Lemma~\ref{lem:p4}]
  Let $R$ be a ring of size $p^4$ with radical $J$, and suppose $\Network$ is scalar linearly solvable over $R$.
  Then $|R/J| \in \{p,p^2,p^3,p^4\}$, 
  so by Lemma~\ref{lem:semisimple_radical}, 
  $R/J$ is isomorphic to one of the rings in \eqref{eq:r1}--\eqref{eq:r1111}.

  If $R/J$ is isomorphic to any of these rings except those in \eqref{eq:r3} and \eqref{eq:M4}, 
  then by Lemma~\ref{lem:p6_semi1},
  $\Network$ is also scalar linearly solvable over at least one of $\GF{p}$, $\GF{p^2}$, or $\GF{p^4}$.
  Since $\GF{p}$ and $\GF{p^2}$ are both subrings of $\GF{p^4}$,
  in these cases, $\Network$ is also scalar linearly solvable over $\GF{p^4}$.
  
  If $R/J$ is isomorphic to the ring in \eqref{eq:M4},
  then by Lemma~\ref{lem:p6_semi1},
  $\Network$ is also scalar linearly solvable over $M_2(\GF{p})$.
  It follows from Lemma~\ref{lem:local_rings}
  that $R/J$ is not isomorphic to the ring in \eqref{eq:r3}.
\end{proof}


\begin{proof}[Proof of Lemma~\ref{lem:p5}]
  Let $R$ be a ring of size $p^5$ with radical $J$, and 
  suppose $\Network$ is scalar linearly solvable over $R$.
  Then $|R/J| \in \{p,p^2,p^3,p^4,p^5\}$,
  so by Lemma~\ref{lem:semisimple_radical},
  $R/J$ must be isomorphic to one of the rings in \eqref{eq:r1}--\eqref{eq:r11111}.

  If $R/J$ is isomorphic to one of the rings in \eqref{eq:M4}--\eqref{eq:r1111}
  (i.e. $|R/J| = p^4$),
  then $|J| = p$.
  Since $(J,+)$ is an $R$-module
  and $\Network$ has a linear solution over the faithful module $\Module{R}{R}$,
  by Lemma~\ref{lem:same_ring}, $\Network$ has a linear solution over $\Module{J}{R}$.
  By Theorem~\ref{thm:min_module},
  this implies $\Network$ has a scalar linear solution over $\GF{p}$.
  %
  Since $\GF{p}$ is a subring of $\GF{p^5}$,
  in these cases, $\Network$ also has a scalar linear solution over $\GF{p^5}$.

  It follows from Lemma~\ref{lem:local_rings}
  that $R/J$ is not isomorphic to either of the rings in
  \eqref{eq:r2} or \eqref{eq:r3}.
  If $R/J$ is isomorphic to the ring in \eqref{eq:r32}, 
  then by Lemma~\ref{lem:p6_semi1},
  $\Network$ is scalar linearly solvable over $\GF{p^3} \times \GF{p^2}$.

  If $R/J$ is isomorphic to any of the remaining cases,
  then by Lemma~\ref{lem:p6_semi1},
  network $\Network$ is scalar linearly solvable over either $\GF{p}$ or $\GF{p^5}$.
  Since $\GF{p}$ is a subring of $\GF{p^5}$,
  in these cases, $\Network$ also has a scalar linear solution over $\GF{p^5}$.
\end{proof}


\begin{proof}[Proof of Lemma~\ref{lem:p6}]
  Let $R$ be a ring of size $p^6$ with radical $J$, and suppose $\Network$ is scalar linearly solvable over $R$.
  Then $|R/J| \in \{p,p^2,p^3,p^4,p^5,p^6\}$,
  so by Lemma~\ref{lem:semisimple_radical},
  $R/J$ must be isomorphic to one of the rings in \eqref{eq:r1}--\eqref{eq:r111111}.
  It follows from Lemma~\ref{lem:local_rings}
  that $R/J$ is not isomorphic to any of the rings in \eqref{eq:M4}, \eqref{eq:r4}, or \eqref{eq:r5}.

  If $R/J$ is isomorphic to any of the remaining cases, 
  then it follows from Lemma~\ref{lem:p6_semi1}
  that $\Network$ is scalar linearly solvable over $\GF{p^n}$
  for some $n \in \{1,2,3,6\}$.
  Since $\Div{n}{6}$, $\GF{p^n}$ is a subring of $\GF{p^6}$,
  which implies 
  $\Network$ is scalar linearly solvable over $\GF{p^6}$.
\end{proof}

\clearpage

\end{document}